\newtheorem{defn}{Definition}
\newtheorem{theorem}[defn]{Theorem}
\newtheorem{lemma}[defn]{Lemma}
\newtheorem{proposition}[defn]{Proposition}
\newenvironment{definition}{\defn \em }
 {\vspace{0.5 em}} 
\newenvironment{remark}{\rem \em}{\vspace{0.5 em}}
\newenvironment{example}{\eg \em}{\vspace{0.5 em}}
\newenvironment{proof}
{\vspace{0.5 em} \noindent {\bf Proof.}}{\QED {\vspace{0.7 em}}}
\def\squareforqed{$\Box$}
\def\QED{\ifmmode\squareforqed\else{\unskip\nobreak\hfil
\penalty50\hskip1em\null\nobreak\hfil\squareforqed
\parfillskip = 0pt\finalhyphendemerits = 0\endgraf}\fi}
\newtheorem{claim}{Claim}
\newcommand {\bitat}[2]{\mbox{\textit{bit-#1-#2}}}
\newcommand{\completeLTL}{\mbox{$\textrm{LTL}[\Diamond^{w},\Diamond^{s},\ttX_{\sim},\ttX_{\nsim}]$}}
\newcommand{\completeLTLstrong}{\mbox{$\textrm{LTL}[\Diamond^{s},\ttX_{\sim},\ttX_{\nsim}]$}}
\newcommand{\diamondLTL}{\mbox{$\textrm{LTL}[\Diamond]$}}
\newcommand{\weakdiamondLTL}{\mbox{$\textrm{LTL}[\Diamond^{w}]$}}
\newcommand{\strongdiamondLTL}{\mbox{$\textrm{LTL}[\Diamond^{s}]$}}
\newcommand{\fD}{\mbox{$\mathfrak{D}$}}
\newcommand{\nn}{\mbox{${\mathbb N}$}}
\newcommand{\ttX}{\mbox{$\texttt{X}$}}
\newcommand{\ttU}{\mbox{$\texttt{U}$}}
\newcommand{\ttR}{\mbox{$\texttt{R}$}}
\newcommand{\ttF}{\mbox{$\texttt{F}$}}
\newcommand{\ttG}{\mbox{$\texttt{G}$}}
\newcommand{\LL}{{\cal L}}
\renewcommand{\AA}{\mbox{$\mathcal{A}$}}
\newcommand{\CC}{\mbox{$\mathcal{C}$}}
\renewcommand{\LL}{\mbox{$\mathcal{L}$}}
\newcommand{\MM}{\mbox{$\mathcal{M}$}}
\renewcommand{\SS}{\mbox{$\mathcal{S}$}}
\newcommand{\ZZ}{\mbox{$\mathcal{Z}$}}
\newcommand{\sCC}{\mbox{\scriptsize $\mathcal{C}$}}
\newcommand{\sSS}{\mbox{\scriptsize $\mathcal{S}$}}
\newcommand{\True}{\mbox{$\textsf{True}$}}
\newcommand{\False}{\mbox{$\textsf{False}$}}
\newcommand{\closure}{\mbox{$\textsf{Cl}$}}
\newcommand{\sclosure}{\mbox{\tiny $\textsf{Cl}$}}
\newcommand{\sfNP}{\mbox{$\textsf{NP}$}}
\newcommand{\sfNEXPTIME}{\mbox{$\textsf{NEXPTime}$}}
\newcommand{\sfEXPTIME}{\mbox{$\textsf{EXPTime}$}}
\newcommand{\sfEXPSPACE}{\mbox{$\textsf{EXPSpace}$}}
\newcommand{\sfPSPACE}{\mbox{$\textsf{PSpace}$}}
\newcommand{\sfProj}{\mbox{$\textsf{Proj}$}}
\newcommand{\sfInf}{\mbox{$\textsf{Inf}$}}
\newcommand{\sfProfile}{{\sf Profile}}
\newcommand{\sfParikh}{\mbox{$\textsf{Parikh}$}}
\newcommand{\sfImage}{\mbox{$\textsf{Image}$}}
\newcommand{\sfprofile}{\mbox{$\textsf{profile}$}}
\newcommand{\tsfImage}{\mbox{\tiny $\textsf{Image}$}}
\newcommand{\sfZonal}{\mbox{\sf Zonal}}
\newcommand{\tsfZonal}{\mbox{\sf {\tiny Zonal}}}
\newcommand{\tscZonal}{\mbox{$\textsc{\tiny zonal}$}}
\newcommand{\tsffin}{\mbox{\tiny $\textsf{fin}$}}
\newcommand{\sfFO}{\mbox{$\textsf{FO}$}}
\newcommand{\sfMSO}{\mbox{$\textsf{MSO}$}}
\newcommand{\sfEMSO}{\mbox{$\exists\textsf{MSO}$}}
\newcommand{\leaveout}[1]{}
\begin{document}

\title{Extending B\"uchi Automata
\\
with Constraints on Data Values\thanks{We acknowledge the financial support by the European FET-Open Project FoX (grant agreement 233599) and the German DFG (grant SCHW 678/4-1).
}}

\author{
Ahmet Kara  
\\
Technical University of Dortmund   
\\
Email: \texttt{ahmet.kara@cs.tu-dortmund.de}
\\
\\
Tony Tan
\\
University of Edinburgh
\\
Email: \texttt{ttan@inf.ed.ac.uk}
}

\date{}

\maketitle

\begin{abstract}
Recently data trees and data words have received
considerable amount of attention
in connection with XML reasoning and system verification.
These are trees or words that,
in addition to labels from a finite alphabet,
carry data values from an infinite alphabet (data).
In general it is rather hard to obtain logics for data words and trees
that are sufficiently expressive, but still have reasonable complexity
for the satisfiability problem.
In this paper we extend and study the notion of B\"uchi automata for $\omega$-words with data.
We prove that the emptiness problem for such extension
is decidable in elementary complexity.
We then apply our result to show the decidability
of two kinds of logics for $\omega$-words with data:
the two-variable fragment of first-order logic
and some extensions of classical linear temporal logic 
for $\omega$-words with data.
\end{abstract}

\section{Introduction}
\label{s: intro}

The classical theory of automata and formal languages deals primarily with
languages over finite alphabets.
A natural extension of formal languages, regular or contex-free,
is one that permits the alphabet to be infinite~\cite{boasson,luc-lics06,luc-pods06-jacm,cheng-kaminski-cfg,kaminski-francez,kaminski-tan,NSV-mfcs}.
Most of the extensions, however, lack the usual nice decidability properties of
automata over finite alphabets, unless strong restrictions are imposed.

Recently the subject of languages over infinite alphabets
received much attention due to its connection with XML reasoning and system specification.
The most natural model for XML documents is label unranked trees,
in which each node has a label from a finite alphabet.
Thus, standard technique in automata theory can be applied~\cite{libkin-unranked-tree,neven-csl,schwentick-jcss}.
However, real XML documents carry {\em data}, which usually come from an infinite set,
and it is essential to reason about those data values.
Thus, there is a need to look for {\em decidable formalism}
in the presence of a second, infinite alphabet. 

A similar scenario may happen in system specification where
$\omega$-words (words of infinite length) are used to describe system behaviors.
In this case a position in the word represents a point in time,
while the label of the position indicates the atomic propositions that hold at that time.
The number of atomic propositions is usually only finitely many,
and thus, can be encoded as finite alphabets.
The most common tool for reasoning with $\omega$-word is arguably B\"uchi automata, 
due to its expressiveness and
the low complexities for its standard decision problems.
For example, it captures the so-called monadic second order (MSO) logic, 
and hence the specification languages such as 
Linear Temporal Logic (LTL) and $\mu$-calculus.
However, the behaviour of many systems includes
properties that cannot be captured by finite alphabets.
A typical example is reasoning about the contents of variables,
that store values from the infinite domains like the integers or strings.
Thus, it is also natural to look for some formalisms
that allow us to reason about $\omega$-words with data values
that come from an infinite domain.

Our focus in this paper is {\em data $\omega$-word}, 
that is, $\omega$-words in which each position also carries a data value from an infinite alphabet.
Looking at the literature~\cite{luc-pods06-jacm,luc-lics06,therien-and-co,fo2-lpar,demri-lfcs07,
demri-lazic-tocl,kaminski-francez,kaminski-tan,neven-csl,NSV-mfcs}
one can immediately notice that decidable formalisms for data $\omega$-words are hard to obtain,
unless strong restrictions are imposed.
Nevertheless, some significant progress have been made 
recently~\cite{luc-lics06,demri-lazic-tocl,demri-lfcs07}.
A deep result in~\cite{luc-lics06} shows that the restriction of 
first-order logic to its two variable fragment, $\sfFO^2$,
remains decidable over data $\omega$-words.
The pioneering works in Linear Temporal Logic for $\omega$-words with data
are the papers~\cite{demri-lazic-tocl,demri-lfcs07}.
In~\cite{demri-lfcs07} an extension of Linear Temporal Logic (LTL) to handle data values
is proposed and its satisfiability problem is shown to be decidable.
In papers~\cite{luc-lics06,demri-lfcs07} the satisfiability problem, even though is decidable,
has unknown upper bound complexity.
The decidability is obtained by reducing the satisfiability problem to 
the reachability problem in Petri nets,
the precise complexity of which has been open for many years,
though it is known to be in $\sfEXPSPACE$-hard.
In the paper~\cite{demri-lazic-tocl} the logic is decidable, but not primitive recursive,
for finite data words, while it becomes undecidable for $\omega$-words.
The paper~\cite{demri-lfcs07} also contains
a logic which is decidable in $\sfPSPACE$. 
However, the logic has quite limited expressive power, in which
the finite alphabet for the labels consists of only one single symbol.

In this paper we propose and study an extension of B\"uchi automata
with a formalism to specify constraints on data values.
Roughly those constraints are database theory inspired,
called {\em key}-, {\em inclusion}- and {\em denial}-constraints.
A key-constraint states that no two positions labeled with
the same symbol $a$ has the same data value;
inclusion-constraint states that every data value found
in a position with label $a$ is found in a position with label $b$;
while denial-constraint states that
the sets of data values found in positions with labels $a$ and $b$
are disjoint.
Those constraints are very common in database theory.
We show that the emptiness problem for such extension is decidable
in $\sfNEXPTIME$,
whereas if there is no key-constraint,
then the complexity drops to $\sfNP$.
We then apply our results to show the decidability
of two kinds of logics for data $\omega$-words:
the two-variable fragment of first-order logic
and some extensions of classical linear temporal logic for data $\omega$-words.
Both have elementary complexity.

The vocabulary for the two-variable logic that we consider
here has only  the {\em successor} relation on the positions in the $\omega$-word
and the {\em data equality},
in addition to the finite number of unary predicates for the finite labeling.
In~\cite{luc-lics06} the vocabulary includes
the {\em order} on the positions in the $\omega$-words
and as mentioned earlier, the satisfiability problem for the two-varible logic
becomes at least as hard as the reachability problem for Petri nets.

Another work that is related to our work is the remarkable result in~\cite{luc-pods06-jacm},
which shows that 
for two-variable fragment of first-order logic over {\em finite unranked data trees},
with vocabulary consists of successor and data equality,
is decidable in 3-$\sfNEXPTIME$.
Another proof with different approach for the restricted case of {\em finite data words}
was later obtained in~\cite{fo2-lpar}.

The paper is organized as follows.
In Section~\ref{s: notations} we define the notations and tools
that we are going to use in this paper. 
In Section~\ref{s: diamond aut} we introduce the extension of B\"uchi automata
by equipping it with data-constraints and we prove that
the emptiness problem is decidable in elementary complexity.
We call this model {\em B\"uchi automata with data-constraints} (ADC).
In Section~\ref{s: diamond aut with profiles}
we further extend ADC with operators for comparing the equality between neighboring data values,
which we call {\em profile B\"uchi automata with data-constraints}.
The emptiness problem for this model is also decidable in elementary complexity.
Then in Section~\ref{s: fo2} we present a decision procedure
for the satisfiability problem  of the two-variable fragment of first-order logic.
Finally in Section~\ref{s: LTL data value}
we introduce a version of Linear Temporal Logic (LTL)
that is equipped with some operators for data value comparisons.
For this also we prove that the satisfiability problem
is decidable in elementary complexity. 

\paragraph*{Acknowledgement}
We thank Claire David, Leonid Libkin and Thomas Schwentick for fruitful discussions.

\section{Notations}
\label{s: notations}

\subsection{Data words}
\label{ss: data words}

Let $\Sigma$ be a finite alphabet
and $\fD$ an infinite set of data values.
A {\em finite} word is an element of $\Sigma^{\ast}$,
while an $\omega$-word is an element of $\Sigma^{\omega}$.
A finite {\em data word} is an element of $(\Sigma \times \fD)^{\ast}$,
while a {\em data $\omega$-word} is an element of $(\Sigma \times \fD)^{\omega}$.

We write a data (finite or $\omega$-) word $w$ as 
${a_1 \choose d_1}{a_2 \choose d_2}\cdots $,
where $a_1,a_2,\ldots \in \Sigma$ and $d_1,d_2,\ldots \in \fD$.
The symbol $a_i$ is the label of position $i$,
while the value $d_i$ is the data value in position $i$.
The projection of $w$ to the alphabet $\Sigma$ is denoted by 
$\sfProj(w) = a_1 a_2 \cdots$.
A position in $w$ is called an $a$-position,
if the label is $a$.
We denote by $V_w(a)$, the set of data values
found in $a$-positions in $w$, i.e., 
$V_w(a) = \{d_i \mid a_i = a\}$, for each $a \in \Sigma$.
Note that some $V_w(a)$'s may be infinite,
while some others finite.

\subsection{Data-constraints: constraints on the data values}
\label{ss: constraint}

There are three kinds of data-constraints over the alphabet $\Sigma$:
\begin{enumerate}
\item
{\em key-constraints}, written in the form:
$V(a) \mapsto a$, where $a \in \Sigma$.
\item
{\em inclusion-constraints}, written in the form:
$V(a) \subseteq \bigcup_{b \in R} V(b)$,
where $a \in \Sigma$, $R \subseteq \Sigma$.
\item
{\em denial-constraints}, written in the form:
$V(a)\cap V(b) = \emptyset$,
where $a,b \in \Sigma$.
\end{enumerate}
Whether a data word $w$ satisfies a data-constraint $C$, written  as $w \models C$,
is defined as follows.
\begin{enumerate}
\item
$w\models V(a)\mapsto a$,
if every two $a$-positions in $w$ have different data values.
\item
$w \models V(a)\subseteq \bigcup_{b\in R} V(b)$,
if $V_w(a) \subseteq \bigcup_{b \in R} V_w(b)$.
\item
$w\models V(a)\cap V(b) = \emptyset$,
if $V_w(a) \cap V_w(b) = \emptyset$.
\end{enumerate}
If $\CC$ is a collection of data-constraints,
then we write $w \models \CC$, if $w\models C$ for all $C \in \CC$.

\subsection{Transition systems and B\"uchi automata}
\label{ss: buchi}

A transition system over the alphabet $\Sigma$ 
is a tuple $\MM = \langle Q, \mu \rangle$,
where $Q$ is a finite set of states and $\mu \subseteq Q \times \Sigma \times Q$
is the set of transitions.

A B\"uchi automaton $\AA$ over the alphabet $\Sigma$ is simply
a transition system $\MM$ with a designated initial state $q_0$
and a set $F\subseteq Q$ of final states.
In such case, we write $\AA = \MM_{q_0}^{F}$ and
the system $\MM$ is called the {\em transition system} of $\AA$.

A run of $\AA$ on an $\omega$-word $w = a_1a_2\cdots$
is a sequence $\rho = p_1 p_2 \cdots$ of states in $Q$
such that $(q_0,a_1,p_1) \in \mu$ and $(p_i,a_{i+1},p_{i+1}) \in \mu$, for each $i=1,2,\ldots$.
Note that we exclude the initial state in the run $\rho$.
This is done for our convenience of indexing.

Let $\sfInf(\rho)$ denote the set of states that appear infinitely many times in $\rho$.
The run $\rho$ is accepting, if $\sfInf(\rho) \cap F \neq \emptyset$.
An $\omega$-word $w \in \LL(\AA)$,
if there exists an accepting run of $\AA$ on $w$.
As usual, $\LL(\AA)$ denotes the set of $\omega$-words accepted by 
the automaton $\AA$.

\subsection{Presburger automata}
\label{ss: tools}

\subsubsection*{Existential Presburger formula}

Atomic Presburger formulae are of the form:
$x_1 + x_2 + \cdots x_n \leq y_1 + \cdots + y_m$, or
$x_1 + \cdots x_n \leq K$, or $x_1 + \cdots x_n \geq K$,
for some constant $K \in \nn$.  {\em Existential Presburger formulae}
are Presburger formulae of the form 
$\exists \bar x \ \phi$, 
where $\phi$ is a Boolean combination of atomic Presburger formulae. 

We will be using Presburger formulae defining Parikh images of words.
Let $\Sigma=\{a_1,\ldots,a_k\}$ be a finite alphabet, and 
let $v \in \Sigma^*$ be a finite word.
We denote by $\#_{v}(a_i)$ the number of occurrences of $a_i$ in $v$. 
By $\sfParikh(v)$ we mean the {\em Parikh image} of $v$, i.e.,
$(\#_v(a_1),\ldots,\#_v(a_k))$.

With alphabet letters $a_1,\ldots,a_k$, we associate variables $x_{a_1},\ldots,x_{a_k}$.
A Presburger formula $\varphi$ with free variables $x_{a_1},\ldots,x_{a_k}$
is said to be a formula over the alphabet $\{a_1,\ldots,a_k\}$.
A word $v \in \Sigma^{\ast}$ satisfies it, 
written as $v \models \varphi(x_{a_1},\ldots,x_{a_k})$ if and only if $\varphi(\sfParikh(v))$
holds. 

\subsubsection*{Presburger automata}

A {\em Presburger automaton} is a pair $(\AA_{\tsffin},\varphi)$, 
where $\AA_{\tsffin}$ is a finite state automaton for {\em finite words} and
$\varphi(x_{a_1},\ldots,x_{a_k})$ is an existential Presburger formula over the alphabet $\Sigma$.  
A word $w$ is accepted by $(\AA_{\tsffin},\varphi)$, denoted by $\LL(\AA_{\tsffin},\varphi)$, if 
$w \in \LL(\AA_{\tsffin})$ and $\varphi(\sfParikh(w))$ holds.

Note that as convention, we will use the symbol $\AA_{\tsffin}$
for finite state automata that works over {\em finite} words.
We reserve the symbol $\AA$ for B\"uchi automata,
which works over $\omega$-words.

As in~\cite{luc-pods06-jacm,fo2-lpar},
the following result is the basis for all
the decidability results in this paper.

\begin{theorem}{\cite{schwentick-icalp04}}
\label{t: emptiness presburger}
The emptiness problem for presburger automata is decidable in $\sfNP$.
\end{theorem}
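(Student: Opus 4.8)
The plan is to reduce the emptiness problem for a Presburger automaton $(\AA_{\tsffin},\varphi)$ to the \emph{satisfiability} of a single existential Presburger formula, and then invoke the classical fact that the latter problem lies in $\sfNP$. Concretely, writing $\AA_{\tsffin}$ with state set $Q$, initial state $q_0$, final states $F$ and transition set $\delta$, the automaton is non-empty iff there is a Parikh vector $\bar n$ that (i) lies in the Parikh image of $\LL(\AA_{\tsffin})$ and (ii) satisfies $\varphi$: given an accepted word $w$ one takes $\bar n=\sfParikh(w)$, and conversely any word $u\in\LL(\AA_{\tsffin})$ with $\sfParikh(u)=\bar n$ witnesses non-emptiness since $\varphi(\sfParikh(u))=\varphi(\bar n)$ holds. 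So if we can produce an existential Presburger formula $\psi(x_{a_1},\dots,x_{a_k})$ defining exactly the Parikh image of $\LL(\AA_{\tsffin})$, then $\LL(\AA_{\tsffin},\varphi)\neq\emptyset$ iff $\exists x_{a_1}\cdots\exists x_{a_k}\,(\psi\wedge\varphi)$ is satisfiable. Provided $\psi$ has size polynomial in $|\AA_{\tsffin}|$, the resulting formula has size polynomial in the input, and $\sfNP$-membership follows from the $\sfNP$ bound for existential Presburger satisfiability (which itself rests on the fact that a satisfiable system of linear integer (in)equalities has a solution of polynomially many bits, so one nondeterministically guesses the truth pattern of the atomic formulae, the corresponding linear system, and a small solution, and verifies in polynomial time).

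The heart of the argument is therefore the \emph{polynomial-time} construction of $\psi$. A direct appeal to Parikh's theorem is not good enough, since the semilinear description it yields can be of exponential size. Instead I would use the "flow plus connectivity" encoding: introduce a variable $y_t$ for each transition $t\in\delta$, intended to count how often $t$ is used along an accepting run, and set $x_a=\sum\{\,y_t : t \text{ is labelled } a\,\}$. The existence of a run with this transition-multiplicity profile is then expressed by two conditions. First, an Euler-type \emph{flow conservation}: at every state the total multiplicity of incoming used transitions equals that of outgoing used transitions, corrected by $+1$ at $q_0$ and $-1$ at the final state that is reached (which one also guesses from $F$). Second, a \emph{connectivity} condition: the set of used transitions (those with $y_t>0$), together with $q_0$, must induce a graph in which every state incident to a used transition is reachable from $q_0$. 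Both conditions are Boolean combinations of atomic Presburger formulae over polynomially many variables.

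Connectivity is the subtle ingredient, because flow conservation alone is satisfied also by "phantom" profiles that describe a path out of $q_0$ together with a disjoint cycle. I would enforce it by guessing a succinct reachability certificate inside the formula: Boolean variables $b_t$ marking a subset of the used transitions that is required to form a tree rooted at $q_0$ and to reach every touched state, together with integer "level" variables $z_q$ for the states, constrained so that $z_q$ strictly increases along every marked edge (which rules out cycles among the marked transitions) and so that $b_t=1$ implies $y_t>0$. This is exactly the style of construction used to show that Parikh images of NFA (indeed of context-free grammars) are definable by polynomial-size existential Presburger formulae, and it is the route taken in the counting-automata literature referenced here.

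I expect the main obstacle to be getting this encoding precisely right while keeping it purely existential and of polynomial size: one must simultaneously rule out the disconnected phantom solutions, allow genuine repeated traversals of cycles (so $y_t$ is an unbounded count, not a Boolean), and make the level/tree bookkeeping interact correctly with the flow equations — in particular being careful about the special roles of $q_0$ and the guessed accepting state, and about words that visit a state zero times. Once $\psi$ is in hand, the remaining steps are routine: existential Presburger formulae are trivially closed under conjunction, and the $\sfNP$ bound for their satisfiability is standard, so the whole procedure — guess the accepting state, guess the truth pattern and the linear system underlying $\psi\wedge\varphi$, guess a polynomially bounded solution, verify — runs in nondeterministic polynomial time.
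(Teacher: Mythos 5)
The paper does not prove this statement at all; it is imported verbatim from \cite{schwentick-icalp04}, and your sketch is essentially the argument used there (and in the related work on Parikh images of automata/grammars): encode the Parikh image of $\LL(\AA_{\tsffin})$ by a polynomial-size existential Presburger formula via transition-multiplicity variables, flow conservation, and a spanning-tree/level-variable connectivity certificate, conjoin with $\varphi$, and appeal to the $\sfNP$ bound for existential Presburger satisfiability. So your proposal is correct and follows the same route as the cited source; no gap to report.
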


\vspace{-0.3 cm}

\section{Automata with data-constraints}
\label{s: diamond aut}

In this section we extend the definition of B\"uchi automata
with data-constraints over the input alphabet $\Sigma$.
We then provide a decision procedure for its emptiness problem,
from which all other decision procedures in this paper are extended.

\begin{definition}
\label{d: diamond aut}
An {\em Automaton with Data-constraints}, or in short ADC,
is a pair $(\AA,\CC)$, where
$\AA$ is a B\"uchi automaton and
$\CC$ is a collection of data-constraints
over the alphabet $\Sigma$.
\end{definition}

\noindent
Let $w = {a_1 \choose d_1} {a_2 \choose d_2}\cdots$ be a data $\omega$-word.
The ADC $(\AA,\CC)$ accepts the data $\omega$-word $w$, if 
$\sfProj(w) \in \AA$ and $w\models \CC$.
We denote by $\LL(\AA,\CC)$ the language that
consists of all the data $\omega$-words accepted 
by the ADC $(\AA,\CC)$.

We consider the following problem.
\begin{center}
\fbox{
\begin{tabular}{ll}
{\sc Problem}: & {\sc Omega-SAT-ADC} \\ \hline
{\sc Input}: & An automaton with data-constraints $(\AA,\CC)$ 
\\
{\sc Question}: & Is there an data $\omega$-word $w \in \LL(\AA,\CC)$?
\end{tabular}
}
\end{center}

\begin{theorem}
\label{t: emptiness diamond aut}
The problem {\sc SAT-ADC} is decidable in $\sfNEXPTIME$.
Moreover, if the collection $\CC$ of data constraints does not contain
key-constraints, then it is decidable in $\sfNP$.
\end{theorem}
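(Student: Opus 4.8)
The plan is to reduce {\sc SAT-ADC} to the emptiness of a single Presburger automaton over a finite alphabet and then to invoke Theorem~\ref{t: emptiness presburger}. The crucial observation is that the three kinds of data-constraints speak only about the sets $V_w(a)$ and, for key-constraints, about how many positions of one label may carry a single value; none of them refers to the actual elements of $\fD$. I will therefore abstract each data value $d$ by its \emph{type} $\mathrm{Lab}(d)=\{a:d\in V_w(a)\}\subseteq\Sigma$ and guess, for every position, both its $\Sigma$-label $a$ and the type $S\ni a$ of the value occurring there, over the extended alphabet $\hat\Sigma=\{(a,S):a\in S\subseteq\Sigma\}$. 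Call a type $S$ \emph{admissible} if it already meets the local part of $\CC$: it contains no pair $\{a,b\}$ forbidden by a denial-constraint, and for every inclusion-constraint $V(a)\subseteq\bigcup_{b\in R}V(b)$ with $a\in S$ one has $S\cap R\neq\emptyset$. Let $\SS$ be the set of admissible types; admissibility is a letter-local test, so it can be pushed into the alphabet.

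I would then build a Presburger automaton $(\AA_{\mathrm{fin}},\varphi)$ with two ingredients. The finite-word automaton $\AA_{\mathrm{fin}}$ runs the transition structure of $\AA$ on the $\Sigma$-projection of $\hat\Sigma$, rejects any letter whose type is not admissible, and accepts exactly the finite annotated words that witness an accepting computation of $\AA$ (so that the projection of the guessed word lies in $\LL(\AA)$); this part involves no data reasoning. The existential Presburger formula $\varphi$ over the counting variables $x_{(a,S)}$ then enforces the two global requirements. The first is \emph{honesty} of the guessed types: whenever some value of type $S$ is used, every label $c\in S$ must really occur with such a value, which without key-constraints is the implication $\big(\sum_{a\in S}x_{(a,S)}>0\big)\rightarrow\bigwedge_{c\in S}x_{(c,S)}>0$. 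The second comes from a key-constraint $V(a)\mapsto a$, which forces all $a$-positions to carry pairwise distinct values; hence for each type $S\ni a$ the number of type-$S$ values equals $x_{(a,S)}$, all key-labels of $S$ must therefore share this common count, and every remaining label of $S$ must occur at least that often with type $S$. All of these are existential Presburger conditions on the $x_{(a,S)}$, and realizability of the guessed counts by genuine data values reduces exactly to their satisfiability; thus $(\AA_{\mathrm{fin}},\varphi)$ is nonempty iff $(\AA,\CC)$ is satisfiable, and Theorem~\ref{t: emptiness presburger} decides nonemptiness in $\sfNP$ measured in the size of $(\AA_{\mathrm{fin}},\varphi)$.

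The complexity bounds follow from how large this instance must be. In general $\hat\Sigma$ and $\varphi$ have size exponential in $|\Sigma|$, so the $\sfNP$ procedure of Theorem~\ref{t: emptiness presburger} runs in nondeterministic exponential time, giving the $\sfNEXPTIME$ bound. When $\CC$ has no key-constraints the exponential alphabet is unnecessary: a single value may be reused for all positions of a type, so at most $|\Sigma|$ distinct admissible types are ever needed, and I would guess such a set $\mathcal{T}$ of types (of polynomial total size), restrict $\hat\Sigma$ to $\{(a,S):a\in S\in\mathcal{T}\}$, and keep $\varphi$ down to the honesty implications; the resulting Presburger automaton is of polynomial size, so Theorem~\ref{t: emptiness presburger} yields the $\sfNP$ bound. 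I expect the main obstacle to be precisely the correctness of the count conditions under key-constraints: I must show they are not only necessary but also \emph{sufficient}, i.e.\ that any Presburger solution can be turned into an honest assignment of pairwise distinct data values meeting every key-, inclusion- and denial-constraint at once, and, for the $\sfNP$ half, that bounding the number of used types by $|\Sigma|$ loses no solutions.
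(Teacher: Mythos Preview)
Your reduction has a genuine gap: you treat the problem as if it were about finite words, but $\AA$ is a B\"uchi automaton and the witness $w$ is a data $\omega$-word. You write that $\AA_{\mathrm{fin}}$ ``accepts exactly the finite annotated words that witness an accepting computation of~$\AA$,'' but an accepting run of a B\"uchi automaton is infinite, and the Parikh image of an $\omega$-word is not a vector of natural numbers: the counts $x_{(a,S)}$ you want to feed to $\varphi$ are in general infinite. Even if you intend to encode an ultimately periodic witness $uv^{\omega}$ and count letters in the finite representation $uv$, the counts in $uv$ do not determine the data-constraints on $uv^{\omega}$. For instance, if a key-constrained label $a$ occurs in the loop $v$, then every iteration must introduce fresh data values, so there are infinitely many values whose type contains $a$; your formula $\varphi$ has no variable to express ``infinitely many,'' and your honesty clause cannot distinguish a type realised by one value from a type realised by unboundedly many.

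The paper's proof hinges on exactly this point. It guesses not only the set of realised types but also, crucially, a partition of them into $\SS_{\tsffin}$ (types $S$ with $[S]_w$ finite) and $\SS_{\infty}$ (types with $[S]_w$ infinite). It then \emph{splits the $\omega$-word} at a guessed state $q$: the finite prefix up to $q$ is handled by a Presburger automaton whose formula counts only the $\SS_{\tsffin}$-types (and enforces your key-count equalities there), while the infinite suffix is handled by a separate B\"uchi automaton which checks that every pair $(a,S)$ with $S\in\SS_{\infty}$ appears infinitely often, and that no key-constrained label occurs with a finite type beyond the prefix. The correctness argument (Claims~\ref{cl: omega to finite} and~\ref{cl: finite to omega}) shows that this decomposition is both sound and complete. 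Your $\sfNP$ idea of bounding the number of used types by $|\Sigma|$ via a function $f\colon\Sigma\to 2^{\Sigma}$ is essentially the paper's Claim~\ref{cl: many empty-sets}, but it too must be combined with the finite/infinite split to handle the $\omega$-suffix.
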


For the proof we first introduce some essential notations
in Subsection~\ref{ss: notation proof one},
then we outline the $\sfNEXPTIME$ algorithm in Subsection~\ref{ss: algorithm diamond aut}.
The $\sfNP$ algorithm can be found in Appendix~\ref{app: s: proof np emptiness diamond aut}.

 Before we start the first proof in this paper,
we want to remark the similarities and differences 
between the technique in this paper and the one in~\cite{luc-lics06}.
 The only similarity is that all techniques rely quite heavily on Presburger counting.
However, there is a different emphasis in the counting process:
in~\cite{luc-pods06-jacm} the technique is to count the number of the so called 
{\em dog} labels and {\em sheep} labels (see pp. 35--36 in~\cite{luc-pods06-jacm}),
where intuitively, the dog labels are used to represent the data values.
In this paper the technique involves counting directly the ``number'' of data values.


\subsection{Some notations for the proof of Theorem~\ref{t: emptiness diamond aut}}
\label{ss: notation proof one}

For a data $\omega$-word $w$ and a non-empty subset $S \subseteq \Sigma$, 
we denote by 
$$
[S]_{w} = \bigcap_{a \in S} V_{w}(a) \cap
\bigcap_{b \notin S} \overline{V_{w}(b)},
$$
where $\overline{V_{w}(b)}$ denotes the complement of $V_w(b)$,
i.e. $\fD - V_{w}(b)$.
It must be noted that the sets $[S]_{w}$'s are disjoint,
and for each $a \in \Sigma$,
$V_{w}(a)$ is partitioned into 
$V_{w}(a) = \bigcup_{a \in S} [S]_{w}$.
These two properties (disjointness and partition) of $[S]_w$'s
are very crucial in our decision procedure.

According to the cardinalities of $[S]_{w}$'s,
we divide the non-empty subsets $S \subseteq \Sigma$ into three classes:
\begin{itemize}
\item
$\SS_{0}(w) = \{S \mid [S]_{w} = \emptyset\}$.
\item
$\SS_{\tsffin}(w) = \{S \mid [S]_{w} \ \mbox{is a finite non-empty set}  \}$.
\item
$\SS_{\infty}(w) = \{S \mid [S]_{w} \ \mbox{is an infinite set}  \}$.
\end{itemize}

\begin{proposition}~\cite[Proposition~1]{fo2-lpar}
\label{p: data-constraint}
For every data $\omega$-word $w$, the following holds.
\begin{enumerate}
\item 
$w\models V(a) \subseteq \bigcup_{b \in R} V(b)$
if and only if
$S \in \SS_0(w)$, for all $S$ such that $a \in S$, but $S \cap R = \emptyset$.
\item
$w\models V(a) \cap V(b) = \emptyset$
if and only if
$S \in \SS_0(w)$, for all $S$ such that $a,b \in S$.
\end{enumerate}
\end{proposition}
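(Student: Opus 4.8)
The plan is to prove both equivalences by directly unwinding the definition of $[S]_w$ together with the two structural facts recorded just above the statement: the sets $[S]_w$ are pairwise disjoint, and $V_w(a) = \bigcup_{a \in S} [S]_w$ for every $a \in \Sigma$. The observation that drives everything is that for a data value $d$ occurring somewhere in $w$, if we put $S_d = \{c \in \Sigma \mid d \in V_w(c)\}$, then $d \in \bigcap_{c \in S_d} V_w(c) \cap \bigcap_{c \notin S_d} \overline{V_w(c)} = [S_d]_w$, and $S_d$ is the unique subset of $\Sigma$ with $d \in [S_d]_w$; in particular $[S_d]_w \neq \emptyset$, so $S_d \notin \SS_0(w)$.

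For part~1, I would first do the ``if'' direction: assume $[S]_w = \emptyset$ for every $S$ with $a \in S$ and $S \cap R = \emptyset$, and take an arbitrary $d \in V_w(a)$. Then $a \in S_d$ and $[S_d]_w \neq \emptyset$, so by the assumption $S_d \cap R \neq \emptyset$, i.e.\ $d \in V_w(b)$ for some $b \in R$; hence $V_w(a) \subseteq \bigcup_{b \in R} V_w(b)$, which is $w \models V(a) \subseteq \bigcup_{b \in R} V(b)$. For the ``only if'' direction, assume $w \models V(a) \subseteq \bigcup_{b \in R} V(b)$ and suppose toward a contradiction that some $S$ with $a \in S$ and $S \cap R = \emptyset$ has $[S]_w \neq \emptyset$; picking $d \in [S]_w$ gives $d \in V_w(a)$ (since $a \in S$), hence $d \in V_w(b)$ for some $b \in R$, but $b \notin S$ (as $S \cap R = \emptyset$) forces $d \in \overline{V_w(b)}$, contradicting $d \in [S]_w$. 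So every such $S$ lies in $\SS_0(w)$.

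Part~2 goes the same way. For ``if'', assume $[S]_w = \emptyset$ whenever $a, b \in S$; if there were $d \in V_w(a) \cap V_w(b)$ then $a, b \in S_d$ while $[S_d]_w \neq \emptyset$, a contradiction, so $V_w(a) \cap V_w(b) = \emptyset$. For ``only if'', assume $w \models V(a) \cap V(b) = \emptyset$ and take any $S$ with $a, b \in S$; any $d \in [S]_w$ would lie in both $V_w(a)$ and $V_w(b)$, so $[S]_w = \emptyset$, i.e.\ $S \in \SS_0(w)$.

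Since every direction of both items reduces to chasing the definition of $[S]_w$ and the disjointness/partition properties, I do not anticipate any genuine obstacle; the only point to watch is lining up the two universal quantifiers correctly — the characterization quantifies universally over the relevant subsets $S$, whereas satisfaction of the constraints quantifies universally over data values $d$ — and the arguments above match them via the bijection $d \mapsto S_d$ between occurring data values and non-empty blocks $[S]_w$.
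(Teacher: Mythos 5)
Your proof is correct and is essentially the paper's argument in expanded form: the paper dismisses part~2 as immediate from the definition of $[S]_w$ and handles part~1 via the identity $V_w(a) \subseteq \bigcup_{b \in R} V_w(b)$ iff $V_w(a) \cap \bigcap_{b \in R} \overline{V_w(b)} = \emptyset$, which is exactly what your pointwise reasoning with $S_d$ establishes. (Only cosmetic quibble: the map $d \mapsto S_d$ is a surjection onto the non-empty blocks, not a bijection, but your argument never uses injectivity.)
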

\begin{proof}
(2) is immediate from the definition of $[S]_w$, while
(1) follows from the fact that
\begin{eqnarray*}
V_w(a) \subseteq \bigcup_{b \in R} V_w(b)
& \mbox{if and only if} &
V_w(a) \cap \bigcap_{b \in R} \overline{V_w(b)} = \emptyset.
\end{eqnarray*}
\end{proof}

\subsection{The algorithm}
\label{ss: algorithm diamond aut}

\noindent
Let $(\AA,\CC)$ be the given ADC and 
$\MM = \langle Q, \mu \rangle$ be the transition system,
where $\AA=\MM_{q_0}^{F}$.
Roughly our algorithm to determine whether $\LL(\AA,\CC) = \emptyset$
is as follows.
\begin{enumerate}
\item
Guess a partition $\SS_0,\SS_{\tsffin},\SS_{\infty}$ of the sets $2^{\Sigma} - \{\emptyset\}$
that respects the following conditions.
\begin{itemize}
\item[(C1)]
If the inclusion-constraint $V(a) \subseteq \bigcup_{b\in R} V(b)$ is in $\CC$,
then all the sets $S$, where $a\in S$ and $S \cap R =\emptyset$, are in $\SS_0$.
\item[(C2)]
If the denial-constraint $V(a)\cap V(b) = \emptyset$ is in $\CC$,
then all the sets $S$, which contains both $a$ and $b$, are in $\SS_0$.
\end{itemize}
The intended meaning of the guesses $\SS_0,\SS_{\tsffin}, \SS_{\infty}$ are
the sets $\SS_0(w)$, $\SS_{\tsffin}(w)$ $\SS_{\infty}(w)$, respectively,
for some $w \in \LL(\AA,\CC)$.
\\
Moreover, Conditions (C1) and (C2) must be respected due to Proposition~\ref{p: data-constraint}.
\item
Construct the following two items,
of which the details are provided below.
\begin{enumerate}
\item[$(a)$]
A new alphabet $\tilde{\Sigma}$,
which depend on the original alphabet $\Sigma$ and the sets in $\SS_{\infty}$
\item[$(b)$]
A transition system $\tilde{\MM} =\langle \tilde{Q}, \tilde{\mu}\rangle$
over the alphabet $\tilde{\Sigma}$,
which depends on the original transition system $\MM$
and the sets in $\SS_{\infty}$.
\end{enumerate}
\item
Non-deterministically choose one state $q \in \tilde{Q}$
and construct the following two items.
\begin{itemize}
\item
a Presburger automaton $(\tilde{\AA}_{\tsffin},\varphi)$, where 
$\tilde{\AA}_{\tsffin} = \tilde{\MM}_{q_0}^{\{q\}}$ and
the formula $\varphi$ depends on the partition $\SS_0,\SS_{\tsffin},\SS_{\infty}$ 
and the constraints in $\CC$;
\item
a B\"uchi automaton $\tilde{\AA}$, which depends on the constraints in $\CC$,
the new transition system $\tilde{\MM}$, and the sets in $\SS_{\infty}$.
\end{itemize}
\item
Test the emptiness of $\LL(\tilde{\AA}_{\tsffin},\varphi)$ and $\LL(\tilde{\AA})$.
\\
Then, $\LL(\AA,\CC)\neq \emptyset$ if and only if $\LL(\tilde{\AA}_{\tsffin},\varphi) \neq \emptyset$
and $\LL(\tilde{\AA})\neq \emptyset$.
\end{enumerate}
In the paragraphs below we will outline the details of Steps~(2) and~(3).
The analysis of the complexity is given in Appendix~\ref{app: s: complexity analysis}.

The proof of the correctness will follow from our claim that
$\LL(\AA,\CC) \neq \emptyset$ if and only if
there exist some ``correct'' guesses for $\SS_0,\SS_{\tsffin},\SS_{\infty}$ in Step~(1)
and the state $q \in \tilde{Q}$ in Step~(3) such that
$\LL(\tilde{\AA}_{\tsffin},\varphi) \neq \emptyset$
and $\LL(\tilde{\AA})\neq \emptyset$.
The details of the proof of the correctness will be given in Appendix~\ref{app: s: proof correctness}.
The main idea of the proof is that 
from a word $u \in\LL(\tilde{\AA}_{\tsffin},\varphi)$,
we can construct a finite data word $w$,
and from an omega word $v \in \LL(\tilde{\AA})$,
we can construct a data $\omega$-word $w'$
such that $ww' \in \LL(\AA,\CC)$.

\subsubsection*{Constructing the alphabet $\tilde{\Sigma}$ and the transition system $\tilde{\MM}$}

We define a set $\Sigma(\SS_{\infty}) = \{(a,S) \mid a \in S \ \mbox{and} \ S \in \SS_{\infty}\}$. 
Then, the new alphabet $\tilde{\Sigma}$ is
$\tilde{\Sigma} = \Sigma \cup \Sigma(\SS_{\infty})$.
The transition system $\tilde{\MM} = \langle \tilde{Q}, \tilde{\mu}\rangle$ 
is defined as $\tilde{Q} = Q$ and 
$\tilde{\mu} = \mu \cup
\{(p,(a,S),q) \mid (p,a,q) \in \mu \ \mbox{and} \ (a,S)\in \Sigma(\SS_{\infty})\}$.

\subsubsection*{Constructing the Presburger automaton $(\tilde{\AA\;}_{\!\!\tsffin},\varphi)$}

Let $q \in \tilde{Q}$ be the state chosen non-deterministically in Step~(3).
The automaton $\tilde{\AA_{\tsffin}}$ is simply $\MM_{q_0}^{\{q\}}$.
The Presburger formula $\varphi$ is defined as follows.
Let $S_1,\ldots,S_{m}$ be the enumeration of 
non-empty subsets of $\Sigma$, where $m = 2^{|\Sigma|}-1$.  

The formula $\varphi$ is of the form 
$\exists z_{S_1} \; \cdots \; \exists z_{S_m}\ \psi$, 
where $\psi$ is the following quantifier-free formula:
$$
\bigwedge_{a \in \Sigma}
x_{a} \geq \sum_{S \ni a} z_S
\quad
\wedge
\quad
\bigwedge_{S \in \sSS_0 \cup \sSS_{\infty}} z_S = 0 
$$
$$
\quad
\wedge
\quad
$$
$$
\bigwedge_{S \in \sSS_{\tsffin}} z_S \geq 1
\quad
\wedge
\quad
\bigwedge_{V(a)\mapsto a \in \sCC} 
x_{a} = \sum_{a \in S} z_S 
$$
Note the constructed formula $\varphi$
does not involve the symbols in $\Sigma(\SS_{\infty})$.

\subsubsection*{Constructing the B\"uchi automaton $\tilde{\AA}$}

The B\"uchi automaton $\tilde{\AA}$ is
simply the intersection of $\tilde{\MM}_{q}^{F}$
with the automaton that checks the following conditions.
\begin{enumerate}
\item
Each $(a,S) \in \Sigma(\SS_{\infty})$
appears infinitely many times.
\item
If the key-constraint $V(a) \mapsto a \in \CC$,
then the symbol $a$ does not appear. 
\end{enumerate}

\section{Automata with data-constraints and profiles}
\label{s: diamond aut with profiles}

Given a data word $w = {a_1 \choose d_1} {a_2 \choose d_2} \cdots$, 
the {\em profile word} of $w$, denoted by $\sfProfile(w)$, 
is the word 
$$
\sfProfile(w) = (a_1,(L_1,R_1)),(a_2,(L_2,R_2)),\ldots
\in (\Sigma \times \{\ast,\top,\bot\}\times \{\ast,\top,\bot\})^{\omega}
$$ 
such that for each position $i=1,2,\ldots$, 
the values of $L_i$ and $R_i$ are either $\top$, or $\bot$, or $\ast$.  
If $L_i=\top$ and $i>1$,
it means that the position on the left, $i-1$, has the same data value
as position $i$; otherwise $L_i=\bot$. 
If $i=1$ (i.e., there is no position on the left), then $L_i=\ast$. 
The meaning of the $R_i$'s is
similar with respect to positions on the right of $i$.

A {\em profile B\"uchi automaton} $\AA$ is a B\"uchi automaton over the
alphabet $\Sigma \times \{\ast,\top,\bot\} \times \{\ast,\top,\bot\}$.
It defines a set $\LL_{data}(\AA)$ of data words as follows: 
$w \in \LL_{data}(\AA)$ if and only if $\AA$ accepts $\sfProfile(w)$ in
the standard sense.

A profile B\"uchi automaton with data-constraints is a tuple $(\AA,\CC)$,
where $\AA$ is a profile B\"uchi automaton 
and $\CC$ is a collection of data-constraints.
It defines a set of data $\omega$-words as follows.
An data $\omega$-word $w$ is accepted by $(\AA,\CC)$ if 
$\sfProfile(w) \in \LL(\AA)$
and $w\models \CC$.

\begin{theorem}
\label{t: emptiness profile diamond aut}
The emptiness problem for profile B\"uchi automata with data-constraints is in 2-$\sfNEXPTIME$.
\end{theorem}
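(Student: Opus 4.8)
The plan is to reduce the emptiness problem for profile B\"uchi automata with data-constraints to that for ordinary ADCs and then invoke Theorem~\ref{t: emptiness diamond aut}; the reduction costs only a single exponential in $|\Sigma|$, so the resulting bound is $\sfNEXPTIME$ of an exponentially large instance, that is, 2-$\sfNEXPTIME$. The starting observation is that in every data $\omega$-word $w$ the positions split into maximal \emph{blocks} of consecutive positions carrying one and the same data value; $\sfProfile(w)$ records exactly this block structure, adjacent blocks necessarily having distinct values; and each data-constraint in $\CC$ speaks only about the sets $V_w(a)$, which, since the value is constant along a block, are determined by the sequence of \emph{label-sets} $\mathrm{lab}(B)=\{a\in\Sigma\mid B\mbox{ contains an }a\mbox{-position}\}$ of the blocks together with the data value attached to each block.

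Accordingly I would proceed in three steps. First, \emph{normalize} $\AA$ by intersecting it with a small B\"uchi automaton that accepts only syntactically well-formed profile words in which, moreover, every block's label-set $T$ is \emph{admissible}: it contains no pair $a,b$ with $V(a)\cap V(b)=\emptyset\in\CC$, and, whenever $V(a)\mapsto a\in\CC$, at most one position of the block carries $a$ (a profile violating this can never be the profile of a data word satisfying $\CC$). This is a polynomial-size regular condition. Second, \emph{contract}: by a product construction whose state component tracks both a state of $\AA$ and the label-set accumulated so far in the block currently being traversed — the one place where an exponential is paid — build a B\"uchi automaton $\AA'$ over the alphabet $\Sigma'$ of admissible label-sets such that $T_1T_2\cdots\in\LL(\AA')$ iff there is a profile word in $\LL(\AA)$ whose $i$-th block has label-set $T_i$; intuitively $\AA'$ simulates $\AA$ while guessing, block by block, the label-set and a valid $\Gamma$-pattern of that block. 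Third, \emph{translate} $\CC$ into a collection $\CC'$ of data-constraints over $\Sigma'$ by lifting each constraint to all relevant label-sets: writing $a\in T$ for membership, the inclusion-constraint $V(a)\subseteq\bigcup_{b\in R}V(b)$ becomes $V(T)\subseteq\bigcup\{V(T')\mid T'\cap R\neq\emptyset\}$ for every $T\ni a$; the denial-constraint $V(a)\cap V(b)=\emptyset$ becomes $V(T)\cap V(T')=\emptyset$ for all $T\ni a$ and $T'\ni b$; and the key-constraint $V(a)\mapsto a$ becomes $V(T)\mapsto T$ for every $T\ni a$ together with $V(T)\cap V(T')=\emptyset$ for all distinct $T,T'\ni a$. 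The pair $(\AA',\CC')$ is an ADC of size exponential in the input, and the claim is that $\LL(\AA',\CC')\neq\emptyset$ iff some data $\omega$-word $w$ satisfies $\sfProfile(w)\in\LL(\AA)$ and $w\models\CC$. Theorem~\ref{t: emptiness diamond aut} then decides emptiness of $(\AA',\CC')$ in $\sfNEXPTIME$ in its size, hence in 2-$\sfNEXPTIME$ in the original input.

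The forward direction of the claim is immediate: contracting any witnessing $w$ and keeping the block values gives a word in $\LL(\AA',\CC')$. The converse is where I expect the real work — and the main obstacle — to lie: from an accepting word $\binom{T_1}{e_1}\binom{T_2}{e_2}\cdots$ of $(\AA',\CC')$ one recovers a profile $\pi\in\LL(\AA)$ with block label-sets $T_1,T_2,\ldots$ and must expand $\pi$ into a genuine data $\omega$-word $w$ — each $T_i$-block becoming a run of equal-valued positions carrying the labels $\pi$ uses there — that still satisfies $\CC$. The difficulty is that a genuine data word additionally requires \emph{adjacent blocks to carry distinct values}, and ``adjacent blocks differ'' is not one of the three kinds of data-constraint, so simply taking the $i$-th block's value to be $e_i$ need not work. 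I would close this gap with a realizability lemma proved in the spirit of the proof of Theorem~\ref{t: emptiness diamond aut}: guess a partition $\SS_0,\SS_{\tsffin},\SS_{\infty}$ of the value-types over $\Sigma$ consistent with $\CC$ (conditions (C1)--(C2)); use a Presburger count on a finite prefix to install all finite types; and on the infinite suffix assign to each block a concrete value of its prescribed type, which by Proposition~\ref{p: data-constraint} makes $\CC$ hold automatically and which can always be chosen distinct from the value of the preceding block, because infinitely many values of each occurring (necessarily infinite) type are available. A bad adjacency inside the finite prefix is repaired either by pumping the prefix so that a further value of the relevant finite type becomes available, or by inserting between the two offending blocks an extra admissible block that $\AA'$ can be made to accept. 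Once this lemma is in place, the normalization automaton, the contraction, the translation of $\CC$ into $\CC'$, and the final complexity accounting are all routine, and the 2-$\sfNEXPTIME$ bound follows.
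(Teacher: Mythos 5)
Your overall route is the same as the paper's: contract each maximal constant-data block (the paper calls these zones) into a single symbol recording its label-set, translate $\CC$ into constraints over the label-set alphabet exactly as in Proposition~\ref{p: zonal disj const}, and feed the exponentially larger ADC to Theorem~\ref{t: emptiness diamond aut}. But the place you yourself flag as ``where the real work lies'' is left with a gap, and your proposed repairs do not work. The requirement that adjacent blocks carry distinct values cannot be expressed by the three constraint types, and your realizability argument rests on the claim that each block in the infinite suffix gets a value of an infinite type; that is false in general, since inclusion- and key-constraints together with the guessed partition may force suffix blocks to draw their values from finite classes, possibly of size one, so two adjacent blocks can be forced onto the same value. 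Your two fallback repairs are unsound: inserting an extra block between the offending ones changes the profile word, and there is no reason the modified profile is still in $\LL(\AA)$; pumping the prefix changes the Parikh image, so the Presburger witness and in particular the exact-count condition coming from key-constraints ($x_a=\sum_{a\in S}z_S$) need not survive. The paper closes exactly this gap with different machinery: it refines the partition into $\SS_0$, $\SS_{\tsffin}^{<\varepsilon}$, $\SS_{\tsffin}^{\geq\varepsilon}$, $\SS_{\infty}$ with $\varepsilon=|\Sigma|+3$, embeds the concrete values of the small finite classes into the alphabet as constants so that the automata themselves enforce that consecutive zone symbols carry different constants (making unsatisfiable adjacency patterns detectable at the automaton level rather than repaired afterwards), and uses the swapping lemma (Lemma~\ref{l: swapping data}) to rearrange the values of the large classes into a locally different word.

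A second, independent gap: your contraction produces an $\omega$-word $T_1T_2\cdots$ over the label-set alphabet, which tacitly assumes the data word has infinitely many blocks. A data $\omega$-word whose data value is eventually constant has finitely many zones, its contraction is a finite word, and your reduction would wrongly report emptiness for languages whose only witnesses are of this shape. The paper's zonal encoding avoids this because $\sfZonal(w)$ keeps the original positions and is always infinite, and its algorithm explicitly guesses whether there are finitely or infinitely many zones, handling the finite case by letting the Presburger automaton absorb all zones and having the tail B\"uchi automaton forbid further zone symbols. You would need an analogous case distinction (and a separate treatment of the final infinite block) before your reduction is correct.
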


We give a sketch of the proof in Subsection~\ref{ss: sketch proof}.
The details can be found in Appendix~\ref{app: s: proof emptiness profile diamond aut}.
Before that we give a slight extension of profile B\"uchi automata with data-constraints,
which we call {\em profile B\"uchi automata with data-constraints on the state alphabet}.
It is a trivial extension, 
but it will be very useful for our presentation in Appendix~\ref{app: s: proof complexity LTL}.

\subsection{Profile B\"uchi automata with data-constraints on the state alphabet}
\label{s: general profile buchi automata}

\begin{definition}
A profile B\"uchi automaton with {\em data-constraints on the state alphabet}
is a pair $(\AA,\CC)$, where
\begin{itemize}
\item
$\AA = \langle Q, q_0, \mu, F\rangle$ is a profile B\"uchi automaton, and
\item
$\CC$ is a collection of data-constraints over the {\em state} alphabet $Q$
(instead of over the input alphabet $\Sigma$ as in Definition~\ref{d: diamond aut}).
\end{itemize}
\end{definition}

\noindent
Let $w = {a_1 \choose d_1} {a_2 \choose d_2}\cdots$ be an data $\omega$-word,
and $\rho = p_1 p_2 \cdots$ be a run of $\AA$ on $\sfProfile(w)$.
The {\em induced data word} of $w$ on $\rho$ is the data word
$\rho(w) = {p_1 \choose d_1} {p_2 \choose d_2} \cdots$.

The automaton $(\AA,\CC)$ accepts the data $\omega$-word $w$, if 
there is an accepting run $\rho$ of $\AA$ on $\sfProj(w)$
such that $\rho(w) \models \CC$.
We denote by $\LL(\AA,\CC)$ the language that
consists of all the data $\omega$-words accepted 
by the automaton $(\AA,\CC)$.

The upper bound in Theorem~\ref{t: emptiness diamond aut} still holds 
for the emptiness problem of this type of automaton.
Indeed given an input $(\AA,\CC)$,
a profile B\"uchi automaton with data-constraints on state alphabet $Q$,
we can reduce it to $(\AA',\CC')$, a profile B\"uchi automaton
with data-constraints over the alphabet $Q \times \Sigma$ as follows.
The automaton $\AA'$ accepts the $\omega$-word (with profiles) 
$(q_1,a_1,\sfprofile_1)(q_2,a_2,\sfprofile_2)\cdots$
if and only if
$q_1 q_2\cdots$ is an accepting run of
the automaton $\AA$ on $(a_1,\sfprofile_1)(a_2,\sfprofile_2)\cdots$.
The automaton $\AA'$ simply checks whether
$(q_i,(a_{i},\sfprofile_i),q_{i+1})$ is a valid transition in $\AA$.
Furthermore, the data-constraints over the alphabet $Q$ can be reduced 
to data-constraints over the alphabet $(Q\times \Sigma)$ as follows.
\begin{enumerate}
\item
The key-constraint $V(q) \mapsto q$
can be reduced to $V(q,a)\mapsto (q,a)$, for each $a\in\Sigma$
and denial-constraints $V(q,a)\cap V(q,b)$, whenever $a\neq b$ and $a,b\in \Sigma$.
\item
The inclusion-constraint $V(q) \subseteq \bigcup_{p\in R} V(p)$
can be reduced to inclusion-constraints $V(q,a) \subseteq \bigcup_{p\in R, b\in \Sigma} V(p,b)$,
for each $a\in \Sigma$.
\item
The denial-constraint $V(q) \cap V(p) = \emptyset$
can be reduced to denial-constraints $V(q,a) \cap V(p,b) = \emptyset$,
for each $a,b \in \Sigma$.
\end{enumerate}

\subsection{Sketch of proof of Theorem~\ref{t: emptiness profile diamond aut}}
\label{ss: sketch proof}

The proof is an extension of the one in the previous section.
However, we need a bit more auxiliary terms.  
Let $w = {a_1 \choose d_1}{a_2 \choose d_2}\cdots $ be an data $\omega$-word over $\Sigma$.  
A {\em zone} is a maximal interval $[i,j]$ with the same data values, i.e.
$d_{i}=d_{i+1}=\cdots = d_j$ and $d_{i-1}\neq d_i$ (if $i > 1$) and
$d_{j} \neq d_{j+1}$ (if $j < n$).  
The zone $[i,j]$ is called an $S$-zone, if
$S$ is the set of labels occuring in the zone.

The {\em zonal partition} of $w$ is a sequence $(k_1,k_2,\ldots)$, where
$1\leq k_1 < k_2 < \cdots$ such that
$[1,k_1],[k_1+1,k_2],\ldots$ are the zones in $w$.  
Let the zone $[1,k_1]$ be an $S_1$-zone, $[k_1+1,k_2]$ an $S_2$-zone,
$[k_2+1..k_3]$ an $S_3$-zone, and so on.  
The {\em zonal word} of $w$ is a data word over 
$\Sigma \cup 2^{\Sigma}$ defined as follows.
$$
\sfZonal(w) = 
{S_1 \choose d_{k_1}} a_1 \cdots a_{k_1}  
{S_2 \choose d_{k_2}} a_{k_1+1} \cdots a_{k_2}
\cdots
$$ 
That is, the zonal word of a data word is a word in which each zone is
preceded by a label $S \in 2^{\Sigma}$, if the zone is an $S$-zone.

Moreover, it is sufficient to assume that only the positions labeled
with symbols from $2^{\Sigma}$ carry data values, i.e., the data values of
their respective zones.  
Obviously each two consecutive zones
have different data values, thus,
two consecutive positions (in $\sfZonal(w)$) labeled with
symbols from $2^{\Sigma}$ also have different data values.

Furthermore, if $w$ is a data $\omega$-word over $\Sigma$, then for each $a\in
\Sigma$, $V_w(a) = \bigcup_{a \in S} V_{\tsfZonal(w)}(S)$.
Proposition~\ref{p: zonal disj const} below shows that 
data-constraints for data words over the alphabet $\Sigma$ can be converted into 
data-constraints for the zonal data words over the alphabet $\Sigma \cup 2^{\Sigma}$.
\begin{proposition}
\label{p: zonal disj const}
For every data word $w$ over $\Sigma$,
the following holds.
\begin{itemize}
\item
A data $\omega$-word $w$ satisfies a key-constraint $V(a) \mapsto a$
if and only if its zonal data word $\sfZonal(w)$ satisfies the following constraints.
\begin{itemize}
\item[K1.]
The key-constraints $V(R) \mapsto R$,
for each $R$ such that $a \in R$.
\item[K2.]
The denial-constraints $V(R) \cap V(R') \neq \emptyset$,
for each $R,R'$ such that $a \in R,R'$ and $R\neq R'$.
\item[K3.]
The symbol $a$ occurs at most once in every zone in $\sfZonal(w)$.
\\
(By a zone in $\sfZonal(w)$, we mean a maximal interval
in which every positions are labeled with symbols from $\Sigma$.)
\end{itemize}
\item
A data $\omega$-word $w$ satisfies an inclusion-constraint $V(a) \mapsto \bigcup_{b\in S} V(b)$
if and only if its zonal data word $\sfZonal(w)$ satisfies the following 
inclusion-constraints:
$$
V(R) \subseteq \bigcup_{S'\cap S \neq \emptyset} V(S')
$$
for each $R$ such that $a \in R$.
\item
A data $\omega$-word $w$ satisfies a denial-constraint $V(a) \cap V(b) = \emptyset$
if and only if its zonal data word $\sfZonal(w)$ satisfies the following 
denial-constraints:
$$
V(R) \cap V(R')
$$
for each $R$ and $R'$ such that $a \in R$ and $b\in R'$.
\end{itemize}
\end{proposition}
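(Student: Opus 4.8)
\noindent
The plan is to derive all three equivalences from the structural facts recorded just above the proposition: every zone of $w$ carries a single data value, two consecutive zones carry distinct data values, and hence $V_w(a) = \bigcup_{a \in S} V_{\tsfZonal(w)}(S)$ for every $a \in \Sigma$, where the data values of $\sfZonal(w)$ are exactly those sitting on its $2^{\Sigma}$-labeled positions, and each such position carries the data value of its zone. With this decomposition in hand, the inclusion- and denial-constraint cases reduce to distributing unions and intersections, while the key-constraint case needs a short case analysis on pairs of positions. Throughout, I read the displayed expressions $V(R) \cap V(R')$ (in item K2 and in the last bullet) as the denial-constraints $V(R) \cap V(R') = \emptyset$ in the sense of Subsection~\ref{ss: constraint}.

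For the inclusion-constraint I would simply compute, using the decomposition above, that $\bigcup_{b \in S} V_w(b) = \bigcup_{b \in S}\bigcup_{b \in S'} V_{\tsfZonal(w)}(S') = \bigcup_{S' \cap S \neq \emptyset} V_{\tsfZonal(w)}(S')$. Hence $V_w(a) \subseteq \bigcup_{b \in S} V_w(b)$ holds iff $\bigcup_{a \in R} V_{\tsfZonal(w)}(R) \subseteq \bigcup_{S' \cap S \neq \emptyset} V_{\tsfZonal(w)}(S')$, and since a union is contained in a set iff each member is, this holds iff $V_{\tsfZonal(w)}(R) \subseteq \bigcup_{S' \cap S \neq \emptyset} V_{\tsfZonal(w)}(S')$ for every $R$ with $a \in R$, which is exactly the claimed family of inclusion-constraints on $\sfZonal(w)$. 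The denial-constraint is analogous: distributing intersection over union gives $V_w(a) \cap V_w(b) = \bigcup_{a \in R,\ b \in R'}\bigl(V_{\tsfZonal(w)}(R) \cap V_{\tsfZonal(w)}(R')\bigr)$, so $V_w(a) \cap V_w(b) = \emptyset$ iff every term vanishes, i.e. iff $\sfZonal(w)$ satisfies $V(R) \cap V(R') = \emptyset$ for all $R \ni a$ and $R' \ni b$; note that here one does not need the sets $V_{\tsfZonal(w)}(S)$ to be pairwise disjoint, and the degenerate case $R = R'$ forces $V_{\tsfZonal(w)}(R) = \emptyset$, reflecting that $a$ and $b$ can never occur together in a single zone.

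The key-constraint is the case that requires the most care, and I expect it to be the main obstacle. I would argue directly about two distinct $a$-positions $p, p'$ of $w$. Either $p$ and $p'$ lie in the same zone, in which case they necessarily carry the same data value, so $w \models V(a) \mapsto a$ forces each zone to contain at most one $a$-position --- precisely K3; or $p$ and $p'$ lie in distinct zones with label sets $R$ and $R'$, both containing $a$, and distinctness of their data values is precisely the key-constraint $V(R) \mapsto R$ when $R = R'$ (item K1) and the denial-constraint $V(R) \cap V(R') = \emptyset$ when $R \neq R'$ (item K2). For the converse, assuming K1--K3: by K3 any two distinct $a$-positions of $w$ must lie in distinct zones, and then K1 (if their zone-label sets coincide) or K2 (otherwise) guarantees that these two zones carry distinct data values, hence $w \models V(a) \mapsto a$. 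Combining the three cases gives all the stated equivalences; the only subtlety worth spelling out in the write-up is that an $S$-labeled position of $\sfZonal(w)$ and an ordinary $a$-labeled position inside the same zone share a data value, which is what ties the key-constraint on $\Sigma$ to the constraints K1--K3 over $\Sigma \cup 2^{\Sigma}$.
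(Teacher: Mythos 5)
Your proof is correct and follows essentially the same route as the paper, which simply cites the decomposition $V_w(a) = \bigcup_{a \in S} V_{\tsfZonal(w)}(S)$ and calls the rest straightforward. You supply the details the paper omits --- the union/intersection distribution for the inclusion- and denial-constraints and the same-zone versus different-zone case analysis for the key-constraint (where one also needs that all positions of a zone share the zone's data value, not just the set identity) --- and your reading of the typos in K2 and the last bullet as denial-constraints $V(R)\cap V(R')=\emptyset$ matches the intended statement.
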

\begin{proof}
The proof is straightforward due to the fact that
$$
V_w(a) = \bigcup_{a \in S} V_{\tsfZonal(w)}(S).
$$
\end{proof}

Now, given a profile automaton $\AA$ over the alphabet $\Sigma$,
we can construct in exponential time an automaton $\AA^{\tscZonal}$ such that
for all data $\omega$-word $w$,
$$
\sfProfile(w) \in \LL(\AA)
\ \mbox{if and only if} \
\sfProj(\sfZonal(w)) \in \LL(\AA^{\tscZonal}).
$$
Such an automaton $\AA^{\tscZonal}$ is called a {\em zonal automaton} of
$\AA$.  Moreover, if the key-constraint $V(a) \mapsto a \in \CC$, 
we can impose the condition $K3$ in Proposition~\ref{p: zonal disj const}
inside the automaton $\AA^{\tscZonal}$.
This, together with Proposition~\ref{p: zonal disj const}, implies that the
emptiness problem of profile B\"uchi automata with data-constraints 
can be reduced to an instance of the following problem.
\begin{center}
\fbox{
\begin{tabular}{ll}
{\sc Problem}: & {\sc Omega-SAT-zonal-automata} \\ \hline
{\sc Input}: & $\bullet$ a zonal automaton $\AA^{\tscZonal}$ 
			\\
			& $\bullet$ a collection $\CC^{\tscZonal}$ 
			 of data-constraints over the alphabet $2^{\Sigma}$
				\\
{\sc Question}: & is there a zonal word $w$ such that 
			\\
			&  $\bullet$ $\sfProj(w)\in\LL(\AA^{\tscZonal})$ and $w\models\CC^{\tscZonal}$ and
         \\
         & $\bullet$ in which two consecutive positions labeled with 
         \\
         & $\qquad$symbols from $2^{\Sigma}$ have different data values?
\end{tabular}
}
\end{center}

The algorithm in Subsection~\ref{ss: algorithm diamond aut}
can be adapted to solve {\sc omega-SAT-zonal-automata}.
Extra cares are needed for the following two issues:
(1) that each two consecutive zones must be assigned different data values,
and (2) the possibility that the given zonal automaton 
accepts only $\omega$-words with finitely many zones.
We refer the reader to Appendix~\ref{app: s: proof emptiness profile diamond aut}
for the details.

\section{Two-variable logic for data $\omega$-words}
\label{s: fo2}

For the purpose of logical definability, we view data $\omega$-words as structures
\begin{equation}
\label{data-tree-eq}
w\ = \langle \nn, +1, \{a(\cdot)\}_{a\in\Sigma}, \sim \rangle,
\end{equation} 
where $\nn$ is the natural numbers $\{1,2,\ldots\}$ 
which indicates the positions, $+1$ is the successor
relation (i.e., $+1(i,j)$ iff $i+1=j$), the $a(\cdot)$'s are the
labeling predicates, and $i \sim j$ holds iff positions  $i$ and $j$
have the same data value. 

We let $\sfFO$ stand for first-order logic, 
$\sfMSO$ for monadic second-order logic 
(which extends $\sfFO$ with quantification over sets of positions),
and $\sfEMSO$ for existential monadic second order logic, i.e., 
sentences of the form $\exists X_1 \ldots \exists X_m\ \psi$, 
where $\psi$ is an $\sfFO$ formula over the vocabulary extended 
with the unary predicates $X_1,\ldots,X_m$. 
We let $\sfFO^2$ stand for $\sfFO$ with two variables,
i.e., the set of $\sfFO$ formulae that only use two variables $x$ and $y$.
The set of all sentences of the form 
$\exists X_1 \ldots \exists X_m\ \psi$, 
where $\psi$ is an $\sfFO^2$ formula is denoted by $\sfEMSO^2$. 

To emphasize that we are talking about a logic over data words we write
$(+1,\sim)$ after the logic: e.g., $\sfFO^2(+1,\sim)$ and $\sfEMSO^2(+1,\sim)$. 
Note that $\sfEMSO^2(+1)$ is equivalent in expressive power to $\sfMSO$ over the
usual (not data) finite words, i.e., 
it defines precisely the regular languages~\cite{thomas-handbook}.


It was shown in \cite{luc-lics06} that $\sfEMSO^2(+1,<,\sim)$ is
decidable over data words. In terms of complexity, the satisfiability of
this logic is shown to be at least as hard as reachability in Petri
nets. Without the $+1$ relation, the complexity drops to $\sfNEXPTIME$-complete; 
however, without $+1$ the logic is not sufficiently
expressive to capture regular relations on the data-free part of the finite word.

In this section we will prove the following:

\begin{theorem}
\label{main-thm}
The satisfiability problem is decidable for $\sfEMSO^{2}(+1,\sim)$ 
over data $\omega$-words. 
Moreover, the complexity of the decision procedure is elementary.
\end{theorem}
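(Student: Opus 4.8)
The plan is to reduce the satisfiability problem for $\sfEMSO^2(+1,\sim)$ over data $\omega$-words to the emptiness problem for profile B\"uchi automata with data-constraints, which by Theorem~\ref{t: emptiness profile diamond aut} is decidable in $2\text{-}\sfNEXPTIME$. Since the reduction will incur at most one more exponential blow-up, the resulting decision procedure is elementary, as required.

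\textbf{Step 1: Normal form for $\sfFO^2$.} First I would invoke the Scott-style normal form for two-variable logic: every $\sfFO^2(+1,\sim)$ sentence is equisatisfiable (over the class of data $\omega$-words) with a conjunction of the shape $\forall x\forall y\,\chi(x,y)\wedge\bigwedge_i\forall x\exists y\,\chi_i(x,y)$, where each $\chi,\chi_i$ is quantifier-free over the vocabulary $(+1,\sim)$ possibly augmented with fresh unary predicates; the existential $\sfEMSO^2$ prefix just contributes more unary predicates and so is absorbed for free. The quantifier-free matrices talk only about $x=y$, $+1(x,y)$, $+1(y,x)$, $x\sim y$, and the labels of $x$ and $y$. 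This is the standard reduction (used in~\cite{luc-lics06} and~\cite{fo2-lpar}); I would state it and cite it rather than reprove it.

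\textbf{Step 2: Separating local from data-equality behaviour.} The key observation is that once we move to the \emph{profile} enrichment of the word --- where each position additionally records whether its left neighbour and right neighbour carry the same data value (the $\{\ast,\top,\bot\}^2$ component of $\sfProfile(w)$) --- all atomic facts of the form ``$x\sim y$ with $+1(x,y)$'' become local and hence expressible by a B\"uchi automaton reading $\sfProfile(w)$. So the remaining content of the $\sfFO^2$ formula that genuinely concerns data is about pairs $x,y$ that are \emph{not} neighbours but satisfy $x\sim y$, i.e. statements about which labels may or must share data values globally. The universal conjunct, restricted to non-neighbour pairs, says things like ``no $a$-position and $b$-position share a data value'' (a denial-constraint $V(a)\cap V(b)=\emptyset$) and, crucially after the usual trick of marking each position with the \emph{set} of labels among $\{a : \exists\,\text{position with label }a\text{ and the same data value}\}$, it also yields key-constraints and inclusion-constraints over a suitably refined alphabet. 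The $\forall\exists$ conjuncts, restricted to the case where the witness $y$ must be a non-neighbour with $x\sim y$, become inclusion-constraints: every data value appearing in a position of one type must reappear in a position of another type. Pairs where the witness is a neighbour, or where $x\not\sim y$, are handled by the profile automaton directly (the ``$x\not\sim y$'' witness requirement is a genuine automaton condition on $\omega$-words, not a data-constraint, but it is a regular condition on the profile-enriched word once we know the set of data values is partitioned appropriately).

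\textbf{Step 3: Assembling the automaton and discharging the obstacle.} Putting this together, I would build, from the normal-form sentence, a profile B\"uchi automaton $\AA$ over an alphabet $\Sigma'$ that is $\Sigma$ enriched with polynomially-many (in the formula) bits of bookkeeping --- in particular, for each label a bit recording whether the current position's data value is "shared with an $a$-position somewhere", turning global data requirements into local alphabet information --- together with a collection $\CC$ of key-, inclusion- and denial-constraints over $\Sigma'$ capturing exactly the non-local data content; then $w\models\varphi$ iff some profile-enrichment of $w$ is accepted by $(\AA,\CC)$. The hard part, and the place where care is genuinely needed, is \textbf{verifying that the bookkeeping bits are globally consistent}: a bit asserting ``my data value is shared with an $a$-position'' is an existential, non-local claim, and nothing in a single pass of a B\"uchi automaton forces it to be truthful unless we translate it faithfully into an inclusion-constraint (``the set of data values carried by positions flagged this way is included in $V(a)$'') together with the converse direction forcing every $a$-position's data value to be so flagged --- the interplay of these two inclusion-constraints with the key-constraints, and making sure the guessed flags cannot be vacuously or inconsistently set, is exactly the subtlety that Proposition~\ref{p: data-constraint} was set up to handle, and this is where most of the work of the proof lives. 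Once that translation is shown correct, decidability and the elementary bound follow immediately from Theorem~\ref{t: emptiness profile diamond aut}, and I would relegate the detailed automaton construction and correctness proof to an appendix.
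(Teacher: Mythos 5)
Your overall route coincides with the paper's: bring the sentence to a normal form, encode the data-sensitive conjuncts as key-, inclusion- and denial-constraints over a refined alphabet, encode the local/data-blind part as a profile B\"uchi automaton, and conclude via Theorem~\ref{t: emptiness profile diamond aut}. The gap sits in the middle step. The paper does not start from the bare Scott normal form: it imports from \cite{luc-pods06-jacm} the much stronger \emph{data} normal form, in which the only data-dependent conjuncts already have the two pure shapes $\forall x\forall y\,(\alpha(x)\wedge\alpha(y)\wedge x\sim y\to x=y)$ and $\forall x\exists y\,(\alpha(x)\to x\sim y\wedge\alpha'(y))$; these translate verbatim into key/denial- and inclusion-constraints over the alphabet $\Sigma\times 2^k$, and the remaining conjunct is an $\sfFO^2(+1)$ sentence over profiles. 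Getting from the Scott form to these shapes is precisely the combinatorial heart of that cited result, and your Step~2 glosses over the cases where it is hard. A $\forall\forall$ conjunct with $x\sim y$ only forbids \emph{non-adjacent} same-value pairs (adjacent ones remain allowed), so it is not a key-constraint on any relabelling; the standard treatment needs the class/zone analysis with auxiliary predicates (each relevant class has at most one $\alpha$-zone, enforced by a key-type constraint on designated zone representatives, plus a regular condition inside zones, read off the profiles). Dually, a $\forall\exists$ conjunct whose witness must be non-adjacent and satisfy $x\nsim y$ is \emph{not} ``a genuine automaton condition on the profile-enriched word'': no B\"uchi automaton reading profiles can certify the existence of a far-away position carrying a \emph{different} data value, and the usual argument again needs auxiliary class-marking predicates together with the observation that two $\beta$-positions lying in distinct classes already witness the requirement for all but a bounded number of positions.

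Your bookkeeping-bit device (flag ``my value occurs at an $a$-position'' and legitimize the guess by an inclusion-constraint in one direction and a denial-constraint for the converse) is sound, and it is indeed how such guesses are made truthful; but it does not by itself dispose of the two cases above, so the place you identify as ``where most of the work lives'' is not where the real difficulty is. Also, ``polynomially-many'' bookkeeping bits is optimistic: the paper's normal form already introduces exponentially many unary predicates (harmless for the elementary bound claimed in the theorem, but it should be stated). In short: right reduction target and right final theorem, but the normal-form step must either be invoked in its strong data-normal-form version, as the paper does by citing \cite{luc-pods06-jacm}, or actually proved; as sketched, your derivation fails on the $x\nsim y$ witnesses and on the non-adjacent $x\sim y$ universal conjuncts.
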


\subsection{A normal form for $\sfEMSO^2(+1,\sim)$} \label{sec:normal}

Decidability proofs for two-variable logics typically follow this pattern: 
a syntactic normal form is established; to be followed by
a combinatorial proof, where decidability is proved for
that normal form (by establishing the finite-model property, or by
automata techniques, for example). 

Our proof is not different that it starts by establishing a normal form for $\sfFO^2(+1,\sim)$,
and then prove the decidability for the normal form.
In fact, our normal form follows closely the one given in~\cite{luc-pods06-jacm}
for unranked finite data trees.
It can simply be adapted it to the case of $\omega$-words.
It easily follows from \cite{luc-pods06-jacm} that every $\sfEMSO^2(+1,\sim)$ sentence over data $\omega$-words is equivalent to a sentence 
$$
\exists X_1 \ldots \exists X_k (\chi \wedge \bigwedge_i \phi_i
\wedge \bigwedge_j \psi_j)
$$
where 
\begin{enumerate}
\item 
$\chi$ is an $\sfFO^2(+1)$ sentence over the extended alphabet 
 $\Sigma \times \{\ast,\top,\bot\} \times \{\ast,\top,\bot\}$
(and it can be converted to a profile B\"uchi automaton in elementary complexity);
\item 
each $\phi_i$ is of the form 
$\forall x\forall y (\alpha(x) \wedge \alpha(y) \wedge x\sim y \to x=y)$, 
where $\alpha$ is a conjunction of labeling predicates, $X_k$'s, and their negations; 
and 
\item 
each $\psi_j$ is of the form 
$\forall x\exists y\ \alpha(x) \to (x\sim y \wedge \alpha'(y))$, 
with $\alpha$, $\alpha'$ as in item 2. 
\end{enumerate}
The number of the unary predicates $X$'s is single exponential
in the size of the original input sentence.

If we extend the alphabet to $\Sigma\times 2^k$ so that each label also
specifies the family of the $X_i$'s the node belongs to, then sentences
in items 2 and 3 can be encoded by data-constraints: 
formulae in item 2 become key- and denial-constraints, 
and formulae in item 3 become inclusion-constraints.
Sentence (1) simply becomes an $\sfFO^2(+1)$ sentence over the alphabet $\Sigma \times 2^k$.

Indeed, consider, for example, the sentence $\forall x\forall y
(\alpha(x) \wedge \alpha(y) \wedge x\sim y \to x=y)$. 
Let $\Sigma'$ be the set of all symbols  
$(a,\bar b)\in\Sigma\times 2^k$ consistent with $\alpha$. 
That is, $a$ is the labeling symbol used in $\alpha$ 
(if $\alpha$ uses one) or an arbitrary letter (if $\alpha$ does not use a labeling predicate), 
and the Boolean vector $\bar b$ has $1$ in positions of the $X_i$'s 
used positively in $\alpha$ and 
$0$ in positions of $X_j$'s used negatively in $\alpha$. 
Then the original sentence is
equivalent to the key-constraints: $V(a)\mapsto a$, for each $a \in \Sigma'$
and denial-constraints: $V(a)\cap V(b)=\emptyset$, for every $a,b \in \Sigma'$ and $a\neq b$. 
The transformation of item 3 sentences into inclusion-constraints is the same. 

Hence, the satisfiability problem of $\sfEMSO^2(+1,\sim)$
can be reduced to the emptiness problem of profile B\"uchi automata
with data-constraints,
whose elementary complexity has been established in the previous section.

\section{LTL that handles data values}
\label{s: LTL data value}

In this section we extend the standard LTL
with the operators $\Diamond^{w}$, $\Diamond^{s}$, $\ttX_{\sim}$, $\ttX_{\nsim}$
to handle comparison between data values,
which we denoted by $\completeLTL$.

Let $\Sigma$ be a finite alphabet.
Formally, the logic $\completeLTL$ is defined as follows.
\begin{itemize}
\item
Both $\True$ and $\False$ are  $\completeLTL$ formulae.
\item
For each $a \in \Sigma$,
$a$ is a $\completeLTL$ formula.
\item
If $\varphi$ and $\psi$ are $\completeLTL$ formulae,
then so are
$$
\begin{array}{cccccccccccc}
\neg \varphi & ; &
\varphi \vee \psi & ; &
\varphi \wedge \psi & ; &
\ttX \;\varphi & ; &
\varphi \;\ttU\; \psi & ; &
\varphi \;\ttR\; \psi 
\end{array}
$$
\item
If $\varphi$ is a $\completeLTL$ formula,
then so are 
$$
\begin{array}{ccccccc}
\Diamond^{w} \; \varphi & ; &
\Diamond^{s} \; \varphi & ; &
\ttX_{\sim} \;\varphi & ; &
\ttX_{\nsim} \; \varphi 
\end{array}
$$
\end{itemize}
The operators $\ttX,\ttU,\ttR$
stand for neXt, Until and Release, respectively.
We write $\ttF \varphi$ as abbreviation for $\True \ttU \varphi$
and $\ttG \varphi$ for $\neg \ttF (\neg \varphi)$.
The operators $\Diamond^w \varphi$, $\Diamond^s\varphi$ are to check the existence 
of a data value in the position where the formula $\varphi$ holds.

We will not give the formal semantics of $\completeLTL$ here,
which can be found in Appendix~\ref{app: s: semantics complete LTL}.
Instead we give only the intuitive meanings of the operators 
$\Diamond^{w}$, $\Diamond^{s}$, $\ttX_{\sim}$ and $\ttX_{\nsim}$,
which are as follows.
\begin{itemize}
\item
The formula $\ttX_{\sim}$ holds in position $i$,
if it has the same data value as the next position $i+1$.
\item
The formula $\ttX_{\nsim}$ holds in position $i$,
if it has different data value as the next position $i+1$.
\item
The formula $\Diamond^{w} \varphi$ holds in position $i$,
if there exists a position $j$ that
has the same data value as position $i$ and in which the formula $\varphi$ holds.
\item
The formula $\Diamond^{s} \varphi$ holds in position $i$,
if there exists a position $j \neq i$ that
has the same data value as position $i$ and in which the formula $\varphi$ holds.
\end{itemize}
For an data $\omega$-word $w$ and a formula $\varphi \in \completeLTL$, 
we write $w,i \models \varphi$ to denote that 
in position $i$ the formula $\varphi$ holds.
As usual, for a formula $\varphi \in \completeLTL$,
we denote by $\LL(\varphi)$ the set of words $w$ for which $w,1 \models \varphi$.

Notice the subtle difference between $\Diamond^w$ and $\Diamond^s$,
with $w$ stands for ``weak'' and $s$ for ``strong,'' respectively.
With $\Diamond^w$ it is not necessary that the position $j$
is different from the current position,
while with $\Diamond^s$ the position $j$ must be different.
Obviously, $\Diamond^w$ is weaker than $\Diamond^s$,
as $\Diamond^w \varphi$ can be expressed as $\varphi \vee \Diamond^s \varphi$,
hence the name ``weak'' and ``strong.''
In fact there exists a language expressible in $\strongdiamondLTL$,
but not in $\weakdiamondLTL$.

At the first glance, it may appear that $\Diamond^w$ is too weak to capture any interesting property.
But as we will see later that the satisfiability problem even for $\weakdiamondLTL$
is already $\sfNEXPTIME$-complete.

We will denote by $\weakdiamondLTL$ and $\strongdiamondLTL$
the class of formulae that uses only $\Diamond^{w}$ and $\Diamond^{s}$, respectively,
but do not use the operators $\ttX_{\sim}$ and $\ttX_{\nsim}$.
We give some examples which will be used in the later sections.

\begin{example}
\label{eg: key}
Consider the language $L_{key(a)}$ which consists of
data words in which every two positions labeled with $a$
have different data values.
$L_{key(a)}$ is expressible by the $\strongdiamondLTL$ formula
$\ttG \: (a \to \neg \Diamond^{s} a)$.
On the other hand, the formula $\ttG \: (a \to \neg \Diamond^w a)$ 
does not make much sense as essentially 
it only expresses the data words in which the symbol $a$ does not appear.
\end{example}

\begin{example}
\label{eg: twice data values}
Consider the formula $\varphi := \ttG (a \to \Diamond^s a)$ over the alphabet $\Sigma$.
Then, $w \in \LL(\varphi)$ if and only if
every data value in $V_w(a)$ appears at least twice (among $a$-positions).
This language $\LL(\varphi)$ cannot be captured by an ADC.
\\
Now consider a slightly different representation of the formula $\varphi$.
Let $\overline{\Sigma}$ be a copy of the alphabet $\Sigma$,
in which $\overline{b} \in \overline{\Sigma}$
denotes the corresponding symbol of $b \in \Sigma$.
Consider the following formula $\varphi' := \ttG(a \to \Diamond^s \overline{a})$
over the alphabet $\Sigma \cup \overline{\Sigma}$.
Essentially $\varphi$ and $\varphi'$ are equivalent up to renaming $\overline{a}$ back to $a$.
However, $\LL(\varphi')$ can be captured by an ADC.
This simple trick will be useful in our translation of $\strongdiamondLTL$
to an ADC for the purpose of deciding the satisfiability problem for $\strongdiamondLTL$.   
\end{example}

\begin{theorem}
\label{t: complexity LTL}~
\begin{enumerate}
\item
The satisfiability problem for $\weakdiamondLTL$ is $\sfNEXPTIME$-complete.
\item
The satisfiability problem for $\strongdiamondLTL$ is 2-$\sfNEXPTIME$.
\item
The satisfiability problem for $\completeLTLstrong$ is 3-$\sfNEXPTIME$.
\end{enumerate}
\end{theorem}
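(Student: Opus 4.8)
The plan is to reduce the satisfiability problem for each of the three LTL fragments to the emptiness problem for (profile) B\"uchi automata with data-constraints, whose complexity we have already pinned down in Theorems~\ref{t: emptiness diamond aut} and~\ref{t: emptiness profile diamond aut}, and then track the blow-up carefully so as to land in the claimed level of the $k$-$\sfNEXPTIME$ hierarchy.

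For part~(1), $\weakdiamondLTL$, I would first do a standard Vardi--Wolper-style closure construction: build from the formula $\varphi$ a B\"uchi automaton over the alphabet $\Sigma \times 2^{\closure(\varphi)}$ whose runs are exactly the ``Hintikka sequences'' consistent with the temporal operators $\ttX,\ttU,\ttR$, of size single-exponential in $|\varphi|$. The only data-dependent operator is $\Diamond^w$. The key observation is that $\Diamond^w\psi$ holding at position $i$ only asserts \emph{existence} of a position (possibly $i$ itself) with the same data value where $\psi$ holds; dually, $\neg\Diamond^w\psi$ at $i$ asserts that \emph{no} position sharing $i$'s data value satisfies $\psi$. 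After guessing the truth value of each subformula of the shape $\Diamond^w\psi$ at each position (this is part of the $2^{\closure(\varphi)}$ component), these become purely ``value-class'' statements: every data value either is ``a $\Diamond^w\psi$-value'' (some position with that value satisfies $\psi$) or is not. This is exactly what an inclusion-constraint can express --- over the refined alphabet, the set of symbols asserting $\Diamond^w\psi$ must have its data values included in the union of symbols where $\psi$ is forced true --- and negated occurrences yield denial-constraints (no value may lie simultaneously in a ``$\psi$ holds'' symbol and a ``$\neg\Diamond^w\psi$'' symbol). Since $\Diamond^w$ never needs the witnessing position to differ from $i$, no key-constraint is ever generated, so by the second half of Theorem~\ref{t: emptiness diamond aut} emptiness is in $\sfNP$; the automaton is singly exponential, so the whole procedure is in $\sfNEXPTIME$. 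For the matching lower bound I would reduce from a succinct version of an $\sfNP$-hard problem (e.g. via a standard exponential-tiling / succinct-3-colorability argument encoded into an ADC-definable $\Diamond^w$-formula, mirroring Example~\ref{eg: key} and Example~\ref{eg: twice data values}), which gives $\sfNEXPTIME$-hardness.

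For part~(2), $\strongdiamondLTL$, the obstacle is precisely the gap between $\Diamond^w$ and $\Diamond^s$: $\Diamond^s\psi$ at $i$ requires a \emph{different} position with the same value, so (for a symbol $a$ where $\psi$ is forced false everywhere except on $a$-positions) $\ttG(a\to\Diamond^s a)$ says every $a$-value occurs at least twice, which as Example~\ref{eg: twice data values} notes is \emph{not} ADC-definable directly. The fix already flagged in Example~\ref{eg: twice data values} is the copy-alphabet trick: introduce a disjoint copy $\overline{\Sigma}$ (more precisely, a copy of the refined Hintikka alphabet) and, in the automaton, guess for each position whether it serves as the ``$\psi$-witness'' for some strong-diamond obligation, tagging it with a barred symbol; then $\Diamond^s\psi$ at $i$ is translated to: $i$'s value also appears in a barred ``$\psi$''-position, which is an inclusion-constraint over the doubled alphabet, and $\neg\Diamond^s\psi$ again becomes denial-constraints --- but now the ``at least one \emph{other} position'' requirement can be enforced because the barred witness is by construction a distinct position. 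One must also handle the degenerate case where $i$ itself is the only same-value position, which is where a key-constraint enters (to force, on certain symbols, that all values are distinct, ruling out spurious self-witnessing); consequently Theorem~\ref{t: emptiness diamond aut} only gives $\sfNEXPTIME$ emptiness. Since the alphabet doubling and the guessing together keep the automaton singly exponential in $|\varphi|$ but the emptiness test is now $\sfNEXPTIME$ rather than $\sfNP$, composing yields $2$-$\sfNEXPTIME$.

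For part~(3), the full $\completeLTLstrong$ additionally has $\ttX_\sim$ and $\ttX_\nsim$, which compare a position's data value with its immediate neighbour --- exactly the information recorded by the profile word $\sfProfile(w)$. So the construction is: build the singly-exponential Hintikka automaton as before, but now over the profile alphabet $\Sigma\times\{\ast,\top,\bot\}\times\{\ast,\top,\bot\}$ (refined by $2^{\closure(\varphi)}$ and, as in part~(2), by the barred copy), reading $\ttX_\sim$ off the $R$-component and $\ttX_\nsim$ off its complement, and handle $\Diamond^s$ by the same inclusion/denial/key-constraint translation. This is a \emph{profile} B\"uchi automaton with data-constraints, singly exponential in $|\varphi|$, so by Theorem~\ref{t: emptiness profile diamond aut} its emptiness is in $2$-$\sfNEXPTIME$; composing with the exponential construction gives $3$-$\sfNEXPTIME$. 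The main technical care throughout --- and the part I expect to be fiddliest rather than genuinely hard --- is verifying the \emph{correctness} of the copy-alphabet encoding of $\Diamond^s$: that every model of the formula induces an accepting run with a consistent assignment of barred witnesses, and conversely that from any accepting run (together with a satisfying data-value assignment produced by the emptiness algorithm, via the finite-prefix-plus-$\omega$-tail construction mentioned after the algorithm in Section~\ref{s: diamond aut}) one recovers a genuine data $\omega$-word satisfying $\varphi$, including at boundary positions and in the finitely-many-zones corner case.
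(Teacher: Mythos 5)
Your upper-bound arguments follow essentially the same route as the paper: a Vardi--Wolper-style closure automaton whose states carry the guessed truth values of subformulas, with $\Diamond^{w}$ compiled into inclusion- and denial-constraints only (no key-constraints, hence \sfNP{} emptiness by Theorem~\ref{t: emptiness diamond aut} and \sfNEXPTIME{} overall), the barred-copy trick of Example~\ref{eg: twice data values} for $\Diamond^{s}$ (key-constraints now appear, hence 2-\sfNEXPTIME), and profiles for $\ttX_{\sim},\ttX_{\nsim}$ combined with Theorem~\ref{t: emptiness profile diamond aut} (3-\sfNEXPTIME). One small correction to your part~(2): in the paper's construction the key-constraints are not needed to block self-witnessing of the \emph{positive} $\Diamond^{s}$ obligations --- that is already excluded because a state containing $\psi$ and $\Diamond^{s}\overline{\psi}$ is forbidden from containing $\overline{\psi}$, so the inclusion-constraint into $\overline{\psi}$-states necessarily points to a different position. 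Keys arise from the \emph{negative} case: a state $q$ containing both $\psi$ and $\neg\Diamond^{s}\psi$ gets the key-constraint $V(q)\mapsto q$ (together with denial-constraints against the other $\psi$-states), since denial-constraints alone cannot forbid a second position with the same state and the same data value. This slip does not change your complexity accounting.

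The genuine gap is the \sfNEXPTIME-hardness in part~(1), which you treat as a routine succinct reduction. Ordinary LTL satisfiability is \sfPSPACE-complete, so any hardness proof must use the data values in an essential way, and this is exactly the nontrivial content of the paper's reduction from the $2^{n}$-corridor tiling problem: data values serve as \emph{pointers}. Each grid cell is encoded by a short block whose positions point to chained $n$-bit binary representations of its column and row numbers (each bit accessed via $\Diamond^{w}/\Box^{w}$ and a bounded number of $\ttX$'s), plus an ``up-pointer'' whose value links the cell to the cell directly above it, which lies $2^{n}$ positions away; horizontal/vertical constraints and the increment of the binary counters are then expressed by polynomial-size $\weakdiamondLTL$ formulas. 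Without such a pointer mechanism it is not clear how a succinct tiling (or succinct 3-colorability) instance could be enforced by a polynomial-size $\Diamond^{w}$-formula over data $\omega$-words, so this construction has to be supplied rather than cited as standard.
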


The proofs for the upper bounds in Theorem~\ref{t: complexity LTL}
can be found in Appendix~\ref{app: s: proof complexity LTL}.
The proof for the hardness part in (1) 
can be found in Appendix~\ref{app: s: hardness of weak diamond LTL}.

\bibliographystyle{amsalpha}

\newpage

\appendix

\section{Analysis of the time complexity of the Algorithm in Subsection~\ref{ss: algorithm diamond aut}}
\label{app: s: complexity analysis}

Obviously Step~(1) takes exponential time in the size of the alphabet $\Sigma$.
Moreover, the sizes of the automaton $\tilde{\AA}_{\tsffin}$,
the formula $\varphi$ and the B\"uchi automaton $\tilde{\AA}$
are all exponential in the size of the original alphabet $\Sigma$.
The emptiness of B\"uchi automaton $\tilde{\AA}$ can be checked in polynomial time,
while the Presburger automaton $(\tilde{\AA}_{\tsffin},\CC)$
can be checked in $\sfNP$.
So overall our algorithm works in $\sfNEXPTIME$.

\section{Proof of the correctness of the algorithm in Subsection~\ref{ss: algorithm diamond aut}}
\label{app: s: proof correctness}

Throughout this section
we fix an ADC $(\AA,\CC)$ and
$\MM=\langle Q,\mu\rangle$ the transition system of $\AA$,
where $\AA=\MM_{q_0}^F$.
We will demonstrate the following two claims,
of which proofs are provided into the subsequent two subsections.
\begin{claim}
\label{cl: omega to finite}
Suppose there exists an data $\omega$-word $w \in \LL(\AA,\CC)$.
Then, by fixing $\SS_0 = \SS_{0}(w)$,
$\SS_{\tsffin} = \SS_{\tsffin}(w)$ and
$\SS_{\infty} = \SS_{\infty}(w)$,
the constructed Presburger automaton $(\tilde{\AA}_{\tsffin},\varphi)$
and B\"uchi automaton $\tilde{\AA}$ are both not empty.
\end{claim}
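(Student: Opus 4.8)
The plan is to construct, directly from $w$, explicit witnesses for the non-emptiness of $(\tilde{\AA}_{\tsffin},\varphi)$ and of $\tilde{\AA}$, for a suitably chosen cut state $q$. The main device is a canonical \emph{decoration} of $w$ over the extended alphabet $\tilde\Sigma$. I would put $u=\sfProj(w)$, fix an accepting run $\rho=p_1p_2\cdots$ of $\AA$ on $u$, and observe that the guess $\SS_0=\SS_0(w)$, $\SS_{\tsffin}=\SS_{\tsffin}(w)$, $\SS_\infty=\SS_\infty(w)$ satisfies (C1) and (C2) by Proposition~\ref{p: data-constraint}, since $w\models\CC$. Then I would decorate $w$ into $\tilde w=\tilde a_1\tilde a_2\cdots$ over $\tilde\Sigma$: set $\tilde a_i:=a_i$ if the value $d_i$ of position $i$ lies in a finite class, and $\tilde a_i:=(a_i,S)$ if $d_i$ lies in an infinite class $[S]_w$ (so necessarily $a_i\in S\in\SS_{\infty}$ and $(a_i,S)\in\Sigma(\SS_{\infty})$). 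Because $\tilde\mu$ extends $\mu$ with $(p,(a,S),q)$ for every $(p,a,q)\in\mu$ and $(a,S)\in\Sigma(\SS_{\infty})$, the same $\rho$ remains a run of $\tilde{\MM}$ on $\tilde w$.

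Next I would fix the cut. For a label $a$ put $N_a:=\sum_{S\in\SS_{\tsffin},\,a\in S}|[S]_w|$; by disjointness of the classes this is exactly the number of distinct data values occurring at $a$-positions and lying in a finite class. Then $\tilde w$ has at least $N_a$ bare occurrences of $a$ in general, and exactly $N_a$ if $a$ is a key-label, since the key-constraint forces each such value to sit at a unique $a$-position. I would choose $n$ large enough that $\tilde a_1\cdots\tilde a_n$ already contains, for every $a$, at least one $a$-labelled position carrying each of these finitely many values — in particular the prefix then contains all bare occurrences of every key-label — and set $q:=p_n$.

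For $\tilde{\AA}$ I would argue as follows. The suffix $\tilde a_{n+1}\tilde a_{n+2}\cdots$ is read from $q$ by the run $p_{n+1}p_{n+2}\cdots$ of $\tilde{\MM}_q^F$, which is accepting since it has the same set of recurring states as $\rho$; each $(a,S)\in\Sigma(\SS_{\infty})$ labels infinitely many positions of $\tilde w$ because $[S]_w$ is infinite and each of its values lies at some $a$-position, so $(a,S)$ still occurs infinitely often past $n$; and no bare key-label survives past $n$ by the choice of $n$. Hence the decorated suffix lies in $\LL(\tilde{\AA})$. For $(\tilde{\AA}_{\tsffin},\varphi)$ I would take $z_S:=|[S]_w|$ for $S\in\SS_{\tsffin}$ and $z_S:=0$ otherwise (so $z_S\ge 1$ on $\SS_{\tsffin}$ and $z_S=0$ on $\SS_0\cup\SS_{\infty}$), together with the decorated prefix $\tilde a_1\cdots\tilde a_n$, which is accepted by $\tilde{\MM}_{q_0}^{\{q\}}$ via $p_1\cdots p_n$. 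Its number $x_a$ of bare occurrences of $a$ equals the number of $a$-positions $\le n$ carrying a finite-class value, since the tagged positions feed only Parikh coordinates that $\varphi$ does not mention; by the choice of $n$ this is $\ge N_a=\sum_{S\ni a}z_S$ for every $a$, with equality for key-labels $a$ because the prefix contains all $N_a$ bare $a$'s and no others. Thus $\varphi(\sfParikh(\tilde a_1\cdots\tilde a_n))$ holds and the decorated prefix lies in $\LL(\tilde{\AA}_{\tsffin},\varphi)$; so for $q:=p_n$ both constructed automata are non-empty.

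\textbf{Main obstacle.} The one delicate step is the verification in the previous paragraph that the \emph{bare} occurrences of a label in $\tilde w$ count precisely the finite-class data values of that colour: this is exactly where the key-constraint is used essentially, as it is what upgrades the generic inequality $x_a\ge\sum_{S\ni a}z_S$ to the equality demanded by the last conjunct of $\psi$, and it is also what makes the finitely-many "place it inside the prefix" requirements used to pick $n$ genuinely finite. Everything else (that $\rho$ transfers to $\tilde{\MM}$, that recurring states are preserved under taking a suffix, that infinite classes force infinitely many tagged positions) is routine.
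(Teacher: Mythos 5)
Your proposal is correct and follows essentially the same route as the paper's own proof: decorate $w$ over $\tilde\Sigma$ according to whether each value's class is finite or infinite, cut at the first position by which every finite-class value has exhibited its full label set (the paper's $N$, your $n$), take $q=p_N$, certify the prefix with the witnesses $z_S=|[S]_w|$ for $S\in\SS_{\tsffin}$ and the suffix with the tail of the same run. The only cosmetic differences are bookkeeping (your $N_a$ versus the paper's direct check of $\psi_1$--$\psi_4$), so there is nothing to fix.
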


\begin{claim}
\label{cl: finite to omega}
Suppose there exist a partition $\SS_0,\SS_{\tsffin}, \SS_{\infty}$
of the set $2^{Q}-\{\emptyset\}$
such that the constructed Presburger automatn $(\tilde{\AA}_{\tsffin},\varphi)$
and B\"uchi automaton $\tilde{\AA}$ are both not empty.
Then, there exists an data $\omega$-word $w \in \LL(\AA,\CC)$ such that 
$\SS_0(w) = \SS_{0}$,
$\SS_{\tsffin}(w) = \SS_{\tsffin}$ and
$\SS_{\infty}(w) = \SS_{\infty}$.
\end{claim}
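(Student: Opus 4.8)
The goal is to establish the two directions of correctness (Claims~\ref{cl: omega to finite} and~\ref{cl: finite to omega}), since the complexity bounds then follow from the size analysis and Theorem~\ref{t: emptiness presburger}. The whole argument rests on the observation that the behaviour of an accepting data $\omega$-word $w \in \LL(\AA,\CC)$ separates into a \emph{finite prefix} that ``uses up'' all the data values in the finite classes $[S]_w$ for $S \in \SS_{\tsffin}(w)$, and an \emph{infinite suffix} in which only data values from infinite classes $[S]_w$ for $S \in \SS_\infty(w)$ recur. So my first move is to make this split precise: given $w$, choose a position $n$ large enough that (i) every value in $\bigcup_{S \in \sSS_\tsffin} [S]_w$ has already appeared, (ii) for each $S \in \SS_\infty(w)$ the suffix still contains infinitely many positions whose value lies in $[S]_w$, and (iii) the run $\rho$ of $\AA$ on $\sfProj(w)$ has entered its ``recurrent'' part, reaching some state $q$ from which it both loops through $F$ infinitely often and from which every $(a,S)$-transition in $\Sigma(\SS_\infty)$ is taken infinitely often.

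\emph{Proof of Claim~\ref{cl: omega to finite} (the $\omega \to$ finite direction).} Fix $\SS_0 = \SS_0(w)$, $\SS_\tsffin = \SS_\tsffin(w)$, $\SS_\infty = \SS_\infty(w)$ and the state $q$ above. For the Presburger automaton: the prefix $a_1\cdots a_n$ is accepted by $\tilde\AA_\tsffin = \MM_{q_0}^{\{q\}}$ by choice of $n$, and I must check $\psi(\sfParikh(a_1\cdots a_n))$ holds. The witnesses are $z_S := |[S]_w \cap \{d_1,\ldots,d_n\}|$; condition (i) makes this equal to $|[S]_w|$ for $S \in \SS_\tsffin$ (so $z_S \geq 1$), it is $0$ for $S \in \SS_0 \cup \SS_\infty$ (by condition (ii) combined with the fact that infinite classes are never exhausted — wait, more carefully: the prefix may contain values from an infinite class, so instead I set $z_S = 0$ for $S \in \SS_\infty$ by \emph{definition} in $\psi$ and just need $x_a \geq \sum_{S \ni a} z_S$, which holds because the $a$-positions among the first $n$ carry at least the $z_S$ distinct values counted, one class at a time, using disjointness of the $[S]_w$). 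The key-constraint clause $x_a = \sum_{a \in S} z_S$: when $V(a)\mapsto a \in \CC$, all $a$-positions have distinct values, all lying in $\bigcup_{a \in S, S \notin \sSS_\infty} [S]_w$ (since $a$ may not appear after the suffix start if it is key-constrained — actually $a$ may not appear at all in the suffix because $\tilde\AA$ forbids it; so all $a$-positions are in the prefix), hence the count matches. For the B\"uchi automaton $\tilde\AA$: relabel each suffix position $i > n$ carrying a value in $[S]_w$ (necessarily $S \in \SS_\infty$) by $(a_i, S)$; the resulting $\omega$-word over $\tilde\Sigma$ is accepted by $\tilde\MM_q^F$ (the relabelled transitions are exactly those added to $\tilde\mu$), each $(a,S) \in \Sigma(\SS_\infty)$ appears infinitely often by (ii), and no key-constrained symbol appears by the prefix argument.

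\emph{Proof of Claim~\ref{cl: finite to omega} (the finite $\to \omega$ direction).} This is where the real work is, and I expect it to be the main obstacle. Given $u = a_1\cdots a_n \in \LL(\tilde\AA_\tsffin,\varphi)$ reaching $q$ with Parikh witnesses $z_S$, and $v \in \LL(\tilde\AA)$ an $\omega$-word over $\tilde\Sigma$, I build the data $\omega$-word $w = ww'$. For the prefix: the clauses of $\psi$ let me assign data values to $a_1\cdots a_n$ realizing exactly the classes in $\SS_\tsffin$ with the prescribed finite cardinalities $z_S$ — allocate $z_S$ fresh values to class $S$ for each $S \in \SS_\tsffin$, then distribute them among positions so that position $i$ (label $a_i$) receives a value from some $[S]$ with $a_i \in S$; the inequality $x_a \geq \sum_{S \ni a} z_S$ guarantees there are enough $a$-positions to cover all values that \emph{must} appear in an $a$-position, and the key-constraint equality forces, for key-constrained $a$, that the assignment is injective on $a$-positions. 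For the suffix: strip the $(a,S)$-annotations from $v$ to get the $\Sigma$-labels, and assign data values so that an $(a,S)$-position gets a value from $[S]$; since each $(a,S)$ recurs infinitely often, each such $S$ becomes genuinely infinite — here I allocate a fresh infinite supply of values per $S \in \SS_\infty$ and cycle through them so no value repeats unboundedly in a way that would violate anything, while still reusing values across the prefix/suffix boundary only within the same $S$. The delicate points are: checking $\sfProj(ww') \in \LL(\AA)$ (immediate: concatenate the prefix run ending at $q$ with the suffix run of $\tilde\MM_q^F$, projecting away annotations); checking $ww' \models \CC$ — the inclusion and denial constraints hold because conditions (C1),(C2) force the offending $S$ into $\SS_0$, so those classes are empty by Proposition~\ref{p: data-constraint}, and we never populated them; and the key-constraints hold because key-constrained symbols are absent from the suffix (enforced by $\tilde\AA$) and injectively labelled in the prefix (enforced by the $x_a = \sum_{a \in S} z_S$ clause, which says the $a$-positions are in bijection with the union of finite-or-empty classes containing $a$, all distinct). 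Finally $\SS_\bullet(ww') = \SS_\bullet$ by construction. The one subtlety I will need to handle carefully is ensuring that the fresh values I allocate for $\SS_\infty$-classes in the suffix are consistent with any values already placed in the prefix that happen to lie in those same classes — but since $\psi$ forces $z_S = 0$ for $S \in \SS_\infty$, the prefix can be taken to use \emph{only} values from $\SS_\tsffin$-classes, so there is no overlap and the two allocations are independent.
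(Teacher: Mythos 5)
Your overall strategy is the same as the paper's: take a finite word accepted by $(\tilde{\AA}_{\tsffin},\varphi)$ and an $\omega$-word accepted by $\tilde{\AA}$, use the Parikh witnesses $z_S$ to allocate the finite classes in the prefix, populate the $\SS_{\infty}$-classes along the suffix, and discharge the inclusion/denial constraints via Proposition~\ref{p: data-constraint} and conditions (C1)/(C2). However, one concrete step of your construction fails: the assertion that ``the prefix can be taken to use only values from $\SS_{\tsffin}$-classes, so there is no overlap and the two allocations are independent.'' The prefix word witnessing nonemptiness of the Presburger automaton lives over $\tilde{\Sigma}=\Sigma\cup\Sigma(\SS_{\infty})$ and may well contain positions labeled $(a,S)$ with $S\in\SS_{\infty}$; the clause $z_S=0$ for $S\in\SS_{\infty}$ says nothing about such positions, and you cannot simply replace them by plain $a$'s, because that increases $x_a$ and can falsify the key clause $x_a=\sum_{a\in S}z_S$. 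These prefix positions still need data values, and giving them $\SS_{\tsffin}$-class values breaks the argument: when $V(a)\mapsto a\in\CC$, the equality clause already forces a bijection between the plain $a$-labeled prefix positions and the finite-class values whose class contains $a$, so reusing any of those values at an $(a,S)$-position violates the key constraint; and when no $S'\in\SS_{\tsffin}$ contains $a$ there is no admissible finite-class value at all, so any other choice changes the class of the chosen value, destroying $\SS_0(w)=\SS_0$ and possibly a constraint forced into $\SS_0$ by (C1)/(C2).

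The paper closes exactly this gap by treating \emph{all} positions labeled with symbols from $\Sigma(\SS_{\infty})$, in the prefix and the suffix alike, in a single inductive pass: it reserves pairwise disjoint infinite sets $\fD_S$, and each fresh value $d\in\fD_S$ is placed on exactly $|S|$ positions whose projected labels are precisely the elements of $S$, which is possible because $\tilde{\AA}$ forces every $(a,S)$ to occur infinitely often. The same invariant (each value of an $\SS_{\infty}$-class is used once per label and covers every label of $S$) is what your suffix description (``cycle through them'') leaves unspecified: without it a value could end up realizing a proper subset $S'\subsetneq S$, which may lie in $\SS_0$, again breaking both the claimed equalities $\SS_0(w)=\SS_0$, $\SS_{\tsffin}(w)=\SS_{\tsffin}$, $\SS_{\infty}(w)=\SS_{\infty}$ and possibly an inclusion or denial constraint. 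With these two repairs your construction coincides with the paper's proof of the claim.
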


We write $w[\leq i]$ to denote the initial segment of $w$ of length $i$,
while $w[\geq i]$ the $\omega$-word obtained by discarding the initial segment of length $i-1$ from $w$.
Then, $\sfProj(w[\leq i]) = a_1\cdots a_i$, and $\sfProj(w[\geq i]) = a_i a_{i+1} \cdots$.

\subsection{Proof of Claim~\ref{cl: omega to finite}}

Let $w$ be an data $\omega$-word accepted by $(\AA,\CC)$.
Let 
$\SS_0 = \SS_0(w)$,
$\SS_{\tsffin} = \SS_{\tsffin}(w)$, and
$\SS_{\infty} = \SS_{\infty}(w)$.
Let $N$ be the minimal index $N$ such that
for each $S \in \SS_{\tsffin}$, $[S]_{w[\leq N]} = [S]_{w}$.

Let $\rho= p_1 p_2 \cdots$ be the
accepting run of $\AA$ on $\sfProj(w)$.
Let $\tilde{\Sigma}$ and $\tilde{\MM}$
be the new alphabet and the transition system
constructed in Step~(2) of our algorithm.
Then, we pick the state $p_N$ for the state $q$,
supposedly be non-deterministically picked in Step~(3) of our algorithm.
The Presburger automaton $(\tilde{\AA}_{\tsffin},\varphi)$ 
constructed in Step~(3) has the final state $p_N$,
while the B\"uchi automaton $\tilde{\AA}$
has the initial state $p_N$.
That is, $\tilde{\AA}_{\tsffin} = \tilde{\MM}_{q_0}^{\{p_N\}}$
and $\tilde{\AA}=\tilde{\MM}_{p_N}^{F}$.

Consider the (without data) $\omega$-word $x_1 x_2 \cdots$ over the alphabet $\tilde{\Sigma}$,
where
$$ 
x_i = 
\left\{
\begin{array}{ll}
a_i \in \Sigma & \mbox{if} \ d_i \in [S]_{w} \ \mbox{and} \ S \notin \SS_{\infty}
\\
(a_i,S) \in \Sigma_{S}& \mbox{if} \ d_i \in [S]_{w} \ \mbox{and} \ S \in \SS_{\infty}
\end{array}
\right.
$$
We claim that the following words:
\begin{itemize}
\item
$v_1 = x_1 x_2 \cdots x_N \in \LL(\tilde{\AA}_{\tsffin},\varphi)$.
\item
$v_2 = x_{N+1} x_{N+2} \cdots \in \LL(\tilde{\AA})$.
\end{itemize}

\subsubsection{Proof of $v_1 = x_1 x_2 \cdots x_N \in \LL(\tilde{\AA}_{\tsffin},\varphi)$}

There are two things to show here:
\begin{enumerate}
\item
That $v_1$ is accepted by $\tilde{\AA}_{\tsffin}$.
\item
That $\varphi(\sfParikh(x_1\cdots x_N))$ holds.
\end{enumerate}
It is pretty straightforward to verify that $p_1\cdots p_N$
is a run of $\AA$ on $x_1\cdots x_N$.
That it is an accepting run follows from the fact that
$q_N$ is a final state in $\tilde{\AA}_{\tsffin}$.

Now we will show that $\varphi(\sfParikh(q_1\cdots q_N))$ holds.
Recall that the formula $\varphi$ is of the form:
$$
\exists z_{S_1} \; \cdots \; \exists z_{S_m} \ 
\psi_1 \wedge
\psi_2 \wedge
\psi_3 \wedge
\psi_4 
$$
where 
\begin{itemize}
\item
the formula $\psi_1$ is the conjunction
$$
\bigwedge_{a \in \Sigma} x_{a} \geq \sum_{S \ni a} z_S
$$
\item
the formula $\psi_2$ is the conjunction
$$
\bigwedge_{S \in \sSS_0 \cup \sSS_{\infty}} z_S = 0 
$$
\item
the formula $\psi_3$ is the conjunction
$$
\bigwedge_{S \in \sSS_{\tsffin}} z_S \geq 1
$$
\item
the formula $\psi_4$ is the conjunction
$$
\bigwedge_{V(a)\mapsto a \in \sCC} 
x_{a} = \sum_{a \in S} z_S
$$
\end{itemize}
In order to show that $\varphi(\sfParikh(\sfProj(w)))$ holds,
for each $S \subseteq Q$, 
we pick the following integers as witnesses for $z_S$.
\begin{itemize}
\item
$z_{S} = |[S]_{w[\leq N]}|$, for each $S \in \SS_{\tsffin}(w)$.
\item
$z_S = 0$, for each $S \notin \SS_{\tsffin}(w)$.
\end{itemize}
We need to show that all the formulae $\psi_1$--$\psi_4$ above are satisfied.  

First, we observe that the following two points.
For each $a \in \Sigma$,
\begin{enumerate}
\item
$\#_a(v_1)$ is precisely the number of $a$-positions in $w[\leq N]$
whose data value is from the set 
$$
\bigcup_{S \in \sSS_{\tsffin}} [S]_w
$$
\item
$\#_(a,S)(v_1)$ is precisely the number of $a$-positions in $w[\leq N]$
whose data value is from the set $[S]_w$.
Recall that in this case $S \in \SS_{\infty}$.
\end{enumerate}
Then, $\psi_1$ follows immediately from~(1)
that such $\#_a(v_1)$ number of $a$-positions must be greater than 
the number of its data values $\sum_{S \in \sSS_{\tsffin}} |[S]_w|$.
The formulae $\psi_2$ and $\psi_3$ follows immediately from the definition.
That the formula $\psi_4$ holds is because of~(1)
and that the number $\#_a(v_1)$ of such $a$-positions
is precisely the number of its data value $\sum_{S \in \sSS_{\tsffin}} |[S]_w|$.

\subsubsection{Proof of $v_2 = x_{N+1} x_{N+2} \cdots \in \LL(\tilde{\AA})$}

Recall that the B\"uchi automaton $\tilde{A}$ is
the intersection of $\tilde{\MM}_{p_N}^{F}$ with 
the automaton that checks the following condition.
\begin{enumerate}
\item
Each $(a,S) \in \Sigma(\SS_{\infty})$ appears infinitely many times.
\item
If the key-constraint $V(a) \mapsto a \in \CC$, 
then the symbol $a$ does not appear.
\end{enumerate}
Now, to show that $v_2 \in \LL(\tilde{\AA})$,
we claim that $p_{N+1} p_{N+2} \cdots$
is also an accepting run of $\tilde{\AA}$ on $v_2$.

First, we show that $x_{N+1}x_{N+2}\cdots$
satisfies the properties (1) and (2) above.
As $S \in \SS_{\infty}(w)$,
then it means each data values in $[S]_w$ appears infinitely many often in $w$.
By our construction of $x_{N+1} x_{N+2}\cdots$,
it means each symbol $(a,S) \in \Sigma(\SS_{\infty})$ 
appears infinitely many often in $x_{N+1} x_{N+2} \cdots$.

Furthermore, recall that $N$ is an index such that
$[S]_{w[\leq N]} = [S]_w$, for each $S \in \SS_{\tsffin}(w)$.
Now, if $w \models V(a)\mapsto a$,
then every $a$-position greater than $N$ in $w$
has data value from the set $\bigcup_{S \in \sSS_{\infty}(w)} [S]_w$.
This means that by our construction of $x_{N+1} x_{N+2}\cdots$,
the symbol $a$ does not appear in $x_{N+1} x_{N+2} \cdots$.

To show that $x_{N+1} x_{N+2} \cdots$ is accepted by $\tilde{\AA} = \tilde{\MM}_{p_N}^{F}$,
we observe that $p_{N+1} p_{N+2} \cdots$ 
is an accepting run of $\tilde{\AA}$ on $x_{N+1}x_{N+2}\cdots$,
which is immediate by our construction of $\tilde{\MM}$.

\subsection{Proof of Claim~\ref{cl: finite to omega}}

Suppose there are the following items:
\begin{itemize}
\item
$\SS_0$, $\SS_{\tsffin}$, $\SS_{\infty}$ is a partition of $2^{\Sigma}-\{\emptyset\}$;
\item
$\tilde{\Sigma} = \Sigma \cup \Sigma(\SS_{\infty})$ and $\tilde{\MM} = \langle \tilde{Q}, \tilde{\mu}\rangle$
be the constructed new alphabet and transition system;
\item
a state $q\in \tilde{Q}$,
\end{itemize}
such that the constructed the Presbruger automaton $(\tilde{\AA}_{\tsffin},\varphi)$
and the B\"uchi automaton $\tilde{\AA}$ are not empty.
Consider the following two words.
\begin{itemize}
\item
$v_1 = b_1 \cdots b_N \in \LL(\tilde{\AA}_{\tsffin},\varphi)$,
where $p_1\cdots p_N$ be an accepting run of $\tilde{\AA}_{\tsffin}$ on $v_1$.
\item
$v_2 = b_{N+1} b_{N+2} \cdots  \in \LL(\tilde{\AA})$,
where $p_{N+1} p_{N+2}\cdots$ be an accepting run of $\tilde{\AA}$ on $v_2$.
\end{itemize}
We will construct an data $\omega$-word $w \in (\Sigma \times \fD)^{\omega}$
$$
w = {a_1 \choose d_1}{a_2 \choose d_2} \cdots {a_N \choose d_N}{a_{N+1} \choose d_{N+1}} \cdots,
$$
which is accepted by $(\AA,\CC)$.

We start by defining $\sfProj(w) = a_1 a_2 \cdots$.
For each $i=1,2,\ldots$,
$$
a_i =
\left\{
\begin{array}{ll}
b_i & \mbox{if} \ b_i \in \Sigma
\\
c  & \mbox{if} \ b_i = (c,S) \in \Sigma(\SS_{\infty}) \ \mbox{for some} \ S
\end{array}
\right.
$$
By the construction of $\tilde{\AA}_{\tsffin}$ and $\tilde{\AA}$,
it is immediate that $p_1 p_2 \cdots p_{N+1} p_{N+2}\cdots$
is an accepting run of $\AA$ on $\sfProj(w)$.

Now we will define the data values $d_1,d_2,\ldots$.
For each $S \in \SS_{\infty}$,
we fix an infinite set of data values for $\fD_{S}$,
such that all those sets $\fD_S$'s are disjoint.
We will use $\fD_S$ for $[S]_{w}$ for each $S \in \SS_{\infty}$.

For each $S \in \SS_{\tsffin}$,
the set $[S]_{\rho(w)}$ can be computed as follows.
By the assumption, $v_1$ is a word such that $\varphi(\sfParikh(v_1))$ holds,
where $m_S$ is a witness for the variable $z_S$.
Let $K = \sum_{S} m_S$.
Define a function
$$
\xi : \{1,\ldots,K\} \to 2^{\Sigma}-\{\emptyset\},
$$
such that $|\xi^{-1}(S)| = m_S$.
We will use $\xi^{-1}(S)$ as $[S]_{\rho(w)} = \rho_{\rho(w)[\leq N]}$.

The assignment of data values to $w$ can be done as follows.
\begin{enumerate}
\item 
We first define the data values for $d_1, \ldots, d_N$.
For each $a \in \Sigma$, pick the positions $Z(a) = \{i \mid b_i = a\}$.
(Note that the parameter in defining the set $Z(a)$ of positions
is the word $b_1 \cdots b_N$.)
Then we can assign those positions in $Z(a)$ 
with the data values from $\bigcup_{a \in S} \xi^{-1}(S)$.
Such assignment is possible as $|Z(a)| = \#_{v_1}(a) \geq \sum_{a \in S} m_S$.
\item
Then, we define the data values $d_{N+1}, d_{N+2}, \ldots$,
where $b_i\in \Sigma$.
This is easy. We just pick some arbitrary data values from $\bigcup_{q \in S} \xi^{-1}(S)$.
\item
At this stage we have define all the data values $d_i$'s
for the positions $i$ labeled with symbols from $\Sigma$ in the $v_1v_2$.
What is left is to define the data values for the positions in $v_1 v_2$
whose labels are from $\Sigma(\SS_{\infty})$.
Here we will use the data values in $\fD_S$ and the assignment is done inductively.
For each data value $d$ in $\fD_S$ that has not appeared yet in $w$,
we pick $|S|$ number of positions $l_1,\ldots,l_{|S|}$ in $w$ such that
$\{a_{l_1},\ldots,a_{l_{|S|}}\} = S$ and have no data values yet.
Then, we assign all those positions with the data value $d$.
By the acceptance criteria of the B\"uchi automaton $\tilde{\AA}$,
there are infinitely many such positions for each $S \in \SS_{\infty}$.
Thus, such assignment is always possible.
\end{enumerate}
What remains now is to prove that $w \models \CC$.

By Proposition~\ref{p: data-constraint} and the construction of $\SS_0$,
as well as the Presburger formula $\varphi$,
it is immediate that $w$ satisfies the inclusion- and denial-constraints in $\CC$.
We will show that it also satisfies the key-constraints.

Suppose the key-constraint $V(a)\mapsto a \in \CC$.
First, in the assignment of data values in $w[\leq N]$
all the $a$-positions recieve different data values,
due to the constraint $|Z(a)| = \#_{v_1}(a) = \sum_{a \in S} m_S$.
Second, from the construction of the automaton $\tilde{\AA}$,
the symbol $a$ does not appear in $b_{N+1} b_{N+2}\cdots$,
thus not appearing in $w[\geq N+1]$.
This means that we do not assign any data values from $\bigcup_{a \in S} \xi^{-1}(S)$
in every $a$-positions $\geq N+1$,
so all data values in $\bigcup_{a \in S} \xi^{-1}(S)$ appears only in once in $a$-positions.
Lastly, all the data values in $[S]_{\rho(w)}$ for each $S \in \SS_{\infty}$
are assigned only once.
Thus, it follows that every $a$-positions in $w$
have different data values, thus, $w \models V(a) \mapsto a$.
This completes the proof of Claim~\ref{cl: finite to omega}.

\section{The $\sfNP$ algorithm for Theorem~\ref{t: emptiness diamond aut}}
\label{app: s: proof np emptiness diamond aut}

We identify that in our algorithm in Subsection~\ref{ss: algorithm diamond aut},
the exponential blow-up occurs in Step~(1),
where we have to enumerate all the non-empty subsets of $\Sigma$.
Especially, the size of the set $\SS_{\infty}$
determines the sizes of the new alphabet $\tilde{\Sigma}$,
the transition system $\tilde{\MM}$.
And the size of the set $\SS_{\tsffin}$
determines the size of the Presburger formula $\varphi$.

The main idea of our $\sfNP$ is that
if there is no key-constraint in $\CC$, then
the following holds.
There exists a subset $\ZZ \subseteq 2^{\Sigma}$ of polynomial size
such that there exists an data $\omega$-word $w \in \LL(\AA,\CC)$
if and only if
there exists an data $\omega$-word $w' \in \LL(\AA,\CC)$,
where $[S]_{w'} = \emptyset$, for all $S \notin \ZZ$.
This means that in the constructions of
$\tilde{\Sigma}$, $\tilde{\MM}$, and $\varphi$,
we only need to take into account the sets in $\ZZ$.
This idea is the one that we are going to explain in the next subsection.

\subsection{Preliminary notion}
\label{app: ss: notion for np algorithm}

Let $\CC$ be a collection of inclusion- and denial-constraints.
We define the subset $\SS_0(\CC) \subseteq 2^{\Sigma}$ as follows. 
\begin{enumerate}
\item
If $\CC$ contains the inclusion-constraint $V(a) \subseteq \bigcup_{b\in R} V(b)$,
then $S \in \SS_0(\CC)$ for all $S \subseteq \Sigma$ where $a\in S$ and $S \cap R = \emptyset$.
\item
If $\CC$ contains the denial-constraint $V(a) \cap V(b) = \emptyset$,
then $S \in \SS_0(\CC)$ for all $S \subseteq \Sigma$ where $S$ contains both $a$ and $b$.
\end{enumerate}

\begin{remark}
\label{r: identify the empty-set}
Given a non-empty set $S \subseteq \Sigma$,
we can decide in polynomial time whether $S \in \SS_0(\CC)$
\end{remark}

\subsection{The algorithm}
\label{app: ss: np algorithm}

Given an ADC $(\AA,\CC)$, where $\CC$ does not contain key-constraints,
the algorithm works as follows.
\begin{enumerate}
\item
Construct (non-deterministically) a function $f: \Sigma \mapsto 2^{\Sigma}$ such that
for each $a\in \Sigma$, either
$$
a \in f(a) \ \mbox{and} \ f(a) \notin \SS_0(C)
$$
or
$$
f(a) = \emptyset
$$
Such function can be non-deterministically constructed,
by guessing $f(a)$ for each $a \in \Sigma$
and verify (in polynomial time) deterministically that $f(a)\notin \SS_0(\CC)$.
\item
Divide (non-deterministically) $\sfImage(f)$ into two categories: 
$\SS_{\tsffin}$ and $\SS_{\infty}$.
\\
The intended meaning of $\SS_{\tsffin}$ and $\SS_{\infty}$
is the same as the algorithm in Subsection~\ref{ss: algorithm diamond aut}.
Every other subsets not in $\SS_{\tsffin} \cup \SS_{\infty}$
are considered in $\SS_0$.
\item
Define the alphabet $\tilde{\Sigma}=\Sigma \cup \Sigma(\SS_{\infty})$,
where 
$$
\Sigma(\SS_{\infty}) = \{(a,S) \mid a \in S \ \mbox{and} \ S \in \SS_{\infty}\},
$$ 
and 
a transition system $\tilde{\MM} =\langle \tilde{Q}, \tilde{\mu}\rangle$
over the alphabet $\tilde{\Sigma}$, as follows.
\begin{eqnarray*}
\tilde{Q} & = & Q 
\\
\tilde{\mu} & = & \mu \cup
\{(p,(a,S),q) \mid (p,a,q) \in \mu \ \mbox{and} \ (a,S)\in \Sigma(\SS_{\infty})\}
\end{eqnarray*}
\item
Non-deterministically choose one state $q \in \tilde{Q}$.
\item
Construct a Presburger automaton $(\tilde{\AA}_{\tsffin},\varphi)$, where 
$\tilde{\AA}_{\tsffin} = \tilde{\MM}_{q_0}^{\{q\}}$ and
the formula $\varphi$ is as follows.
\\
Let $\sfImage(f) = \{S_1,\ldots,S_l\}$.
Then $\varphi$ is of the form 
$\exists z_{S_1} \; \cdots \; \exists z_{S_l}\ \psi$, 
where $\psi$ is the following quantifier-free formula:
$$
\bigwedge_{a \in \Sigma}
\Bigg(
x_{a} \geq \sum_{S \ni a \ \mbox{\tiny and} S \in \tsfImage(f)} z_S
\Bigg)
\quad
\wedge
\quad
\bigwedge_{S \in  \sSS_{\infty}} z_S = 0 
\quad
\wedge
\quad
\bigwedge_{S \in \sSS_{\tsffin}} z_S \geq 1
$$

\item
Construct a B\"uchi automaton $\tilde{\AA}$is
simply the intersection of $\tilde{\MM}_{q}^{F}$
with the automaton that checks that each $(a,S) \in \Sigma(\SS_{\infty})$
appears infinitely many times.

\item
Test the emptiness of $\LL(\tilde{\AA}_{\tsffin},\varphi)$ and $\LL(\tilde{\AA})$.
\\
Then, $\LL(\AA,\CC)\neq \emptyset$ if and only if $\LL(\tilde{\AA}_{\tsffin},\varphi) \neq \emptyset$
and $\LL(\tilde{\AA})\neq \emptyset$.
\end{enumerate}

\subsection{The proof of correctness}
\label{app: ss: correctness np algorithm}

In view of Claims~\ref{cl: omega to finite} and~\ref{cl: finite to omega},
to prove the correctness of our algorithm,
it is sufficient to prove the following.
\begin{claim}
\label{cl: many empty-sets}
If an data $\omega$-word $w \models \CC$,
then there exist a function $f:\Sigma \mapsto 2^{\Sigma}$
that respects the condition in Step~(1) 
and an data $\omega$-word $v \models \CC$
such that $\sfProj(v)=\sfProj(w)$ and $[S]_v=\emptyset$,
for all $S \notin \sfImage(f)$.
\end{claim}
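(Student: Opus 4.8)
The plan is to start from the given data $\omega$-word $w \models \CC$ and ``compress'' its data values so that only polynomially many of the sets $[S]_v$ remain non-empty, while keeping the label projection intact and preserving satisfaction of $\CC$. Since $\CC$ has no key-constraints, the only obstructions to merging data values are the inclusion- and denial-constraints, and these are exactly captured by $\SS_0(\CC)$ via Proposition~\ref{p: data-constraint}: a data word satisfies them iff $[S] = \emptyset$ for every $S \in \SS_0(\CC)$. First I would observe that for every $a \in \Sigma$ that actually occurs in $w$, the partition identity $V_w(a) = \bigcup_{a \in S} [S]_w$ forces at least one set $S \ni a$ with $[S]_w \neq \emptyset$; pick one such $S$ and call it $f(a)$, and for every $a$ not occurring in $w$ set $f(a) = \emptyset$. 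Each chosen $f(a)$ satisfies $f(a) \notin \SS_0(\CC)$, because $[f(a)]_w \neq \emptyset$ and $w \models \CC$; hence $f$ respects the condition of Step~(1). Note $|\sfImage(f)| \le |\Sigma|$, so it is of polynomial size.

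Next I would build $v$ from $w$ by redirecting data values. Concretely, for each position $i$ with label $a_i = a$, the original data value $d_i$ lies in some $[S]_w$ with $a \in S$; I replace it by a data value drawn from the block associated with $f(a)$. To make this globally consistent I would fix, for each $T \in \sfImage(f)$, a fresh pool of data values $\fD_T$ (pairwise disjoint across distinct $T$), and define a map sending each old block $[S]_w$ — for the unique sets $S$ that are realized — into the pool $\fD_{f(a)}$ for whichever labels force it; the cleanest route is to process label by label: for label $a$, take every $a$-position, and reassign its data value within $\fD_{f(a)}$ in a way that respects equalities already committed to. One must check that these per-label reassignments are mutually compatible where a single position is reached from one label only — which is automatic, each position has exactly one label — but also that when two positions with \emph{different} labels $a, b$ originally shared a data value $d$, we do not accidentally force $[S]_v \in \SS_0(\CC)$. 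Here is where I would be careful: rather than trying to preserve the old equality pattern, I would simply \emph{discard} cross-block equalities that are not needed, assigning data values so that in $v$ the set of labels sharing any given value is precisely some $f(a)$; since $f(a) \notin \SS_0(\CC)$, every realized block of $v$ avoids $\SS_0(\CC)$, so by Proposition~\ref{p: data-constraint} $v$ satisfies all inclusion- and denial-constraints of $\CC$. Because $\CC$ has no key-constraints, there is no lower-bound obligation on the number of distinct data values at an $a$-position, so this coarsening is harmless.

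Finally I would verify the bookkeeping: $\sfProj(v) = \sfProj(w)$ holds by construction since labels are untouched; and $[S]_v = \emptyset$ for all $S \notin \sfImage(f)$ because every data value used in $v$ is drawn from some $\fD_T$ with $T \in \sfImage(f)$ and is placed exactly at positions whose label set equals $T$, so the only $S$ with $[S]_v \neq \emptyset$ are among $\sfImage(f)$ (some elements of $\sfImage(f)$ may themselves end up empty, which is fine — the claim only asserts emptiness outside $\sfImage(f)$). The main obstacle I anticipate is making the reassignment of data values precise enough that it is clearly well-defined and clearly preserves $w \models \CC \Rightarrow v \models \CC$; the conceptual content is light (it is just Proposition~\ref{p: data-constraint} plus the absence of key-constraints), but the argument needs to be phrased so that the simultaneous choices of data values over all positions and all labels do not conflict. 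I would handle this by giving an explicit value assignment: enumerate $\fD_T$ for each $T$, and for an $a$-position $i$ let its new value be the $j$-th element of $\fD_{f(a)}$ where $j$ indexes the $\sim_w$-equivalence class of $i$ among $a$-positions (so positions $a$-equivalent in $w$ stay equivalent, and everything else is freshly separated) — this is manifestly consistent and yields $V_v(a) \subseteq \fD_{f(a)}$, from which all the required properties follow immediately.
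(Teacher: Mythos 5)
Your choice of $f$ is exactly the paper's, but the construction of $v$ has a genuine gap: the explicit assignment you end with (give the $a$-position in the $j$-th $\sim_w$-class among $a$-positions the $j$-th element of a fresh pool $\fD_{f(a)}$) does not deliver the invariant you rely on, namely that the set of labels sharing any value of $v$ is precisely some $f(a)$. The per-label indexing is uncoordinated across different labels, so a value of $\fD_T$ may be used only by \emph{some} of the labels mapped to $T$, and its label set in $v$ can be a proper subset of $T$ that lies outside $\sfImage(f)$ or even inside $\SS_0(\CC)$. Concretely, take $\Sigma=\{a,b,c\}$, $\CC=\{V(a)\subseteq V(c)\}$, and $w$ in which data value $1$ occurs at $a$-, $b$- and $c$-positions and data value $2$ occurs only at $c$-positions; then $w\models\CC$, the nonempty blocks are $[\{a,b,c\}]_w$ and $[\{c\}]_w$, and $f(a)=f(b)=\{a,b,c\}$ is forced. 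If your arbitrary pick gives $f(c)=\{c\}$, the disjointness of the pools makes $V_v(a)\cap V_v(c)=\emptyset$, so $v$ violates the inclusion-constraint; if instead $f(c)=\{a,b,c\}$, then $c$ has two $\sim_w$-classes while $a$ and $b$ have one, so the second pool element occurs only at $c$-positions and $[\{c\}]_v\neq\emptyset$ although $\{c\}\notin\sfImage(f)$, contradicting exactly the conclusion of the claim. The underlying issue is that inclusion-constraints require certain cross-label equalities to be \emph{preserved}; you cannot discard them label by label and hope the block structure comes out right.

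The paper avoids this by not using fresh values at all: it keeps $d_i'=d_i$ whenever $d_i\in[S]_w$ for some $S\in\sfImage(f)$, and for every other position it replaces $d_i$ by an arbitrary \emph{already existing} value of $[f(a_i)]_w$. Since $a_i\in f(a_i)$, this reuse never enlarges the label set of any surviving value, so $[S]_v=[S]_w$ for every $S\in\sfImage(f)$ and $[S]_v=\emptyset$ otherwise, and Proposition~\ref{p: data-constraint} then gives $v\models\CC$ (no key-constraints being present). Your proof would be repaired by adopting this reuse of existing values of the target block, or by otherwise coordinating the assignment so that every value of $v$ is realized at \emph{all} labels of its block.
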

\begin{proof}
Let
$$
w = {a_1 \choose d_1} {a_2 \choose d_2} \cdots.
$$
We define the function $f$ as follows.
For each $a\in \Sigma$,
\begin{itemize}
\item
if the label $a$ does not appear in $w$,
then $f(a) = \emptyset$;
\item
otherwise, define $f(a) = S_a$ such that $a\in S_a$
and $[S_a]_w \neq \emptyset$.
\\
Such a set $S_a$ exists as there is at least one $a$-position in $w$
and this position has a data value in $V_w(a)$,
which is partitioned into $\bigcup_{a \in S} [S]_w$.
\end{itemize}
We define the data word $v$ as follows.
$$
v = {a_1 \choose d_1'} {a_2 \choose d_2'} \cdots.
$$
Thus, $\sfProj(v) = \sfProj(w)$.
We define the data values $d_1',d_2',\ldots$ as follows.
\begin{itemize}
\item
If $d_i \in [S]_w$, for some $S \in \sfImage(f)$,
then $d_i' = d_i$.
\item
If $d_i \notin [S]_w$, for all nonempty $S \in \sfImage(f)$,
then we pick arbitrary data value from $[f(a_i)]_w$
to assign to $d_i'$.
\end{itemize}
By such construction, we have $[S]_v = [S]_w$,
for all non-empty $S \in \sfImage(f)$.
By Proposition~\ref{p: data-constraint},
$v \models \CC$.
Furthermore, $[S]_v = \emptyset$, for all $S \notin \sfImage(f)$.
This completes the proof of our claim.
\end{proof}

\section{Proof of Theorem~\ref{t: emptiness profile diamond aut}}
\label{app: s: proof emptiness profile diamond aut}

For the sake of presentation,
we first show the decidability of a simpler version
of the problem {\sc Omega-SAT-zonal-automata},
which we call {\sc Omega-SAT-locally-different} in Subsection~\ref{app: ss: locally different}.
Then, in Subsection~\ref{app: ss: zonal automata}
we explain how to adapt the approach in Subsection~\ref{app: ss: locally different}
for {\sc Omega-SAT-zonal-automata}.

\subsection{Locally different data $\omega$-words}
\label{app: ss: locally different}

A data word $w = {a_1 \choose d_1}{a_2 \choose d_2}\cdots$
is called {\em locally different},
if each position has different data value from 
its left- and right-neighbors,
that is, $d_{i} \neq d_{i+1}$, for each $i=1,2,\ldots$.

In this section we give an algorithm to decide the problem
{\sc SAT-locally-different} defined below. 
\begin{center}
\fbox{
\begin{tabular}{ll}
{\sc Problem}: & {\sc Omega-SAT-locally-different} 
\\ \hline
{\sc Input}: & a B\"uchi automaton $\AA$ and  
\\ 
             & a collection $\CC$ of key-, inclusion- and denial-constraints
\\
{\sc Question}: & is there a locally different data word $w$ such that 
\\
               & $\sfProj(w)\in\LL(\AA)$ with an accepting run $\rho$ and $\rho(w)\models\CC$?
\end{tabular}
}
\end{center}

In the proof we will use the following simple lemma.

\begin{lemma}
\label{l: swapping data}
{\em \cite[Lemma 3]{fo2-lpar}}
Let $v$ be a finite data word over $\Sigma$.
Suppose that for each $a \in \Sigma$,
either $V_v(a) = \emptyset$ or $|V_v(a)| \geq |\Sigma|+3$.
Then we can rearrange the positions of the data values in $v$ such that the
resulting data word $w$ is locally different, $\sfProj(w) = \sfProj(v)$ and 
for each $a \in \Sigma$, $V_{w}(a) = V_{v}(a)$. 
\end{lemma}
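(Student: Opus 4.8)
The plan is to prove the statement by a direct greedy rearrangement of the multiset of data values, position by position. First I would record the combinatorial setup: the word $v$ has length $n$, and for each $a \in \Sigma$ the data values appearing at $a$-positions form the set $V_v(a)$, which by hypothesis is either empty or has at least $|\Sigma|+3$ elements. Crucially, we are only permitted to permute the data values \emph{within} the block of positions carrying a fixed label $a$ (this keeps both $\sfProj(w)=\sfProj(v)$ and $V_w(a)=V_v(a)$), and the target is that no two adjacent positions $i,i+1$ in $w$ share a data value. Since adjacent positions may carry the same label, the real content is that within each label-block we can reshuffle so that the value placed at position $i$ differs from whatever sits at $i-1$ and $i+1$.

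The key step is the greedy assignment. I would process positions $1,2,\ldots,n$ left to right; when I reach position $i$ with label $a$, I must choose a data value from $V_v(a)$ that (i) has not yet been used up at an earlier $a$-position, (ii) differs from the value already placed at position $i-1$, and (iii) can still be extended, meaning that if position $i+1$ also has label $a$ I do not paint myself into a corner. The point of the bound $|V_v(a)| \geq |\Sigma|+3$ is exactly to guarantee enough "fresh" choices: at the moment I reach an $a$-position, the number of $a$-values already consumed is less than the total, position $i-1$ forbids at most one value, and position $i+1$ (if it exists and shares label $a$) needs at least one value left over for later — this is a constant number of forbidden or reserved values, comfortably less than $|\Sigma|+3$. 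A clean way to package this is to observe that we only ever need to avoid a bounded set of "bad" values at each step, and $|\Sigma|+3$ dominates that bound; one then shows by induction on $i$ that a valid partial assignment always extends. An even simpler route, if one wants to avoid fussing over the forward-looking condition (iii): first handle the maximal blocks of a single repeated label separately — within a block of $k$ consecutive $a$-positions one just needs $k$ distinct values arranged so consecutive ones differ, which is automatic once they are distinct, so the only interaction is at block boundaries, where two \emph{different} labels meet and one forbidden value per side again costs only a constant. I would pick whichever of these two organizations reads more cleanly.

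I expect the main obstacle to be bookkeeping the forward constraint (iii) correctly: a naive left-to-right greedy choice could, in principle, exhaust the small supply of distinct $a$-values right before a long run of $a$-positions that still needs many distinct values. The bound $|\Sigma|+3$ is generous enough to absorb this, but making the induction hypothesis strong enough — e.g. "after processing position $i$, for every label $a$ the set of still-available $a$-values has size at least (number of remaining $a$-positions)" together with the local-difference property of the prefix built so far — is the delicate point, and I would state and maintain exactly that invariant. Since this lemma is quoted verbatim from \cite{fo2-lpar}, I would also feel free to simply cite that source for the detailed argument and give only the sketch above.
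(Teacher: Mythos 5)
The paper itself never proves this lemma---it is imported verbatim from \cite[Lemma 3]{fo2-lpar}---so your fallback of simply citing that source is exactly what the authors do. The trouble is with the argument you sketch. You read the lemma as asking for a \emph{permutation} of the existing data values (you even insist on permuting only within each label class), i.e.\ as preserving the multiset of values at the $a$-positions; your greedy bookkeeping (values being ``used up'', the invariant that the set of still-available $a$-values has size at least the number of remaining $a$-positions) only makes sense under that reading. Under that reading the statement is false, so no strengthening of the invariant can rescue the plan. Take $\Sigma=\{a,b\}$ and $v=(ab)^{N}$ with $N\geq 9$, where the value $1$ occurs at all but four $a$-positions and at all but four $b$-positions, and eight further values are used once each so that $|V_v(a)|=|V_v(b)|=5=|\Sigma|+3$. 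The hypothesis holds, yet $1$ occupies $2N-8>N$ of the $2N$ positions, more than any independent set of positions can hold, so \emph{every} rearrangement of the multiset (within label classes or not) places two adjacent occurrences of $1$. Note also that your invariant already fails before the first step whenever some value repeats, since then the number of $a$-positions strictly exceeds $|V_v(a)|$; and if you instead count available values with multiplicities, the invariant becomes vacuous and the counterexample above still applies.

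What the lemma actually asserts is only that $\sfProj(w)=\sfProj(v)$ and $V_w(a)=V_v(a)$ \emph{as sets}; multiplicities are allowed to change, and this freedom is precisely what makes the statement true and what any correct proof (including the cited one) must exploit. With that reading a greedy in your style does work: at an $a$-position choose any value of $V_v(a)$ that differs from the value just placed on the left, while reserving enough of the remaining $a$-positions to place each not-yet-used element of $V_v(a)$ at least once; neighbours forbid only a bounded number of values, the reservation costs only a little slack, and $|V_v(a)|\geq|\Sigma|+3$ absorbs both, including the corner case where a last mandatory value would clash with its neighbour. Your block decomposition suffers from the same misreading: a block of $k$ consecutive $a$-positions does not need $k$ distinct values (two alternating values suffice), and under the multiset reading $k$ distinct values need not even be available.
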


What this lemma tells us is that when 
the number of data values in found in $a$-positions is big enough,
for each $a \in \Sigma$, then
to solve {\sc SAT-locally-different},
it is sufficient to solve {\sc Omega-SAT-ADC}.
Then, Lemma~\ref{l: swapping data} allows us to rearrange 
the data values in the solution of {\sc Omega-SAT-ADC} to be locally different.

In the rest of this section, the symbol $\varepsilon$ denotes the constant $|\Sigma|+3$.
The main idea follows roughly as the one in the previous section,
with the notable exception that for an data $\omega$-word $w$,
we divide the non-empty subsets $S \subseteq \Sigma$ into four categories:
\begin{itemize}
\item
$\SS_{0}(w) = \{S \mid [S]_{w} = \emptyset\}$.
\item
$\SS_{\tsffin}^{< \varepsilon}(w) = 
\{S \mid [S]_{w} \ \mbox{is a finite set of cardinality} \ < \varepsilon  \}$.
\item
$\SS_{\tsffin}^{\geq \varepsilon}(w) = 
\{S \mid [S]_{w} \ \mbox{is a finite set of cardinality} \ \geq \varepsilon  \}$.
\item
$\SS_{\infty}(w) = \{S \mid [S]_{w} \ \mbox{is an infinite set}  \}$.
\end{itemize}
Note that in an data $\omega$-word $w$,
for $a \in S$ and $S \in \SS_{\tsffin}^{\geq \varepsilon}(w)$,
then $V_w(a)\geq \varepsilon$.
This will allow us to apply Lemma~\ref{l: swapping data},
for $V_w(a)$, where $a \in S$ and $S \in \SS_{\infty}^{\geq \varepsilon}(w)$.
On the other hand, the data values in the sets $[S]_w$,
where $S \in \SS_{\tsffin}^{< \varepsilon}(w)$
can be regarded as fixed constants,
thus, can be embedded as part of the input alphabet.
This is our main idea to solve {\sc SAT-locally-different}.

The details are as follows.
Given an input $(\AA,\CC)$,
our algorithm does the following.
\begin{enumerate}
\item
Guess a partition $\SS_0,\SS_{\tsffin}^{< \varepsilon}, 
\SS_{\tsffin}^{\geq \varepsilon},\SS_{\infty}$ of the sets $2^Q - \{\emptyset\}$
as in the algorithm in Subsection~\ref{ss: algorithm diamond aut}.
\\
That is, it respects the following conditions.
\begin{itemize}
\item[C1.]
If the inclusion-constraint $V(a) \subseteq \bigcup_{b\in R} V(b)$ is in $\CC$,
then all the sets $S$, where $a\in S$ and $S \cap R =\emptyset$, are in $\SS_0$.
\item[C2.]
If the denial-constraint $V(a)\cap V(b) = \emptyset$ is in $\CC$,
then all the sets $S$, which contains both $a$ and $b$, are in $\SS_0$.
\end{itemize}

\item
Then, for each $S \in \SS_{\tsffin}^{< \varepsilon}$,
we further guess a non-zero constant $K_S < \varepsilon$
and fix a set $\Gamma_S$ of $K_S$ number of constants.
Define
$$
\Sigma(\SS_{\tsffin}^{< \varepsilon})
= 
\{ (a,d) \mid a \in S \ \mbox{and} \ d \in \Gamma_S \ \mbox{where} 
\ S \in \SS_{\tsffin}^{< \varepsilon}\}.
$$
The intention is that we only need to consider the data $\omega$-words $w$
in which $[S]_{w} = \Gamma_S$, for each $S \in \SS_{\tsffin}^{<\varepsilon}$.
\item
Let $\Sigma(\SS_{\infty}) = \{(a,S) \mid a \in S \ \mbox{and} \ S \in \SS_{\infty}\}$.
Construct the new alphabet $\hat{\Sigma}$, where
$$
\hat{\Sigma} = \Sigma \:\cup\: \Sigma(\SS_{\tsffin}^{< \varepsilon}) \:\cup\: \Sigma(\SS_{\infty}),
$$
and the new transition system $\tilde{\MM} = \langle \tilde{Q}, \tilde{\mu}\rangle$ 
is defined as:
\begin{eqnarray*}
\hat{Q} & = & Q 
\\
\hat{\mu} & = & \mu \cup
\left\{
\begin{array}{c}
\{(p,(a,S),q) \mid (p,a,q) \in \mu \ \mbox{and} \ (a,S) \in \SS(\infty)\}
\\
\cup
\\
\{(p,(a,d),q) \mid (p,a,q) \in \mu \ \mbox{and} \ (a,d) \in \Sigma(\SS_{\tsffin}^{< \varepsilon}) \} 
\end{array}
\right.
\end{eqnarray*}

\item
Non-deterministically choose one state $q \in \tilde{Q}$.

\item
Construct a Presburger automaton $(\tilde{\AA}_{\tsffin},\varphi)$
as follows.
\begin{enumerate}
\item
The automaton $\tilde{\AA_{\tsffin}}$ is $\MM_{q_0}^{\{q\}}$
intersect with an automaton that checks the property:
\begin{itemize}
\item
If two symbols $(a,d_1),(b,d_2) \in \Sigma(\SS_{\tsffin}^{< \varepsilon})$
appear in two consecutive positions, then $d_1 \neq d_2$.
\item
If the key-constraint $V(a)\mapsto a \in \CC$,
then the symbol $(a,d) \in \Sigma(\SS_{\tsffin}^{< \varepsilon})$.
\end{itemize}
\item
The Presburger formula $\varphi$ is defined as follows.
Let $S_1,\ldots,S_{m}$ be the enumeration of 
non-empty subsets of $Q$, where $m = 2^{|Q|}-1$.  

The formula $\varphi$ is of the form 
$\exists z_{S_1} \; \cdots \; \exists z_{S_m}\ \psi$, 
where $\psi$ is the following quantifier-free formula:
\begin{eqnarray*}
 & & \qquad\bigwedge_{a \in \Sigma} \qquad
\Big(x_{a} \geq \sum_{S \ni a} z_S\Big)
\\
 & \wedge &
\bigwedge_{S \in \sSS_0 \cup \sSS_{\tsffin}^{< \varepsilon} \cup \sSS_{\infty}} z_S = 0 
\quad \wedge \quad
\bigwedge_{S \in \sSS_{\tsffin}^{\geq \varepsilon}} z_S \geq 1
\\
& \wedge &
\quad\bigwedge_{V(a)\mapsto a \in \sCC} \quad
\Big( x_{a} = \sum_{a \in S} z_S \Big)
\end{eqnarray*}
\end{enumerate}
\item
Construct a B\"uchi automaton $\tilde{\AA}$ as follows.
\\
The B\"uchi automaton $\tilde{\AA}$ is
simply the intersection of $\tilde{\MM}_{q}^{F}$
with the automaton that checks the following condition.
\begin{enumerate}
\item
If two symbols $(a,d_1),(b,d_2) \in \Sigma(\SS_{\tsffin}^{< \varepsilon})$
appear in two consecutive positions, then $d_1 \neq d_2$.
\item
Each $(a,S) \in \Sigma(\SS_{\infty})$
appears infinitely many times.
\item
If the key-constraint $V(a) \mapsto a \in \CC$,
then the symbols $a$ and $(a,d) \in \Sigma \times \Gamma_S$,
for some $S \in \SS_{\tsffin}^{< \varepsilon}$ do not appear. 
\end{enumerate}

\item
Test the emptiness of $\LL(\tilde{\AA}_{\tsffin},\varphi)$ and $\LL(\tilde{\AA})$.
\\
Then, $\LL(\AA,\CC)\neq \emptyset$ if and only if $\LL(\tilde{\AA}_{\tsffin},\varphi) \neq \emptyset$
and $\LL(\AA)\neq \emptyset$.
\end{enumerate}
The sizes of the automaton $\tilde{\AA}_{\tsffin}$, 
the formula $\varphi$ and the B\"uchi automaton $\tilde{\AA}$
are all exponential in the size of $(\AA,\CC)$,
thus, establishing the $\sfNEXPTIME$ upper bound for {\sc SAT-locally-different}.
The proof of correctness is similar to the proofs of the Claims~\ref{cl: finite to omega}
and~\ref{cl: omega to finite}.
Lemma~\ref{l: swapping data} ensures us that
we get a locally different data $\omega$-words.
The constant data values from $\Sigma(\SS_{\tsffin}^{< \varepsilon})$
are already ensured by the automata $\tilde{\AA}_{\tsffin}$ and $\tilde{\AA}$ that 
each of them does not appear in two consecutive positions.

\subsection{The algorithm for {\sc Omega-SAT-zonal-automata}}
\label{app: ss: zonal automata}

Now we explain how the algorithm for {\sc Omega-SAT-locally-different}
can be adapted for {\sc Omega-SAT-zonal-automata}.
It works as follows.

Given a zonal automaton $\AA$
and a collection $\CC$ of data-constraints over the alphabet $2^{\Sigma}$,
the algorithm does the following.
It guesses if there exists a zonal word with infinitely many zones.
If there is one, then  
the algorithm for {\sc Omega-SAT-locally-different}
can be adapted in a straightforward manner.
Otherwise, it does the following.
Let $\MM=\langle Q, \mu \rangle$ be the transition system of $\AA$,
where $\AA = \MM_{q_0}^F$.
\begin{enumerate}
\item
Guess a state $q \in Q$.
\item
The presburger automaton $(\tilde{\AA}_{\tsffin},\varphi)$
is $\tilde{\AA}_{\tsffin} = \MM_{q_0}^{\{q\}}$ over the alphabet $\Sigma \cup 2^{\Sigma}$ and
the formula $\varphi$ can be constructed like 
in Step~(5) of the algorithm in Subsection~\ref{app: ss: locally different},
but over the alphabet $2^{\Sigma}$.
\item
The B\"uchi automaton $\tilde{\AA}$ is simply $\tilde{\MM}_{q}^F$
intersects with an automaton that checks that
the symbols from $2^{\Sigma}$ does not appear.
\end{enumerate}
The intuition is that
since there are only finitely many number zones,
all the zones and its data-constraints
are taken care by the Presburger automaton $(\tilde{\AA}_{\tsffin},\varphi)$.
The B\"uchi automaton $\tilde{\AA}$
simply makes sure that the last zone
has the property desired by the original B\"uchi automaton $\AA$.

\leaveout{
\section{Normal Form for $\sfEMSO^2(+1, \sim)$}
\label{app: s: normal form fo2}

We show how every $\sfEMSO^2(+1, \sim)$ sentence $\varphi$ 
over data $\omega$-words can be translated to an equivalent formula $\varphi'$ 
over profiled $\omega$-word in the normal form given in subsection \ref{sec:normal}. 
The translation procedure is an adaption of the procedure described in  \cite{luc-pods06-jacm}. 
For the sake of completeness we describe the translation in detail.

We first introduce some conventions. 
For simplicity we will distinguish between profile predicates and the other unary predicates 
(the labeling predicates and the predicates resulting from the existential quantification over sets) 
of a profiled omega data word; e.g. for a profiled omega data word
 $$(a_1,(L_1,R_1)),(a_2,(L_2,R_2)),\ldots
\in (\Sigma \times \{\ast,\top,\bot\}\times \{\ast,\top,\bot\})^{\omega}$$ we assume that for every position $i$ the predicates $a_i$ and $P_{\{L_i,R_i\}}$ hold. For the sake of legibility of formulas we will use the symbol $S$ (successor) instead of $+1$ within formulas  In the following, $\alpha$ and $\beta$ denote types of positions, that is, conjunctions of labeling predicates, profile predicates and the predicates from the existential second-order quantification and their negations. Note that $\True$ and $\False$ are types, too, corresponding, respectively, to the empty conjunction and, for example, $a(i) \wedge \neg a(i)$.

First, we remember the definition of the normal form.

\begin{definition}
A sentence of  $\sfEMSO^2(+1, \sim)$ is in data normal form if its core is a conjunction of formulas of the following kinds:
\begin{enumerate}
\item $\sfFO^2(+1)$ formulas over profiled $\omega$-words
\item $\forall x \forall y [(x \sim y \wedge \alpha(x) \wedge \alpha(y))\rightarrow x = y]$
\item $\forall x \exists y [\alpha(x) \rightarrow (x \sim y \wedge \beta(y))]$
\end{enumerate}
 \end{definition}

From the above definition it immediately follows that sentences in data normal form are closed under conjunction. This fact will be useful for describing the translation procedure in a modular fashion.


As a first step we show that every $\sfEMSO^2(+1, \sim)$ sentence over profiled data $\omega$-words can be transfomed into an intermediate normal form. 

\begin{definition}
A sentence of $\sfEMSO^2(+1, \sim)$ is said to be in \textit{intermediate normal form} if its core is a conjunction of formulas of the following two kinds:
\begin{enumerate}
\item 
$\forall x \forall y [(\alpha(x) \wedge \beta(y) \wedge  \delta(x, y)) \rightarrow \textit{dist}_{\leq 1}(x, y)]$ 
\item 
$\forall x \exists y [\alpha (x) \rightarrow  (\beta(y) \wedge \delta(x, y) \wedge \xi(x, y))]$
where 
\begin{itemize}
\item[-] 
$\alpha, \beta$ are types, 
\item[-] 
$\textit{dist}_{\leq 1}(x, y)$ is the disjunction 
$S(x, y) \vee S(y, x) \vee x = y$
\item[-] 
$\delta(x, y)$ is either $x \sim y$ or $x \nsim y$, and 
\item[-] 
$\xi(x, y)$ is one of $S(x, y)$, $S(y, x)$, $x=y$, or $\neg \textit{dist}_{\leq 1}(x, y)$.
\end{itemize}
\end{enumerate}
\end{definition}

\begin{lemma}
Every $\sfEMSO^2(+1, \sim)$ sentence over profiled $\omega$-words can be transformed in exponential time into an equivalent sentence in intermediate normal form (of exponential size, with at most exponentially many new unary predicates).
\end{lemma}

\begin{proof}
The proof proceeds in three steps.
\newline
\newline
\textit{Step 1}. In the first step we transform the $\sfEMSO^2(+1, \sim)$ sentence to the well-known Scott Normal Form 
(see~\cite{graedel-otto-fo2}). 

Each $\sfEMSO^2(+1, \sim)$ sentence is equivalent (with a linear blowup) to one where the core is of the form
$$\forall x \forall y \ \chi \wedge \bigwedge_{i} \forall x \exists y\ \chi_i,$$ 
where $\chi$ and each $\chi_i$ are quantifier-free. 
\newline
\newline
\textit{Step 2}. In the second step we show that the formula $\forall x \forall y\ \chi$ can be replaced by a formula 
$$\exists R_1\ldots \exists R_m [\bigwedge_{i} \theta_i \wedge \bigwedge_{i = 1}^2 \forall x \exists y \ \rho_i]$$
where the $\chi_i$ are again quantifier-free and each $\theta_i$ is of the form (a) of the intermediate normal form. Moreover, the number of $\theta_i$ is at most exponential.
To this end, we first rewrite $\forall x \forall y\ \chi$ into the following form:
\begin{align*}
\forall x \forall  y & [(\textit{dist}_{>1}(x, y) \rightarrow \psi_{>1}(x, y)) \\
&\wedge (S(x, y) \rightarrow \psi_{+1}(x, y)) \\
&\wedge (x = y \rightarrow \psi_=(x, y))],
\end{align*}
where $\textit{dist}_{>1}(x, y)$ is $\neg dist_{\leq1}(x, y)$ and $\psi_{>1}(x, y)$, $\psi_{+1}(x, y)$ and $\psi_{=}(x, y)$ are quantifier-free boolean combinations of $\sim$ and the unary predicates. Their size is bounded by the size of $\chi$. Over infinite words this is equivalent to:
\begin{align*}
\forall x \forall y & [(\textit{dist}_{>1}(x, y) \rightarrow \psi_{>1}(x, y)) &\\
& \wedge \forall x \exists y (S(x, y) \wedge  \psi_{+1}(x, y)) &(\rho_1) \\ 
& \wedge \forall x \exists y (x = y \wedge  \psi_=(x,y))] &(\rho_2)
\end{align*}

We denote the latter two conjuncts by $\rho_1$, $\rho_2$. It only remains to deal with the first conjunct
$$\psi = \forall x \forall y (\textit{dist}_{>1}(x, y) \rightarrow \psi_{>1}(x, y)).$$ 
By turning the $\neg \psi_{>1}$ into DNF (with an exponential blowup), we can rewrite $\psi_{>1}$
into a conjunction of formulas of the form 
$$\neg (\alpha(x) \wedge \beta(y) \wedge \delta(x, y)),$$
where $\alpha$ and $\beta$ are types and $\delta$ is either $x \sim y$ or $x \nsim y$. Since conjunction distributes over implication and universal quantification, $\psi$ becomes a conjunction of formulas
of the form
$$\forall x \forall y [\textit{dist}_{>1}(x, y) \rightarrow \neg(\alpha(x) \wedge \beta(y) \wedge \delta(x, y))],$$ 
which is equivalent to a formula of kind (a) in the definition of intermediate normal form. 
\newline
\newline
\textit{Step 3.} In the last step, we show that each formula $\forall x \exists y\ \chi$ can be translated into an equivalent formula $\exists R_1' \ldots \exists R_n' \bigwedge_i \theta_i$ in which each $\theta_i$ is of kind (b) in the definition of intermediate normal form. Moreover, the number of $\theta_i$ and the number $n$ of additional predicates are both at most exponential.

First, $\chi$ can be written as a disjunction of an exponential number of formulas $\varphi_j$ of the form
$$\varphi_j =\alpha_j(x) \wedge \beta_j(y)\wedge \delta_j(x,y)\wedge \xi_j(x,y),$$
where the $\alpha_j$, $\beta_j$, $\delta_j$, $\xi_j$ are of corresponding forms as in (b) in the definition of intermediate normal form. We now eliminate the disjunction. To this end, we add for each disjunct a new unary predicate $R_{\chi,j}$ with the intended meaning that, if $R_{\chi,j}$ holds at a position $x$, then there is a $y$ such that $\varphi_j$ holds.

More formally,  
we existentially quantify over the $R_{\chi,j}$ predicates and enforce that
for each $x$ and $\chi$ at least one $R_{\chi, j}$ holds. This is
done using a formula of kind (b): we write that no position has all $R_{\chi, j}$
formulas false at the same time. Formally, we write 
$\forall x \exists y [\bigwedge_j \neg R_{\chi, j} (x) \rightarrow \False]$. We now require that for all $x$ and $\chi$ the type $\alpha_j$ holds whenever $R_{\chi, j}$ holds. 
$$\bigwedge_j \forall x \exists y [R_{\chi,j}(x) \rightarrow (\alpha_j(y) \wedge x \sim y \wedge x=y)]$$
Finally we require:
$$\bigwedge_j \forall x \exists y [R_{\chi, j} (x) \rightarrow (\beta_j (y) \wedge \delta_j (x, y) \wedge 
\xi_j (x, y))].$$

By putting together the obtained formulas, we get a sentence in intermediate
normal form.
\end{proof}

Next, we describe the translation of formulas of intermediate data normal form to formulas of data normal form.
\begin{proposition}
Every sentence of $\sfEMSO^2(\sim,+1)$ over profiled words in intermediate normal form can be rewritten in polynomial time into an equivalent one in data normal form.
\end{proposition} 

\begin{proof}
The idea is to add new existentially quantified predicates which are used to mark positions with certain properties and which then allow to express the desired properties with simple formulas. Below, we denote as \textit{$P$-position} (\textit{$\alpha$-position}) a position where predicate $P$ (type $\alpha$) holds. 
Moreover, note that $\sim$ is an equivalence relation.
The term {\em class} in this proof is simply an equivalence class in the relation $\sim$.

A \textit{$P$-class} (\textit{$\alpha$-class}) denotes a class containing a $P$-position ($\alpha$-position).
Analogously a \textit{$P$-zone} (\textit{$\alpha$-zone}) denotes a zone containing a $P$-position ($\alpha$-position).
We begin with some auxiliary properties:
\begin{enumerate}
\item ''No class contains both $P$-positions and $Q$-positions.'' 
\newline
\newline
We add a new predicate $R$. Using a simple formula of kind (b), we require:
\newline
''No class contains two $R$-positions.''
$$\forall x \forall y [(x \sim y \wedge R(x) \wedge R(y)) \rightarrow x = y]$$
Then, using two simple formulas of kind (c), we check:
\newline
''Each $P$-class contains a position with 
$R \wedge P \wedge \neg Q$, and each $Q$-class contains a position with $R \wedge \neg P \wedge Q$''
$$\forall x \exists y [P(x) \rightarrow (x \sim y \wedge R(y) \wedge P(y) \wedge \neg Q(y))]  $$
$$\forall x \exists y [Q(x) \rightarrow (x \sim y \wedge R(y) \wedge \neg P(y) \wedge Q(y))]  $$
\newline
\item ''Predicate $P$ holds in all positions that belong to a $Q$-class.''
\newline
\newline
By a formula of kind (1) we first check that no class contains both $P$ and $\neg P$. Then we add a 
formula of kind (c): 
\newline
''Each $P$-class has a $Q$-position.'' 
$$ \forall x \exists y [P(x) \rightarrow (x \sim y \wedge Q(y) )].$$
If we want to express that predicate $P$ holds \textit{exactly} in all positions that belong to a $Q$-class we only need to add the data-blind property:
\newline
''Each $Q$-position is a $P$-position.''
$$ \forall x [Q(x) \rightarrow P(x)] $$
\newline
\item ''Predicate $P$ marks the positions of exactly one class.'' 
\newline
\newline
We add a new predicate $Q$. We use (2) to say that $P$ holds in exactly the positions that belong to a $Q$-class. It remains to add the data-blind property:
\newline
''There is exactly one $Q$-position.'' 
$$ \exists x\ Q(x) \wedge \forall x \forall y [(Q(x) \wedge Q(y)) \rightarrow x = y)]$$
\newline
\item ''Predicate $P$ holds exactly in the positions that belong to a $Q$-zone''. 
\newline
\newline
First we verify:
\newline
''Each zone has only $P$-positions, or only $\neg P$-positions:
$$ \forall x \forall y [(P(x) \wedge (P_{\{\ast,\top\}}(x) \vee P_{\{\bot,\top\}}(x) \vee P_{\{\top,\top\}}(x)) \wedge S(x,y)) \rightarrow  P(y)]$$
$$ \forall x \forall y [(\neg P(x) \wedge (P_{\{\ast,\top\}}(x) \vee P_{\{\bot,\top\}}(x) \vee P_{\{\top,\top\}}(x)) \wedge S(x,y)) \rightarrow  \neg P(y)]$$
Now we have to express:
 \newline
 ''A zone is a $P$-zone if and only if it has a $Q$-position.''
 \newline
 The main difficulty here is to require that in \textit{every} sequence of $P$-positions there exists \textit{at least one} $Q$-position. 
 To do this we use the auxiliary predicates $M$ and $I$. The idea is to require that every finite zone has a nonempty suffix such that no position is marked by $M$. Then we express that in the first position of these sequences $Q$ holds. Predicate $I$ is used to deal with the infinite zone (note that in an omega data word there can be at most one infinite zone).   
\newline
''Every $\neg M$-position which is not the last position of its zone is followed by a $\neg M$-position.''
$$\forall x \forall y [(\neg M(x) \wedge (P_{\{\ast,\top\}}(x) \vee P_{\{\bot,\top\}}(x)\vee P_{\{\top,\top\}}(x)) \wedge S(x,y)) \rightarrow \neg M(y)]$$
''The last position of a zone is a $\neg M$-position.''
$$\forall x [ (P_{\{\ast,\bot\}}(x) \vee P_{\{\top,\bot\}}(x) \vee P_{\{\bot,\bot\}}(x))  \rightarrow \neg M(x)]$$
''The first $(\neg M \wedge P)$-position'' of a zone is a $Q$-position.''
$$\forall x [(P(x) \wedge \neg M(x) \wedge (\neg \exists y S(y,x) \vee P_{\{\bot, \top\}}(x) \vee P_{\{\bot, \bot\}}(x) \vee \exists y (S(y,x) \wedge M(y))))\rightarrow Q(x)]$$
''In $(\neg P)$-zones there occures no $Q$-position''
$$ \forall x[ \neg P(x) \rightarrow \neg Q(x)]$$
''Predicate I marks only the infinite zone.''
$$\forall x [I(x) \rightarrow ((P_{\{\ast, \top\}}(x) \vee P_{\{\bot, \top\}}(x) \vee P_{\{\top, \top\}}(x))\wedge \exists y (S(x,y) \wedge I(y)))]$$
''If there exists an $I$-zone and the $I$-zone is also a $P$-zone then it is also a $Q$-zone.''
$$ \exists x \exists y[ (I(x) \wedge P(x)) \rightarrow (I(y) \wedge Q(y))]$$
\newline
\item ''Each class has at most one $P$-zone.'' 
\newline
\newline
First, we use (4) to mark by a new predicate $Q$ each position from a $P$-zone. Note that each zone has exactly one position which does not have a left neighbour in the same class. Such a position, called here the \textit{zone root}, does not have any left neighbour or has one of the profile predicates 
$P_{\{\bot, \top\}}$ or $P_{\{\bot, \bot\}}$. Therefore the property in this item can be expressed with a formula of kind (b):
\newline
''Each class contains at most one zone root with predicate $Q$.''
\begin{align*}
\forall x \forall y [(x \sim y \wedge & Q(x) \wedge (P_{\{\ast, \top\}}(x) \vee P_{\{\ast, \bot\}}(x)\vee P_{\{\bot, \top\}}(x) \vee P_{\{\bot, \bot\}}(x)) \wedge \\
& Q(y) \wedge (P_{\{\ast, \top\}}(x) \vee P_{\{\ast, \bot\}}(x)\vee P_{\{\bot, \top\}}(x) \vee P_{\{\bot, \bot\}}(x))) \rightarrow x =y)] 
\end{align*}
\end{enumerate}
We show now that every conjunct $\theta \in \{\theta_1, \ldots ,\theta_n\}$ of a formula
$$\varphi = \exists R_1\ldots R_m (\theta_1 \wedge \ldots \wedge \theta_n)$$
in intermediate normal form can be transformed into data normal form with a linear blowup. Since formulas in data normal form are closed under conjunction without any additional cost, we obtain a linear translation from intermediate normal form to data normal form. The translation of $\theta$ into data normal form is by case analysis. By the definition of the intermediate normal form, $\theta$ may be in one of the two forms (a) or (b).
\begin{itemize}
\item[-] $\theta = \forall x \forall y [(\alpha(x) \wedge  \beta(y) \wedge  \delta(x, y)) \rightarrow  dist_{\leq1}(x, y)]$:
\begin{itemize}
\item[-]  $\delta(x, y)$ is $x \sim y$: In this case, $\theta$ says that every two occurrences of $\alpha$ and $\beta$ in the same class must be adjacent or identical. Clearly, $\theta$ implies that, if a class contains both $\alpha$-positions and $\beta$-positions, then all such positions must be in the same zone. We add a new predicate $P$. Using (2), we ensure that $P$ marks all positions from classes containing $\alpha$- and $\beta$- positions (to guarantee that the marked classes contain $\alpha$- \textit{and} $\beta$-positions one can easily use a predicate $P'$ for the $\alpha$-classes and a predicate $P''$ for the $\beta$-classes and can take the intersection).
We then use (5) to express that each $P$-class has at most one zone with $\alpha$ or $\beta$. An $\sfFO^2(+1)$ formula can check that in every $P$-zone all $\alpha$- and $\beta$-positions are adjacent or identical. 
\item[-]  $\delta(x, y)$ is $x\nsim y$: In this case, $\theta$ says that every occurrences of $\alpha$ and $\beta$ in different classes must be adjacent. It's clear: if $i$ and $j$ are two positions in a word, then at most one position is adjacent to both. Therefore, if 
there are two $\alpha$-classes then there is at most one $\beta$-position outside these two classes. In particular there are at most three classes containing either $\alpha$- or $\beta$-positions. Combining (1), (2), and (3) above, we may assume that these are marked by predicates $R_1$, $R_2$ and $R_3$. 
In the presence of these predicates, $\theta$ boils down to an $\sfFO^2(+1)$ formula. 
Otherwise, if there is only one $\alpha$-class, then we mark it using an extra predicate $R$ as in (3) and an $\sfFO^2(+1)$ formula checks that every $(R \wedge \alpha)$-position and $(\neg R \wedge \beta)$-position are adjacent.
\end{itemize}
\item[-] $\theta = \forall x\exists y[ \alpha(x) \rightarrow (\beta(y) \wedge \delta(x, y) \wedge \xi(x, y))]$:
\newline
If $\xi(x, y)$ implies $dist_{\leq1}(x, y)$, then $\theta$ can be checked by an $\sfFO^2(+1)$
formula, which uses profiles to check whether adjacent positions have the same data value. 
Thus, we assume that $\xi(x, y)$ is $dist_{>1}(x, y)$.
\begin{itemize}
\item[-] $\delta(x, y)$ is $x \sim y$: This means that each $\alpha$-position needs a non-adjacent $\beta$-position in the same class. First, we have to require that each class with a $\alpha$-position also has a $\beta$-position, this is a simple formula of kind (c). Using (4), we can mark each $\beta$-zone with a predicate $R$. Using (5), we can mark each class with at most one $\beta$-zone by a predicate $S$. If an $\alpha$-class has two $\beta$-zones (which is equivalent to not having S), then $\theta$ holds for all positions in the class. The remaining case to consider is when a class has one $\beta$-zone; in this case all $\alpha$-positions in this zone must have a nonadjacent $\beta$-position in the same zone. Putting all this together, the property $\theta$ boils down to: ''each $\alpha \wedge R \wedge S$-position has a non-adjacent $\beta$-position in the same zone''. This is can be verified by an $\sfFO^2(+1)$ formula that uses the profiles for checking the zones.
\item[-] $(x, y)$ is $x \nsim y$: This means that each $\alpha$-position needs a non-adjacent $\beta$-position in a different class. Using an additional predicate, it can be checked whether $\beta$ occurs in exactly one class (properties (2) and (3)). If this is the case, then $\theta$ translates to an $\sfFO^2(+1)$ formula, plus a formula saying that no class has both $\alpha$- and $\beta$-positions (property (1)). Otherwise, let $i$, $j$ be $\beta$-positions in different classes. As we noted before, there is at most one position $x$ adjacent to both $i$ and $j$ . All $\alpha$-positions in classes other than those 
of $x$, $y$ already satisfy $\theta$. The classes of $x$, $y$ can be marked with two additional predicates and an $\sfFO^2(+1)$ formula can be used to check $\beta$ for the $\alpha$-positions belonging to the classes of $x$, $y$.
\end{itemize}
\end{itemize}
\end{proof}

}

\section{The formal semantics of $\completeLTL$}
\label{app: s: semantics complete LTL}

Formally the semantics of $\completeLTL$ is given as follows.
Let $w = {a_1 \choose d_1} {a_2 \choose d_2}\cdots$
and $i \in \{1,2,\ldots\}$.
\begin{itemize}
\item
$w,i \models \True$ and $w,i \not\models \False$;
\item
$w,i \models a$ if and only if $a_i = a$;
\item
$w,i \models \varphi \vee \psi$ if and only if 
$w,i \models \varphi$ or $w,i \models \psi$;
\item
$w,i \models \neg \varphi$ if and only if 
$w,i \models \varphi$ is not true;
\item
$w,i \models \ttX \varphi$ if and only if 
$i+1 \leq n$ and $w,i+1 \models \varphi$;  
\item
$w,i \models \ttX_{\sim} \varphi$ if and only if 
$i+1 \leq n$ and $d_{i} = d_{i+1}$ and $w,i+1 \models \varphi$;  
\item
$w,i \models \ttX_{\nsim} \varphi$ if and only if 
$i+1 \leq n$ and $d_{i} \neq d_{i+1}$ and $w,i+1 \models \varphi$;  
\item
$w,i \models \varphi \ttU \psi$ if and only if 
there exists $j \geq i$ such that
for all $i' = i,\ldots,j-1$,
$w,i' \models \varphi $ and $w,j \models \psi$; 
\item
$w,i \models \varphi \ttR \psi$ if and only if 
if there exists $j \geq i$ such that
$w,j \not\models \psi$, then
there exists $i' \in \{ i,\ldots,j-1 \}$,
$w,i' \models \varphi $; 
\item
$w,i \models \Diamond^w \varphi$ if and only if
there exists $j$ such that
$d_j = d_i$ and $w,j \models \varphi$;
\item
$w,i \models \Diamond^s \varphi$ if and only if
there exists $j\neq i$ such that
$d_j = d_i$ and $w,j \models \varphi$.
\end{itemize}
A $\completeLTL$ formula $\varphi$ defines a data language via
$L(\varphi) = \{w \mid w,1 \models \varphi\}$.

\section{Proofs of the upper bounds in Theorem~\ref{t: complexity LTL}}
\label{app: s: proof complexity LTL}

We first establish a normal form for formula in $\completeLTL$.
A formula $\varphi$ is in {\em normal form},
if every subformula in $\varphi$ that starts with a negation, say $\neg \psi$,
then $\psi$ is either $a \in \Sigma$, or
$\Diamond^s \psi'$, or $\Diamond^w \psi'$, for some $\psi'$.

\begin{proposition}
\label{p: normal LTL}
Every formula $\varphi$ in $\completeLTL$
can be converted to its equivalent normal form $\widetilde{\varphi}$ in linear time.
\end{proposition}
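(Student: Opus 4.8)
The plan is to carry out the familiar ``negation normal form'' transformation for temporal logic, adapted to the two data-aware operators. I will define a map $\psi \mapsto \widetilde{\psi}$ on $\completeLTL$ formulas by recursion on the structure, arrange that $\widetilde{\psi}$ is logically equivalent to $\psi$ and is in normal form, and then observe that the recursion only increases the formula size by a constant factor and can be carried out by a single bottom-up pass over the syntax tree, which yields the linear time bound. Since $\ttF$ and $\ttG$ are mere abbreviations, I first unfold $\ttF\varphi$ to $\True\,\ttU\,\varphi$ and $\ttG\varphi$ to $\False\,\ttR\,\varphi$, so that only the primitive connectives remain.

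On a formula whose outermost connective is not a negation, $\widetilde{\cdot}$ is simply homomorphic: $\widetilde{a}=a$, $\widetilde{\True}=\True$, $\widetilde{\False}=\False$, $\widetilde{\varphi \vee \psi}=\widetilde{\varphi}\vee\widetilde{\psi}$, $\widetilde{\varphi \wedge \psi}=\widetilde{\varphi}\wedge\widetilde{\psi}$, and likewise it is pushed through $\ttX$, $\ttU$, $\ttR$, $\ttX_{\sim}$, $\ttX_{\nsim}$, $\Diamond^{w}$ and $\Diamond^{s}$. On a negated formula $\neg\psi$ the map dispatches on the outermost connective of $\psi$ and applies the standard De Morgan and temporal dualities, always recursing on an argument that is strictly smaller in size: $\widetilde{\neg a}=\neg a$, $\widetilde{\neg\True}=\False$, $\widetilde{\neg\False}=\True$, $\widetilde{\neg\neg\varphi}=\widetilde{\varphi}$, $\widetilde{\neg(\varphi\vee\psi)}=\widetilde{\neg\varphi}\wedge\widetilde{\neg\psi}$, $\widetilde{\neg(\varphi\wedge\psi)}=\widetilde{\neg\varphi}\vee\widetilde{\neg\psi}$, $\widetilde{\neg\ttX\varphi}=\ttX\,\widetilde{\neg\varphi}$ (valid because over $\omega$-words every position has a successor, so $\ttX$ is self-dual), $\widetilde{\neg(\varphi\,\ttU\,\psi)}=\widetilde{\neg\varphi}\,\ttR\,\widetilde{\neg\psi}$ and, dually, $\widetilde{\neg(\varphi\,\ttR\,\psi)}=\widetilde{\neg\varphi}\,\ttU\,\widetilde{\neg\psi}$. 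For the two diamonds the negation is simply kept, which is exactly what the normal form permits: $\widetilde{\neg\Diamond^{w}\varphi}=\neg\Diamond^{w}\widetilde{\varphi}$ and $\widetilde{\neg\Diamond^{s}\varphi}=\neg\Diamond^{s}\widetilde{\varphi}$. Finally, for the data-next operators I will use
\[
\widetilde{\neg \ttX_{\sim}\varphi}\;=\;\ttX_{\nsim}\True \;\vee\; \ttX_{\sim}\,\widetilde{\neg\varphi},
\qquad
\widetilde{\neg \ttX_{\nsim}\varphi}\;=\;\ttX_{\sim}\True \;\vee\; \ttX_{\nsim}\,\widetilde{\neg\varphi}.
\]

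Correctness is then a routine simultaneous induction on the size of $\psi$: for each clause I check, against the semantics of $\completeLTL$ given in Appendix~\ref{app: s: semantics complete LTL}, that the right-hand side is equivalent to $\psi$ (resp.\ $\neg\psi$), and that it is in normal form provided the recursively produced subformulas are. The only clauses whose outermost symbol is a negation are $\widetilde{\neg a}$, $\widetilde{\neg\Diamond^{w}\varphi}$ and $\widetilde{\neg\Diamond^{s}\varphi}$, so $\widetilde{\psi}$ is indeed in normal form. For the complexity I will show, by the same induction, that $|\widetilde{\psi}|\le c\,|\psi|$ for a fixed constant $c$: every clause replaces each node of its argument by a bounded number of new nodes, the worst case being the $\ttX_{\sim}$/$\ttX_{\nsim}$ clauses, which prepend the constant-size piece $\ttX_{\nsim}\True\vee$ (resp.\ $\ttX_{\sim}\True\vee$); and both $\widetilde{\cdot}$ and its ``negated'' variant are computed in one pass over the syntax tree, hence in linear time.

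The only genuinely new point, and the step I expect to need the most care, is the treatment of $\ttX_{\sim}$ and $\ttX_{\nsim}$ under a negation. One cannot merely toggle $\ttX_{\sim}$ into $\ttX_{\nsim}$: the formula $\ttX_{\nsim}\,\widetilde{\neg\varphi}$ is strictly stronger than $\neg\ttX_{\sim}\varphi$, because at a position $i$ with $d_i \neq d_{i+1}$ the formula $\neg\ttX_{\sim}\varphi$ already holds no matter what happens at $i+1$, whereas $\ttX_{\nsim}\,\widetilde{\neg\varphi}$ additionally demands that $\widetilde{\neg\varphi}$ hold at $i+1$. Splitting on whether $d_i = d_{i+1}$ yields exactly the displayed disjunction: if $d_i\neq d_{i+1}$ the disjunct $\ttX_{\nsim}\True$ captures it, and if $d_i = d_{i+1}$ then $\neg\ttX_{\sim}\varphi$ reduces to ``$\varphi$ fails at $i+1$'', i.e.\ to $\ttX_{\sim}\widetilde{\neg\varphi}$. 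Here the restriction to $\omega$-words is used a second time, so that $\ttX_{\nsim}\True$ and $\ttX_{\sim}\True$ together cover all positions; over finite data words one would additionally have to guard the last position. Everything else is bookkeeping with well-known temporal dualities.
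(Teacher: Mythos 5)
Your proposal is correct and follows essentially the same route as the paper's proof: the same negation-normal-form recursion, with identical handling of the two nontrivial cases $\neg\ttX_{\sim}\varphi$ and $\neg\ttX_{\nsim}\varphi$ via the disjunctions $\ttX_{\nsim}\True \vee \ttX_{\sim}\widetilde{\neg\varphi}$ and $\ttX_{\sim}\True \vee \ttX_{\nsim}\widetilde{\neg\varphi}$, and negations retained only on atoms and on $\Diamond^{w}/\Diamond^{s}$. You are in fact slightly more careful than the paper, since you spell out the homomorphic recursion into non-negated formulas and the linear size/time accounting that the paper leaves implicit.
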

\begin{proof}
The construction $\widetilde{\varphi}$ is done inductively.
\begin{itemize}
\item 
If $\varphi$ does not start with a negation,
then $\widetilde{\varphi}$ is precisely $\varphi$.
\item 
If $\varphi$ is in the form $\neg \neg \psi$,
then $\widetilde{\varphi}$ is $\widetilde{\psi}$.
\item 
If $\varphi$ is in the form $\neg (\psi \vee \psi')$,
then $\widetilde{\varphi}$ is $\widetilde{\neg\psi} \wedge \widetilde{\neg\psi'}$.
\item 
If $\varphi$ is in the form $\neg (\psi \wedge \psi')$,
then $\widetilde{\varphi}$ is $\widetilde{\neg\psi} \vee \widetilde{\neg\psi'}$.
\item 
If $\varphi$ is in the form $\neg \ttX \psi$,
then $\widetilde{\varphi}$ is $\ttX \widetilde{\neg\psi}$.
\item 
If $\varphi$ is in the form $\neg (\psi \ttU \psi')$,
then $\widetilde{\varphi}$ is $\widetilde{\neg\psi} \ttR \widetilde{\neg\psi'}$.
\item 
If $\varphi$ is in the form $\neg (\psi \ttR \psi')$,
then $\widetilde{\varphi}$ is $\widetilde{\neg\psi} \ttU \widetilde{\neg\psi'}$.
\item
If $\varphi$ is in the form $\neg (\ttX_{\sim} \psi)$,
then $\widetilde{\varphi}$ is $(\ttX_{\nsim} \True) \vee \ttX_{\sim} \widetilde{\neg\psi}$.
\item
If $\varphi$ is in the form $\neg (\ttX_{\nsim} \psi)$,
then $\widetilde{\varphi}$ is $(\ttX_{\sim} \True) \vee \ttX_{\nsim} \widetilde{\neg\psi}$.
\item
If $\varphi$ is in the form $\neg \Diamond^{w} \psi$,
then $\widetilde{\varphi}$ is $\neg \Diamond^{w} \widetilde{\psi}$.
\item
If $\varphi$ is in the form $\neg \Diamond^{s} \psi$,
then $\widetilde{\varphi}$ is $\neg \Diamond^{s} \widetilde{\psi}$.
\end{itemize}
That $\varphi$ and $\widetilde{\varphi}$ are equivalent
is straightforward.
\end{proof}

\begin{remark}
It is straightforward from the construction of $\widetilde{\varphi}$,
that $\widetilde{\varphi}$ stay in the same class as $\varphi$.
That is,
\begin{itemize}
\item 
if $\varphi \in \weakdiamondLTL$, then $\widetilde{\varphi} \in \weakdiamondLTL$;
\item
if $\varphi \in \strongdiamondLTL$, then $\widetilde{\varphi} \in \strongdiamondLTL$; and
\item
if $\varphi \in \completeLTL$, then $\widetilde{\varphi} \in \completeLTL$.
\end{itemize}
\end{remark}

\subsection{The $\sfNEXPTIME$ upper bound for part~(1) of
Theorem~\ref{t: complexity LTL}}
\label{app: ss: upper bound of weak diamond LTL}

By Proposition~\ref{p: normal LTL},
we can assume that the input formula is always in normal form.
The proof of decidability itself is done by 
translating the input formula $\varphi \in \weakdiamondLTL$
to an equivalent ADC $(\AA_{\varphi},\CC_{\varphi})$.
The translation follows closely the classical translation
from standard LTL to B\"uchi automaton. 
(See, for example,~\cite{wolper-tutorial}.)
So we simply sketch it here. 
We recall the standard notion of the closure of the formula $\varphi$,
denoted by $\closure(\varphi)$.
\begin{itemize}
\item
$\varphi \in \closure(\varphi)$.
\item
$a \in \closure(\varphi)$, for each $a\in \Sigma$.
\item
If $\neg a \in \closure(\varphi)$,
then $\bigvee_{b \in \Sigma -\{a\}} b$.
\item
If $\varphi_1 \wedge \varphi_2 \in \closure(\varphi)$,
then $\varphi_1,\varphi_2 \in \closure(\varphi)$.
\item
If $\varphi_1 \vee \varphi_2 \in \closure(\varphi)$,
then $\varphi_1,\varphi_2 \in \closure(\varphi)$.
\item
If $\ttX \; \varphi_1 \in \closure(\varphi)$,
then $\varphi_1 \in \closure(\varphi)$.
\item
If $\varphi_1 \; \ttU \; \varphi_2 \in \closure(\varphi)$,
then $\varphi_1,\varphi_2 \in \closure(\varphi)$.
\item
If $\varphi_1 \; \ttR \; \varphi_2 \in \closure(\varphi)$,
then $\varphi_1,\varphi_2 \in \closure(\varphi)$.
\item
If $\Diamond^w \varphi_1 \in \closure(\varphi)$,
then $\varphi_1 \in \closure(\varphi)$.
\item
If $\neg \Diamond^w \varphi_1 \in \closure(\varphi)$,
then $\varphi_1 \in \closure(\varphi)$.
\end{itemize}
The standard construction of $\AA_{\varphi} = \langle Q, q_0, \mu, F\rangle$
will yield $Q \subseteq 2^{\sclosure(\varphi)}$, where $q \in Q$
if the conditions hold.
\begin{itemize}
\item[(S1)]
$\False \notin q$;
\item[(S2)]
$q \cap \Sigma$ is a singleton;
\item[(S3)]
if $\varphi_1 \in q$, then the normal form $\overline{\neg \varphi_1} \notin q$;
\item[(S4)]
if the normal form $\overline{\neg \varphi_1} \in q$, then $\varphi_1 \notin q$;
\item[(S5)]
if $\varphi_1 \wedge \varphi_2 \in q$, then $\varphi_1,\varphi_2 \in q$;
\item[(S6)]
if $\varphi_1 \vee \varphi_2 \in q$, then $\varphi_1\in q$ or $\varphi_2 \in q$.
\end{itemize}
Intuitively, the meaning of $\AA_{\varphi}$
is such that in every state $q$,
it takes care that every formula $q - \{\Diamond^w \psi \mid \psi \in \closure(\varphi)\}$ holds.
The construction of $q_0$, $\mu$ and $F$ are standard like in~\cite{wolper-tutorial}, thus, omitted.
The set $\CC$ of constraints will take care of the operator $\Diamond^w$.
It consists of the following.
\begin{enumerate}
\item
For every state $q$ that contains the sub-formula $\Diamond^w \psi$,
then $\CC_{\varphi}$ contains the constraints:
$$
V(q) \subseteq \bigvee_{\psi \in q'} V(q').
$$
\item
For every state $q$ that contains the sub-formula $\neg \Diamond^w \psi$,
then $\CC_{\varphi}$ contains the constraints:
$$
V(q) \cap V(q') = \emptyset,
$$
for all $q'$ that contains $\psi$.
\end{enumerate}
Now $\CC$ does not contain key-constraints.
The construction of $\AA_{\varphi}$ is already in $\sfEXPTIME$.
By $\sfNP$ upper bound in Theorem~\ref{t: emptiness diamond aut},
we get the $\sfNEXPTIME$ upper bound for
the satisfiability problem of $\weakdiamondLTL$.

\subsection{The 2-$\sfNEXPTIME$ upper bound for part~(2) of
Theorem~\ref{t: complexity LTL}}
\label{app: ss: upper bound of strong diamond LTL}

By Proposition~\ref{p: normal LTL},
we assume that the input formula $\varphi$ is in normal form.
Again, the proof of decidability is done by translating the input formula $\varphi \in \strongdiamondLTL$
to an equivalent ADC $(\AA_{\varphi},\CC_{\varphi})$.
However, we have to make a bit of modification because,
as explained in Example~\ref{eg: twice data values},
formulas such as $\ttG( a \to \Diamond^s a)$ cannot be directly translated to an ADC.

We apply the same trick as in Example~\ref{eg: twice data values}
to make a copy $\overline{\psi}$ of each formula $\psi$ in $\closure(\varphi)$.
The idea is that the copy $\overline{\psi}$
has exactly the same property as the formula $\psi$.
We denote by $\overline{\closure(\varphi)}$
the set of all such copies.

Then the ADC $(\AA_{\varphi},\CC_{\varphi})$ is defined 
over the alphabet $\Sigma \cup \overline{\Sigma}$ as follows.
The automaton $\AA_{\varphi} = \langle Q, q_0, \mu, F\rangle$ 
is such that $ Q \subseteq 2^{\sclosure(\varphi)\cup \overline{\sclosure(\varphi)}}$.
A state $q \in Q$ if in addition to Conditions~(S1)--(S6) above,
the following conditions hold.
\begin{itemize}
\item
both $\False$ and $\overline{\False}$ are not in $q$;
\item
$q \cap (\Sigma\cup \overline{\Sigma})$ is a singleton;
\item
if $\varphi_1 \in q$, then both $\neg \varphi_1$ and $\overline{\neg \varphi_1}$ are not in $q$;
\item
if $\neg \varphi_1 \in q$, then both $\varphi_1$ and $\overline{\varphi_1}$ are not in $q$;
\item
if $\varphi_1 \wedge \varphi_2 \in q$, then either
\begin{itemize}
\item
$\varphi_1,\varphi_2 \in q$, or
\item
$\overline{\varphi_1},\varphi_2 \in q$, or
\item
$\varphi_1,\overline{\varphi_2} \in q$, or
\item
$\overline{\varphi_1},\overline{\varphi_2} \in q$;
\end{itemize}

\item
if $\overline{\varphi_1 \wedge \varphi_2} \in q$, then either
\begin{itemize}
\item
$\varphi_1,\varphi_2 \in q$, or
\item
$\overline{\varphi_1},\varphi_2 \in q$, or
\item
$\varphi_1,\overline{\varphi_2} \in q$, or
\item
$\overline{\varphi_1},\overline{\varphi_2} \in q$;
\end{itemize}

\item
if $\varphi_1 \vee \varphi_2 \in q$, 
then one of $\varphi_1,\varphi_2,\overline{\varphi_1},\overline{\varphi_2}\in q$;
\item
if $\overline{\varphi_1 \vee \varphi_2} \in q$, 
then one of $\varphi_1,\varphi_2,\overline{\varphi_1},\overline{\varphi_2}\in q$;

\item
if $\overline{\varphi_1} \in q$, then $\varphi_1 \notin q$;
\item
if $\varphi_1 \in q$, then $\overline{\varphi_1} \in q$;

\item
if $\overline{\Diamond^s \varphi_1} \in q$,
then either $\Diamond^s \varphi \in q$, or $\Diamond^s \overline{\varphi_1} \in q$;

\item
if $\varphi_1$, then $\Diamond^s \varphi_1 \notin q$;
\item
if $\overline{\varphi_1}$, then $\Diamond^s \overline{\varphi_1} \notin q$.
\end{itemize}
Note that in such construction the states that are supposed to contain
both $\varphi_1$ and $\Diamond^s \varphi_1$ are replaced by
states that contain either 
\begin{itemize}
\item
both $\varphi_1$ and $\Diamond^s \overline{\varphi_1}$, or 
\item
both $\overline{\varphi_1}$ and $\Diamond^s \varphi$.  
\end{itemize}
The construction of $q_0$, $\mu$ and $F$ is standard.

The collection $\CC_{\varphi}$ of data-constraints consists of the following.
\begin{enumerate}
\item
For each state $q$ that contains both the sub-formulae $\psi$ and $\neg\Diamond^s \psi$,
$\CC_{\varphi}$ contains:
\begin{itemize}
\item
the key-constraints $V(q) \mapsto q$; and
\item
the denial-constraints $V(q) \cap V(p) = \emptyset$,
for all state $p$ that contains $\psi$.
\end{itemize}
The same if $q$ contains both $\overline{\psi}$ and $\neg \Diamond^s \overline{\psi}$
\item
For each state $q$ that contains the sub-formula $\Diamond^s \psi$ but not the sub-formula $\psi$,
$\CC_{\varphi}$ contains
the inclusion-constraints 
$$
V(q) \subseteq \bigvee_{\psi \in p} V(p).
$$
\item
For each state $q$ that contains the sub-formula $\neg \Diamond^s \psi$ but not the sub-formula $\psi$,
$\CC_{\varphi}$ contains the denial-constraints
$$
V(q) \cap V(p) = \emptyset,
$$
for every state $p$ that contains $\psi$.
\end{enumerate}

The construction of $\AA_{\varphi}$ is already in $\sfNEXPTIME$.
By Theorem~\ref{t: emptiness diamond aut},
we get the 2-$\sfNEXPTIME$ upper bound for $\strongdiamondLTL$.

\subsection{The 3-$\sfNEXPTIME$ upper bound for part~(3) of
Theorem~\ref{t: complexity LTL}}
\label{app: ss: upper bound of profile strong diamond LTL}

If we have the local comparison $\ttX_{\sim}$ and $\ttX_{\nsim}$,
it can be handled with the addition of profile in the automata.
As the inclusion of profile constraints
induce an exponential blow-up,
we get 3-$\sfNEXPTIME$ upper bound.
The construction is straightforward, thus, omitted.

\section{The $\sfNEXPTIME$-hardness of $\diamondLTL$} \label{app: s: hardness of weak diamond LTL}
In the proof of the following theorem we will use $\Box^w \varphi$ as an abbreviation of $\neg \Diamond^w \neg \varphi$. 
\begin{theorem}
The satisfiability problem for \weakdiamondLTL on (finite and infinite) data words is $\sfNEXPTIME$-hard.
\end{theorem}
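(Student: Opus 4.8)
The plan is to give a polynomial-time reduction from the well-known $\sfNEXPTIME$-complete tiling problem (see, e.g., \cite{chlebus-domino}): given a finite tile set $\mathcal{T}$, compatibility relations $H,V\subseteq\mathcal{T}\times\mathcal{T}$, an input word $\sigma_0\cdots\sigma_{m-1}$ over $\mathcal{T}$, and $n$ in unary, decide whether there is a tiling $f\colon\{0,\dots,2^n-1\}^2\to\mathcal{T}$ respecting $H$ horizontally, $V$ vertically, and whose bottom row spells $\sigma_0\cdots\sigma_{m-1}$ followed by blanks. From such an instance I would build a $\weakdiamondLTL$ formula $\varphi$ that is satisfiable over finite data words iff $f$ exists, and likewise over infinite data words. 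The design principle is to lay out the grid so that \emph{both} horizontal and vertical grid-adjacencies become local (bounded-distance) in the word, leaving only a cheap consistency check for the data part. Concretely, the word is $J_0J_1\cdots J_{2^n-2}$ followed by a tail of a fresh symbol $\$$ (empty for finite words, $\$^\omega$ for infinite ones; a disjunction in the structural part covers both), where each \emph{joint block} $J_i$ encodes rows $i$ and $i{+}1$ \emph{interleaved}, as a sequence of cell-blocks in the order $(i,0),(i{+}1,0),(i,1),(i{+}1,1),\dots,(i,2^n{-}1),(i{+}1,2^n{-}1)$. A cell-block holds, bit by bit on $2n$ consecutive positions, the binary column index $c$ and row index $r$, plus one tile position carrying the tile and a flag saying whether this is the \emph{top} ($=$ row $i$) or \emph{bottom} ($=$ row $i{+}1$) copy. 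In this layout vertical neighbours are adjacent cell-blocks of the same $J_i$ and horizontal neighbours are two cell-blocks apart, and every cell $(r,c)$ with $1\le r\le 2^n-2$ occurs exactly twice: as the bottom copy of $J_{r-1}$ and as the top copy of $J_r$.

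The data-free part of $\varphi$ is a polynomial-size $\textrm{LTL}$ formula over $\ttX,\ttU,\ttF,\ttG$ asserting: (i) well-formedness of the joint-block sequence — the internal shape of cell-blocks, alternation of the top/bottom flag, the column counter running $0,0,1,1,\dots,2^n{-}1,2^n{-}1$ inside each $J_i$ and resetting at boundaries, the top-row counter incremented by one from $J_i$ to $J_{i+1}$, starting at $0$ and ending at $2^n-2$ — all ordinary local consistency checks on binary counters, expressible with $O(n)$ nested $\ttX$'s; (ii) the initial row — the first $m$ cell-blocks of $J_0$ carry $\sigma_0,\dots,\sigma_{m-1}$ and every row-$0$ cell with column index $\ge m$ carries the blank tile; (iii) the horizontal constraints, applied to every pair of top (resp.\ bottom) cell-blocks two apart inside every $J_i$; and (iv) the vertical constraints, applied to every adjacent top/bottom pair inside every $J_i$ — which already covers \emph{all} vertical adjacencies of the grid.

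What remains is to force the two copies of every cell to carry the same tile, and only here is the weak diamond needed. Writing $\Box^w$ for $\neg\Diamond^w\neg$, I would conjoin to $\varphi$, asserted via $\ttG$ at every tile position $p$: an \emph{address-consistency} clause saying that any two tile positions with the same data value spell out the same $2n$-bit address (for each bit offset $j$, $\ttX^{j}(\mathrm{bit}{=}\beta)$ at $p$ implies $\Box^w(\mathit{tilepos}\to\ttX^{j}(\mathrm{bit}{=}\beta))$, with $\beta$ the value actually seen at $p$); a \emph{tile-consistency} clause $\bigwedge_{t\in\mathcal{T}}\bigl(\mathit{tile}{=}t\to\Box^w(\mathit{tilepos}\to\mathit{tile}{=}t)\bigr)$; and, at every bottom-copy tile position not lying in the last joint block, the clause $\Diamond^w(\mathit{tilepos}\wedge\mathit{topcopy})$. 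The crucial point is that ``this position is a top copy'' is a purely local property, so this last clause needs no reference to $p$'s address: it forces $d_p$ to reappear at \emph{some} top-copy tile position $q$; address consistency then pins $q$ to carry exactly $p$'s address, hence to be the top copy of the very same cell as $p$; and tile consistency forces $q$ and $p$ to carry the same tile. For correctness: given a tiling, take the described word, give both copies of each cell $(r,c)$ one private value $w_{r,c}$ and pairwise distinct fresh values elsewhere — all clauses then hold, for finite (empty tail) and infinite ($\$^\omega$ tail) words alike; conversely, from any data-word model the structural clauses yield a grid shape with a tile at every cell-block, the data clauses make the two copies of every cell agree, so $f(r,c):=$ that common tile is well defined, and (ii)--(iv) say exactly that $f$ is a legal tiling.

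The main obstacle is the one the interleaving is designed to circumvent. With only $\Diamond^w$ — no past operators, and no way to carry a position's $n$-bit address into the scope of a modality short of an exponential case split — one cannot directly link a cell to its vertical neighbour, since those two cells have \emph{different} addresses. Routing everything instead through the two copies of the \emph{same} cell, which share an address so that address consistency can identify the partner, while ``top vs.\ bottom'' stays a local flag, makes a single application of the weak diamond suffice. The residual points — treating the boundary rows $0$ and $2^n-1$, which have a single copy, and phrasing the structural formula uniformly for finite and infinite words — are routine once the main construction is fixed.
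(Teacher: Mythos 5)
Your reduction is correct, and it reaches the $\sfNEXPTIME$ lower bound by a genuinely different route than the paper. The paper also reduces from the $2^n$-corridor tiling problem, but its gadgetry is pointer-based: each grid cell is a short block (tile, column pointer, row pointer, up-pointer), the column/row indices are stored as data-linked bit strings (the formula $\varphi_{bitstring}$ uses $\Box^w$ to force all positions in one data class to carry the same bit), and the vertical constraint is checked by following the up-pointer with $\Box^w$; counter increments between consecutive cells are then expressed through this pointer indirection ($\psi_5$--$\psi_{11}$). You instead re-layout the grid so that both horizontal and vertical adjacencies become word-local (interleaving rows $i$ and $i{+}1$ inside a joint block), at the cost of duplicating every interior cell, and you use data values for exactly one task: gluing the two copies of a cell together via address consistency, tile consistency, and the $\Diamond^w(\mathit{tilepos}\wedge\mathit{topcopy})$ clause --- where, as in the paper's Example~\ref{eg: twice data values}, the locally checkable flag makes the weak diamond non-trivial because the source position cannot serve as its own witness. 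The paper's encoding buys a simpler word layout and no cell duplication but leans heavily on data classes (every bit of every counter is a data-linked list); yours buys minimal, almost ``one constraint per interior cell'' use of data, with all counter arithmetic done in plain LTL at fixed offsets, at the price of a more intricate structural formula and the doubled grid. Your side remark that vertical neighbours ``cannot'' be linked directly is overstated --- the paper does link them directly via up-pointers --- but this does not affect the correctness of your construction; the remaining details you defer (bottom-copy row index equal to top-copy row index plus one inside a joint block, uniqueness of each address among top copies, the boundary rows, and the finite/infinite tail) are indeed routine local counter checks of polynomial size.
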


\begin{proof}
The proof is by reduction from the $2^n$-corridor tiling problem. An instance 
$I = (T,H,V,F,L,n)$ of this problem consists of a finite set $T$ of tile types, horizontal and vertical constraints $H,V \subseteq T \times T$, constraints $F,L \subseteq T$ for the first and last row and a number $n$ given in unary. The task is to decide, whether $T$ tiles the 
$2^n \times 2^n$-corridor, respecting the constraints. This problem is $\sfNEXPTIME$-hard~\cite{chlebus-domino}.

For an arbitrary instance $I = (T,H,V,F,L,n)$ of the $2^n$-corridor tiling problem we will construct a formula $\varphi_I$ of polynomial length (in $|I|$) which is satisfiable if and only if $I$ has a solution.

We use $\Sigma = T \cup \{0,1\} \cup \{c,r,u\}$ as the underlying alphabet.
The idea is to assign to every square on the tiling grid a column and a row number to be able to check the constraints. We use data values as pointers to the binary encoding of a number. We first introduce some abbreviations.
A bit is represented by two successive positions in the data word. The first one is labelled by $0$ or $1$ and the data value of the second position serves as a pointer to the position with the next bit. It is crucial that all positions pointed by the same pointer carry the same bit value. The following formula ensures that this property holds. Since we will encode binary numbers with $n$ bits, the $\ttX$-operator is used $n-1$ times.    
\begin{align*}
\varphi_{bitstring} = & \Diamond^w((0 \vee 1) \wedge \ttX\Diamond^w( (0 \vee 1) \wedge \ttX\Diamond^w (\ldots \wedge \ttX\Diamond^w (0 \vee 1))\ldots )) \wedge \\ 
&(\Box^w 0 \vee \Box^w 1) \wedge \Box^w \ttX ((\Box^w 0 \vee \Box^w 1) \wedge \Box^w \ttX((\Box^w 0 \vee \Box^w 1) \wedge \ldots \wedge \Box^w \ttX(\Box^w 0 \vee \Box^w 1  )\ldots))
\end{align*}

The next formula encodes the number 0 in binary.
\begin{align*}
\varphi_{0} = \Box^w(0 \wedge \ttX\Box^w( 0 \wedge \ttX\Box^w (\ldots \wedge \ttX\Box^w 0)\ldots ))
\end{align*}
The following encodes number $2^n-1$.
\begin{align*}
\varphi_{1} = \Box^w(1 \wedge \ttX\Box^w( 1 \wedge \ttX\Box^w (\ldots \wedge \ttX\Box^w 1)\ldots ))
\end{align*}
The next formula says that the $i$th bit encodes bit value $b$. The expression $(\ttX\Box^w)^{i-1}$ means that $\ttX\Box^w$ is repeated $i-1$ times.
\begin{align*}
\bitat{i}{b} = \Box^w(\ttX\Box^w)^{i-1}b
\end{align*}
for $1\leq i \leq n$ and $b \in \{0,1\}$.
\\\\
It should be noted that the first bit serves as the lowest bit.

The formula $\varphi_I$ is composed of the formulas $\psi$ and $\chi$: the formula $\psi$ describes the encoding of the tiling grid and $\chi$ describes the constraints which has to hold.

Every square of the tiling is represented by a sequence of four positions in the data word. The first position is labeled by the tile type belonging to this square, the second one serves as a pointer to the bit representation of the column number of the square, the third one serves as a pointer to the bit representation of the row number of the square and the fourth one serves as an \textit{up-pointer} to the next upper square on the same column. Such a sequence of positions will be called \textit{square encoding}. 
\begin{align*}
\psi_1 &= G[\bigvee_{t \in T} t \rightarrow \ttX( c \wedge \varphi_{bitstring})] \\
\psi_2 &=  G[\bigvee_{t \in T} t \rightarrow \ttX\ttX( r \wedge \varphi_{bitstring})] \\
\psi_3 &=  G[\bigvee_{t \in T} t \rightarrow \ttX\ttX\ttX( u \wedge \Box^w \bigvee_{t \in T}t)] \\
\end{align*}
The first $4 \cdot 2^{2n}$ positions of the word represent a list of all square encodings of the tiling. The list begins with the square with column number $0$ and row number $0$. After all $2^n$ squares of a row are listed the first square of the next row follows. 

First we have to ensure that the first square encoding has row number 0 and column number 0.
\begin{align*}
\psi_4 = \ttX \varphi_0 \wedge \ttX\ttX \varphi_0
\end{align*}
A square encoding with column number $i < 2^n-1$ and row number $j$ is followed by a square encoding with column number $i+1$ and row number $j$.     
\begin{align*}
\psi_5  = & G[ (\bigvee_{t \in T} t \wedge \ttX \neg \varphi_1) \rightarrow (\bigwedge_{i=1}^n ((\ttX\ttX\bitat{i}{0} \rightarrow \ttX\ttX\ttX\ttX(\bigvee_{t \in T}t \wedge \ttX\ttX\bitat{i}{0}) \\ 
&\wedge (\ttX\ttX\bitat{i}{1} \rightarrow \ttX\ttX\ttX\ttX(\bigvee_{t \in T}t \wedge \ttX\ttX\bitat{i}{1}) ))] \\
\psi_6  = &G[ (\bigvee_{t \in T} t \wedge \ttX \neg \varphi_1) \rightarrow \bigvee_{i=1}^n (\bigwedge_{j=1}^{i-1}(\ttX\bitat{j}{1}\wedge \ttX\ttX\ttX\ttX\ttX\bitat{j}{0}) \wedge \\ 
&\ttX\bitat{i}{0} \wedge \ttX\ttX\ttX\ttX\ttX\bitat{i}{1} \wedge \\ 
&\bigwedge_{j=i+1}^{n}( (\ttX\bitat{j}{0}\rightarrow \ttX\ttX\ttX\ttX\ttX\bitat{j}{0}) \wedge (\ttX\bitat{j}{1}\rightarrow \ttX\ttX\ttX\ttX\ttX\bitat{j}{1}     )  ) )]
\end{align*}
A square encoding with column number $2^n-1$ and row number $j < 2^n-1$ is followed by a square encoding with column number $0$ and row number $j+1$.     
\begin{align*}
\psi_7 = &G[ (\bigvee_{t \in T} t \wedge \ttX \varphi_1 \wedge \ttX\ttX \neg \varphi_1) \rightarrow (\ttX\ttX\ttX\ttX(\bigvee_{t \in T}t \wedge \ttX\varphi_0))] \\
\psi_8 = &G[ (\bigvee_{t \in T} t \wedge \ttX \varphi_1 \wedge \ttX\ttX \neg \varphi_1) \rightarrow \bigvee_{i=1}^n (\bigwedge_{j=1}^{i-1}(\ttX\ttX\bitat{j}{1}\wedge \ttX\ttX\ttX\ttX\ttX\ttX\bitat{j}{0}) \wedge \\ 
&\ttX\ttX\bitat{i}{0} \wedge \ttX\ttX\ttX\ttX\ttX\ttX\bitat{i}{1} \wedge \\ 
&\bigwedge_{j=i+1}^{n}( (\ttX\ttX\bitat{j}{0}\rightarrow \ttX\ttX\ttX\ttX\ttX\ttX\bitat{j}{0}) \wedge (\ttX\ttX\bitat{j}{1}\rightarrow \ttX\ttX\ttX\ttX\ttX\ttX\bitat{j}{1}     )  ) )]
\end{align*}
After the square encoding with column number $2^n-1$ and row number $2^n-1$ there follow no more positions labelled with a tile type. By this we ensure that every square encoding occurs exactly once.     
\begin{align*}
\psi_9 = G[ (\bigvee_{t \in T} t \wedge \ttX \varphi_1 \wedge \ttX\ttX \varphi_1) \rightarrow \ttX G\bigwedge_{t \in T} \neg t ] 
\end{align*}
The up-pointer of every square encoding with column number $i$ and row number $j < 2^n-1$ points to the first position of the unique square encoding with column number $i$ and row number $j+1$.
\begin{align*}
\psi_{10} = &G[ (\bigvee_{t \in T} t \wedge \ttX\ttX\neg \varphi_1) \rightarrow (\bigwedge_{i=1}^n ((X\bitat{i}{0} \rightarrow \ttX\ttX\ttX\Box^w (\bigvee_{t \in T}t \wedge \ttX\bitat{i}{0}) \wedge \\ 
& (\ttX\bitat{i}{1} \rightarrow \ttX\ttX\ttX\Box^w (\bigvee_{t \in T}t \wedge \ttX\bitat{i}{1}) ))] \\
\psi_{11} = &G[ (\bigvee_{t \in T} t \wedge \ttX\ttX \neg \varphi_1) \rightarrow \bigvee_{i=1}^n (\bigwedge_{j=1}^{i-1}(\ttX\ttX\bitat{j}{1}\wedge \ttX\ttX\ttX\Box^w \ttX\ttX\bitat{j}{0}) \wedge \\ 
&\ttX\ttX\bitat{i}{0} \wedge \ttX\ttX\ttX\Box^w \ttX\ttX\bitat{i}{1} \wedge \\ 
&\bigwedge_{j=i+1}^{n}( (\ttX\ttX\bitat{j}{0}\rightarrow \ttX\ttX\ttX\Box^w \ttX\ttX\bitat{j}{0}) \wedge (\ttX\ttX\bitat{j}{1}\rightarrow \ttX\ttX\ttX\Box^w \ttX\ttX\bitat{j}{1}     )  ) )]
\end{align*}

The following formulas express that the constraints in $I$ are respected.
  
The squares of the first row carry only tile types from $F$.
\begin{align*}
\chi_1 = G[ (\bigvee_{t \in T} t \wedge \ttX\ttX \varphi_0) \rightarrow \bigvee_{t \in F}  t] 
\end{align*} 
Similarly, the squares of the last row carry only tile types from $L$. 
\begin{align*}
\chi_2 = G[ (\bigvee_{t \in T} t \wedge \ttX\ttX \varphi_1) \rightarrow \bigvee_{t \in L}  t] 
\end{align*} 

The tile type of a square and the tile type of his right neighbor respect the horizontal constraints.     
\begin{align*}
\chi_3 = G[ \bigwedge_{t \in T} ((t \wedge \ttX\neg \varphi_1) \rightarrow \ttX\ttX\ttX\ttX \bigvee_{(t,t')\in H} t') ]
\end{align*} 
The tile type of a square and the tile type of his upper neighbor respect the vertical constraints.     
\begin{align*}
\chi_4 = G[ \bigwedge_{t \in T} ((t \wedge \ttX\ttX\neg \varphi_1) \rightarrow \ttX\ttX\ttX\Box^w \bigvee_{(t,t')\in V} t') ]
\end{align*} 

The desired formula is $\varphi_I = \bigwedge_{i=1}^{11} \psi_i \wedge \bigwedge_{j=1}^{4} \chi_j $. It's easy to see that $\varphi_I$ is satisfiable if and only if $I$ has a solution.
\end{proof}

\end{document}